\newcommand{\mc}[1]{\mathcal{ #1 }}
\newcommand{\mb}[1]{\mathbb{ #1 }}
\newcommand{\mf}[1]{\mathfrak{ #1 }}
\newtheorem{thm}{Theorem}[section]
\newtheorem{cor}[thm]{Corollary}
\newtheorem{lem}[thm]{Lemma}
\newtheorem{ex}[thm]{Example}
\newtheorem{con}[thm]{Conjecture}
\theoremstyle{definition}
\theoremstyle{remark}
\let\oldenumerate\enumerate
\renewcommand{\enumerate}{
  \oldenumerate
  \setlength{\itemsep}{2pt}
  \setlength{\parskip}{2pt}
}
\renewcommand{\d}{\text{d}}
\newcommand{\p}{\partial}
\newcommand{\Lie}{\text{Lie}}
\newcommand{\Conv}{\text{Conv}}
\newcommand{\Cone}{\text{Cone}}
\newcommand{\Ver}{\text{Vert}}
\newcommand{\Dir}{\text{Dir}}
\newcommand{\Tan}{\text{Tan}}
\newcommand{\Vol}{\text{Vol}}
\newcommand{\Aff}{\text{Aff}}
\newcommand{\Spec}{\text{Spec}}
\newcommand{\Int}{\text{Int}}
\newcommand{\Val}{\text{Val}}
\newcommand{\Span}{\text{Span}}
\renewcommand{\Box}{\text{Box}}
\newcommand{\avg}[1]{\left< #1 \right>} 	
\begin{document}
\title{Biadjoint scalar tree amplitudes\\and intersecting dual associahedra}
\author{Hadleigh Frost}
\affiliation{Mathematical Institute, University of Oxford, Woodstock Road, Oxford OX2 6GG, UK.}
\emailAdd{frost@maths.ox.ac.uk}

\abstract{We present a new formula for the biadjoint scalar tree amplitudes $m(\alpha|\beta)$ based on the combinatorics of dual associahedra. Our construction makes essential use of the cones in `kinematic space' introduced by Arkani-Hamed, Bai, He, and Yan. We then consider dual associahedra in `dual kinematic space.' If appropriately embedded, the intersections of these dual associahedra encode the amplitudes $m(\alpha|\beta)$. In fact, we encode all the partial amplitudes at $n$-points using a single object, a `fan,' in dual kinematic space. Equivalently, as a pleasant corollary of our construction, all $n$-point partial amplitudes can be understood as coming from integrals over subvarieties in a single toric variety. Explicit formulas for the amplitudes then follow by evaluating these integrals using the equivariant localisation (or `Duistermaat-Heckman') formula. Finally, by introducing a lattice in kinematic space, we observe that our fan is also related to the inverse KLT kernel, sometimes denoted $m_{\alpha'}(\alpha|\beta)$.}
\maketitle
\section{Introduction}
\label{introduction}
Arkani-Hamed, Bai, He, and Yan (AHBHY) presented new formulas for the biadjoint scalar tree amplitudes in ref. \cite{ahbhy17}. Their formulas are based on constructions in `kinematic space,' $\mc{K}_n$, which is the vector space of all Mandelstam variables $s_{ij} = 2k_i\cdot k_j$ subject to the momentum conservation relations. For example, at four points, the kinematic space $\mc{K}_4$ is the plane defined by
$$
s_{12}+s_{13}+s_{23} = 0
$$
in $\mb{R}^3$. AHBHY construct embeddings of the associahedron into kinematic space.\footnote{This is carried out in section 3.2 of ref. \cite{ahbhy17}.} These are polytopes in $\mc{K}_n$ of dimension $n-3$. For instance, at four points, the ordering $\alpha = 1234$ is associated to the line segment $\mc{A}(1234)$ between the two points
$$
(s_{12},s_{13},s_{23}) = (c,-c,0)\qquad\text{and}\qquad (s_{12},s_{13},s_{23}) = (0,-c,c),
$$
for some constant $c$. This line segment is an embedding of the associahedron on three letters into kinematic space, regarded as a hyperplane in $\mb{R}^3$. See figure \ref{ex1a} for an illustration of this. There are two other embeddings (also shown in figure \ref{ex1a}). $\mc{A}(3124)$ is the line segment between
$$
(s_{12},s_{13},s_{23}) = (c,0,-c)\qquad\text{and}\qquad (s_{12},s_{13},s_{23}) = (0,c,-c),
$$
and $\mc{A}(2314)$ is the segment between
$$
(s_{12},s_{13},s_{23}) = (-c,0,c)\qquad\text{and}\qquad (s_{12},s_{13},s_{23}) = (-c,c,0).
$$
Notice that the three line segments are not intersecting. In general, AHBHY's construction embeds $(n-1)!/2$ non-intersecting associahedra in $\mc{K}_n$. These associahedra can be realised as the intersection of hyperplanes with $(n-1)!/2$ cones that we call $\mc{C}(\alpha)$. For instance, at four points, there are three non-intersecting cones which we show in figure \ref{ex4a}. An important observation for the present paper is that the `dual cones' $\mc{C}(\alpha)^*$ in dual kinematic space do intersect. The purpose of this paper is to explain how the intersections of the dual cones are related to the biadjoint scalar tree amplitudes. For example, the cones $\mc{C}(1234)$ and $\mc{C}(2314)$ do not intersect in $\mc{K}_4$. But their dual cones, $\mc{C}(1234)^*$ and $\mc{C}(2314)^*$, intersect in the span of a vector $W_{23}$. See figure \ref{ex1b}. The vector $W_{23}$ is `dual' to the Mandelstam variable $-s_{23}$ in the sense that $W_{23}\cdot Z = -s_{23}$ for $Z$ a vector in $\mc{K}_4$. But $-s_{23}$ is the propagator for the partial amplitude
$$
m(1234|2314) = \frac{1}{-s_{23}}.
$$
\begin{figure}
\begin{center}
\includegraphics[scale=0.40]{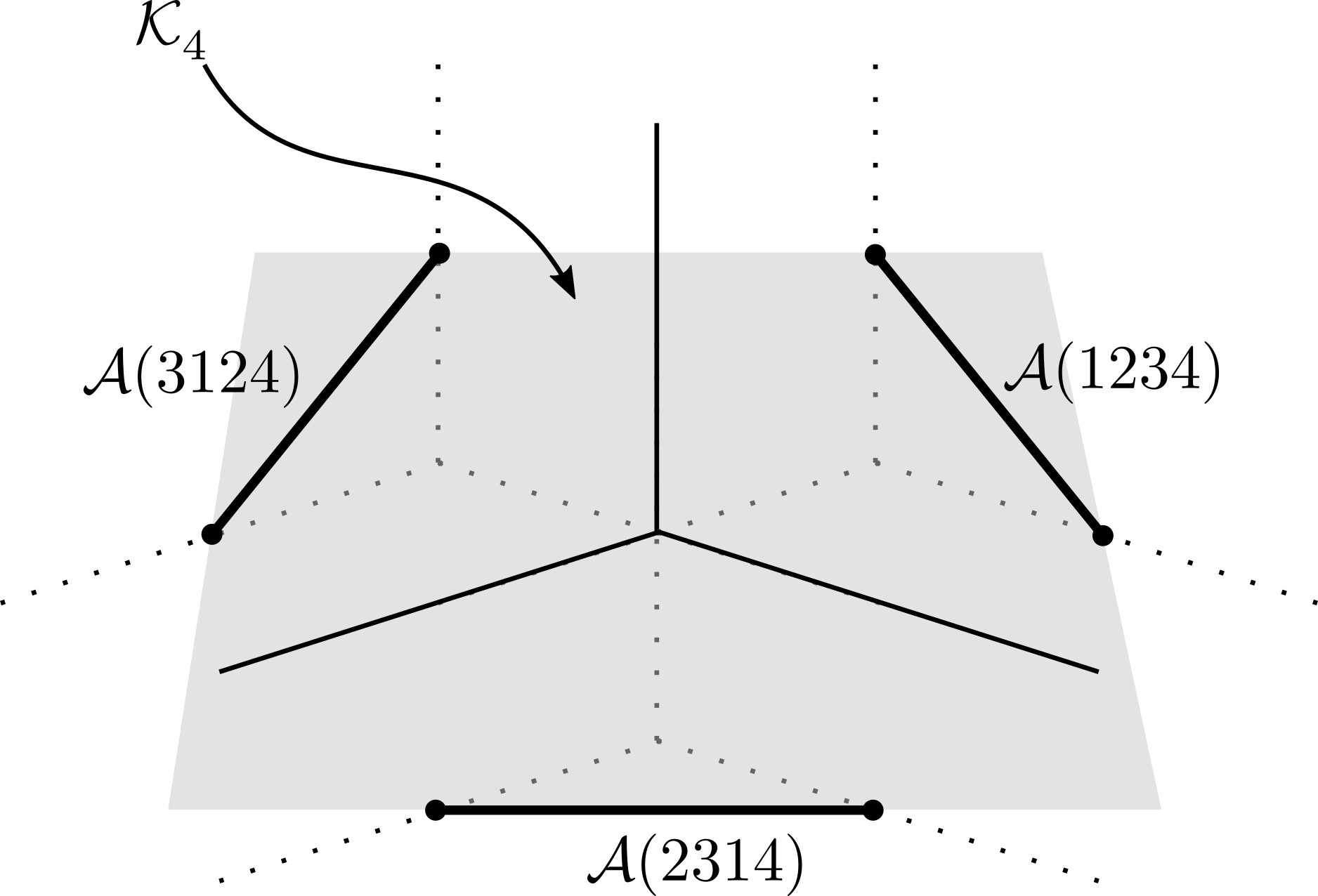}
\end{center}
\caption{The three associahedra $\mc{A}(\alpha)$ defined by AHBHY's construction at four points. The associahedra lie in a plane, kinematic space, inside $\mb{R}^3$.}
\label{ex1a}
\end{figure}
So this partial amplitude can also be written as
$$
m(1234|2314) = \frac{1}{W_{23}\cdot Z}.
$$
This suggests that the amplitude $m(1234|2314)$ is related to the intersection of the two dual cones $\mc{C}(1234)^*$ and $\mc{C}(2341)^*$. In this paper, we show how this observation can be made precise and generalised to all the partial tree amplitudes. Our main result is a generalisation of a formula given by AHBHY, which we now explain. Let $\alpha$ and $\beta$ be orderings of $\{1,..,n\}$. Then, following AHBHY, consider an associahedron $\mc{A}(\alpha)$ in kinematic space contained in some $n-3$ dimensional plane $H(\alpha) \subset \mc{K}_n$. The `dual polytope' $\mc{A}(\alpha)^*$ is a polytope which lives in the dual vector space $H(\alpha)^*$. Duality on polytopes maps dimension-$k$ faces in $\mc{A}(\alpha)$ to codimension-$k$ faces in $\mc{A}(\alpha)^*$. AHBHY prove the following formula,
\begin{equation}
\label{theirresult}
m(\alpha|\alpha) = \Vol (\mc{A}(\alpha)^*),
\end{equation}
where $\Vol$ denotes the `volume' of the polytope.\footnote{This formula is proposed in section 5.1 and explicitly proved in section 5.2 of ref. \cite{ahbhy17}, though it also follows from a general theorem concerning `positive geometries.'} The `volume' $\Vol$ is not a number, but rather a function on $H(\alpha)$. Or, if you prefer, the volume of the dual polytope depends on a choice of position (i.e. a choice of Mandelstam variables) $Z \in H(\alpha)$. The `volume' $\Vol$ is similar to an ordinary volume in the sense that it satisfies
$$
\Vol (A \cup B) = \Vol (A) + \Vol(B) - \Vol(A\cap B),
$$
for any two polytopes $A$ and $B$. We call anything which satisfies this relation a `valuation' on polytopes. Our main result can be stated as (see corollary \ref{valley}) 
\begin{equation}
\label{ourresult}
m(\alpha|\beta) = \Val (\p\mc{A}(\alpha)^*\cap \p\mc{A}(\beta)^*).
\end{equation}
In this formula, $\p\mc{A}(\alpha)^*$ is the boundary of $\mc{A}(\alpha)^*$ and `$\Val$' is a natural valuation on polytopes which we will introduce later. The novel aspect of this formula is that we must first define embeddings of the dual associahedra $\mc{A}(\alpha)^*$ into $\mc{K}_n^*$ (just as AHBHY define embeddings of $\mc{A}(\alpha)$ into $\mc{K}_n$). We define a `canonical embedding' in section \ref{embed}. Given this canonical embedding, it turns out that
$$
\Val (\p \mc{A}(\alpha)^*) = \Vol (\mc{A}(\alpha)^*).
$$
where, on the left-hand-side, we regard $\p\mc{A}(\alpha)^*$ as embedded in $\mc{K}_n^*$ and, on the right-hand-side, $\mc{A}(\alpha)^*$ is embedded only in $H(\alpha)^*$ (or some projective compactification of it, as done in AHBHY). For this reason, equation \eqref{ourresult} is a mild generalisation of AHBHY's result, equation \eqref{theirresult}. Before saying more about our results, we will recall some of the reasons to be interested in the biadjoint scalar amplitudes and their various presentations.
\begin{figure}
\begin{center}
\includegraphics[scale=0.40]{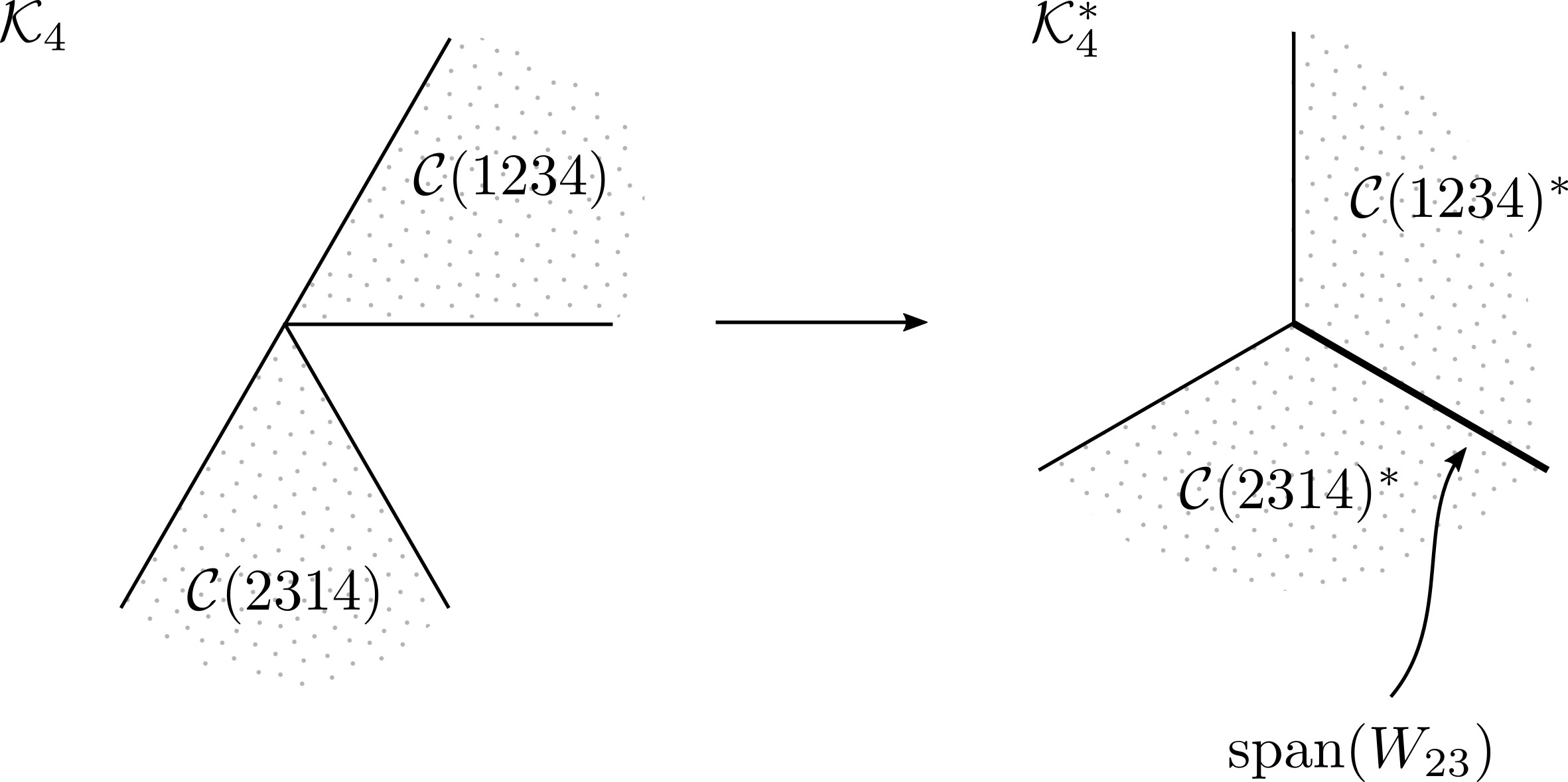}
\end{center}
\caption{The cones $\mc{C}(1234)$ and $\mc{C}(2314)$ do not intersect in kinematic space, $\mc{K}_4$. However, the dual cones $\mc{C}(1234)^*$ and $\mc{C}(2314)^*$ do intersect, and their intersection can be regarded as encoding the amplitudes $m(1234|2314)$.}
\label{ex1b}
\end{figure}

\paragraph{}
The biadjoint scalar tree amplitudes $m(\alpha|\beta)$ can be computed in a Feynman diagram expansion by summing over all tri-valent graphs which are planar with respect to both $\alpha$ and $\beta$. However, there are many other formulas for these amplitudes. A particularly novel formula for $m(\alpha|\beta)$ was given by Cachazo-He-Yuan, who expressed the amplitude as a residue pairing. \cite{chy14,dg14} In \cite{mizera1711}, Mizera pointed out that the CHY residue pairing is essentially the same thing as the intersection pairing of two cocycles in a certain cohomology theory.\footnote{The appropriate cohomology theory is cohomology with values in a local system. The underlying space is the open string moduli space $\mc{M}_{0,n}(\mb{R})$ and the local system is defined by the monodromies of the Koba-Nielsen factor. The cocycles in the pairing are represented in (twisted) de Rham theory by the Parke-Taylor factors associated to the two orderings of $\{1,...,n\}$. These are top-forms on the open string moduli space $\mc{M}_{0,n}(\mb{R})$.} An attractive formula for these intersection pairings was computed by Matsumoto in \cite{matsumoto98}. The work by AHBHY is more combinatorial. In \cite{ahbhy17}, the amplitudes $m(\alpha|\beta)$ are presented in terms of associahedra in kinematic space.\footnote{This realisation of the associahedron bears some resemblance to methods developed in the combinatorial literature, see \cite{csz15} for a review. The exact prescription given by AHBHY does, however, appear to be new.} One way to infer the amplitude from the associahedra is by studying `canonical forms' that have logarithmic singularities on the boundaries of the associahedra.\footnote{This fits into a general paradigm. The associahedra are defined by finitely many linear inequalities and equalities. For this reason, they are `linear semi-algebraic sets' or `positive geometries.' A volume form with logarithmic singularities on the boundary of a positive geometry is called a `canonical form' and a great deal can be said in general about these forms. See ref. \cite{ahbl17} for a recent elaboration of these ideas.} The biadjoint scalar tree amplitudes have aroused all of this interest because of their importance in understanding the relationship between tree-level gravity and Yang-Mills amplitudes. The partial amplitudes $m(\alpha|\beta)$ play an essential role in the `double copy' relation (equation \eqref{ftklt}) between gravity tree amplitudes and Yang-Mills partial tree amplitudes. \cite{bcj08} Indeed, Yang-Mills and gravity amplitudes can essentially be expanded as a sum over $m(\alpha|\beta)$ with certain numerators. Another source of interest is the ubiquitous presence of the biadjoint tree amplitudes in string theory. For instance, closed string tree amplitudes can be expanded in a basis of `$I$-integrals' $I(\alpha|\beta)$ whose leading term in the infinite tension limit ($\alpha' \rightarrow 0$) is $m(\alpha|\beta)$.\footnote{See equation (3.24) of \cite{st14} for the integral and equations (4.2), (4.3) and (4.4) for the expansion of the closed string amplitude.} \cite{st14} Likewise, open string tree amplitudes can be expanded in a basis of `$Z$-integrals' $Z_{\alpha}(\beta)$ whose leading term in the infinite tension limit is $m(\alpha|\beta)$.\footnote{See equation (2.2) of \cite{ms17} for the integral and (2.17) for the expansion of the open string amplitude.} \cite{ms17} The open and closed string tree amplitudes are related by the Kawai-Lewellen-Tai relations. \cite{klt} Following Mizera \cite{mizera1706}, we denote the inverse KLT kernel by $m_{\alpha'}(\alpha|\beta)$ since, in the infinite tension limit, the leading term in $m_{\alpha'}(\alpha|\beta)$ is given (up to a possible factor) by $m(\alpha|\beta)$. This means that we might regard the inverse KLT kernel $m_{\alpha'}(\alpha|\beta)$ as a the amplitude $m(\alpha|\beta)$ with $\alpha'$ corrections. Perhaps surprisingly, $m_{\alpha'}(\alpha|\beta)$ has a tidy description in terms of the open string moduli space $\mc{M}_{0,n}(\mb{R})$ and the monodromy factors picked up by the Koba-Nielsen factor when points collide. The compactified moduli space $\overline{\mc{M}_{0,n}}(\mb{R})$ is tesselated by $(n-1)!/2$ copies of the associahedron on $n-1$ letters. (See \cite{devadoss}, theorem 3.1.3, for instance.) Mizera proved that the functions $m_{\alpha'}(\alpha|\beta)$ can be obtained as the intersections of these associahedra regarded as twisted cycles in $\mc{M}_{0,n}(\mb{R})$. \cite{mizera1708} Some of the resulting formulas for $m_{\alpha'}(\alpha|\beta)$ were encountered earlier in the mathematical literature on twisted de Rham theory (see \cite{ky94}, in particular). Mizera's result gives a new interpretation of the KLT relation as a purely cohomological statement, namely it is the analogue in twisted de Rham theory (or cohomology with values in a local system) of Riemann's period relations. This interpretation is presented and discussed in \cite{mizera1708}.

\paragraph{}
Our main result is a new formula for the amplitudes $m(\alpha|\beta)$. All $n$-point partial amplitudes are described by a single object, called a `fan,' in dual kinematic space $\mc{K}_n^*$. Individual partial amplitudes $m(\alpha|\beta)$ are given by the `volumes' of the intersections of dual associahedra in this fan. This result is stated precisely in section \ref{statement} as theorem \ref{theorem} and corollary \ref{valley}. Once the necessary definitions are given, the proof of this result is essentially tautological. After all, associahedra and dual associahedra are just a fancy way of encoding the Feynman diagram sum of trivalent graphs. Given this, why do we need another formula for $m(\alpha|\beta)$? Especially since so many formulas already exist? We think our formula has two intrinsic merits. First, it is a compact and symmetric way to present the relevant combinatorics. Second, it leads us to study a certain object (the fan) which encodes all of the partial amplitudes at once. In fact, as we point out at the end of section \ref{comments}, our main result, restated in the language of toric geometry, is that all the $n$-point partial amplitudes $m(\alpha|\beta)$ are described by a single toric variety. Particular partial amplitudes $m(\alpha|\beta)$ correspond to toric subvarieties (or, really, cycles in the homology). This appears to us as a novel description of the amplitudes. We can only hope that these new geometric presentations of the amplitudes will suggest new approaches to the Yang-Mills and gravity amplitudes and to the double copy relations between them. We offer speculations to this end during the discussion in section \ref{comments}. We also relegate several technical (but quite interesting) loose ends to this discussion. Before proving our results in section \ref{statement}, our key `canonical embedding' construction is described in \ref{embed}. We explain all the essential definitions in section \ref{review}, where we also review the relevant results that we need from AHBHY (ref. \cite{ahbhy17}). Our second result is an observation concerning the inverse KLT kernel. The diagonal components of the inverse KLT kernel|that is, $m_{\alpha'}(\alpha|\alpha)$|are also related to the fan in dual kinematic space which encodes the partial amplitudes $m(\alpha|\beta)$. To obtain the components $m_{\alpha'}(\alpha|\alpha)$ we have to introduce a lattice on kinematic space. This lattice amounts to imposing that the Mandelstam variables $s_{ij}$ are integer multiples of $1/\alpha'$. (It seems natural to introduce a lattice here since, as we have just remarked, toric geometry is implicit throughout this paper and the construction of toric varieties is predicated on lattices.) Having introduced the lattice, we observe in section \ref{stringyformula} that the components $m_{\alpha'}(\alpha|\alpha)$ can be given as a certain sum over lattice points. This curious, and potentially superficial observation, sparks a number of further speculations during the discussion in section \ref{comments}. The main text of the paper is largely self contained. However, for the reader's convenience, several definitions and theorems mentioned briefly in the text are reviewed at greater length in appendix \ref{old}, where we also give references to the many expositions that already exist.

\paragraph{Acknowledgements.}
It is a pleasure to thank Lionel Mason and Eduardo Casali for our many engaging discussions and for their helpful comments on the draft. I acknowledge financial support from the Rhodes trust, Merton college, John Moussouris, and Universities New Zealand. Finally, I am indebted to the Jones family for their hospitality.

\section{Review and definitions}
\label{review}
\begin{figure}
\begin{center}
\includegraphics[scale=0.40]{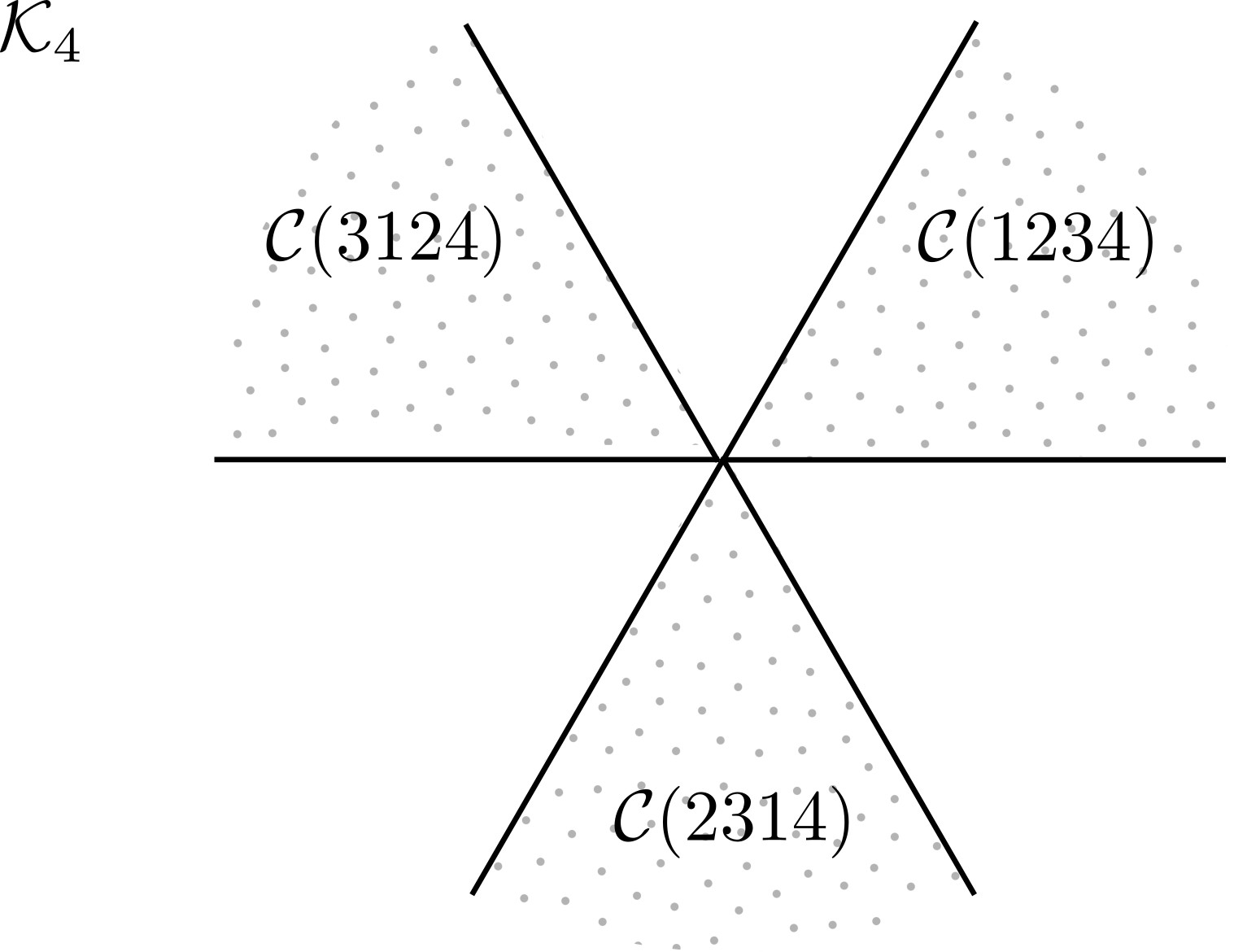}
\end{center}
\caption{The three cones $\mc{C}(\alpha)$ in the kinematic space $\mc{K}_4$ for four points.}
\label{ex4a}
\end{figure}
In this section we describe AHBHY's construction of associahedra in `kinematic space.' At $n$ points, their construction gives $(n-1)!/2$ distinct associahedra. This is reminiscent of the open string moduli space, $\mc{M}_{0,n}(\mb{R})$, which (after blow-ups) is tiled by $(n-1)!/2$ associahedra. Unlike the tiles of $\mc{M}_{0,n}(\mb{R})$, the associahedra that AHBHY construct in $\mc{K}_n$ do not intersect. However, we will see in section \ref{embed} that it is natural to regard the dual associahedra as intersecting in dual kinematic space, $\mc{K}_n^*$.

\subsection{Cones in kinematic space}
\label{reviewcones}
To begin, let us define `kinematic space' at $n$ points, $\mc{K}_n$, to be the space of all Mandelstam variables subject to momentum conservation.\footnote{See also section 2 of \cite{ahbhy17}.} We can present $\mc{K}_n$ explicitly as follows. Let $\{s_{ij}\}$, where $1\leq i,j \leq n$, be the $n(n-1)/2$ Mandelstam variables. We impose the $n$ momentum conservation relations
$$
\sum_{j=1}^ns_{ij} = 0.
$$
We can use these relations to remove one index, say $n$, from our formulas. That is, we consider only those $s_{ij}$ with $1\leq i,j \leq n-1$. There are $(n-1)(n-2)/2$ such variables and we regard them as coordinates on a vector space
$$
\mc{V}_n = \mb{R}^{\frac{n(n-3)}{2}+1}.
$$
The remaining momentum conservation relation on these variables is
$$
\sum_{i,j=1}^{n-1} s_{ij} = 0,
$$
which presents the kinematic space $\mc{K}_n$ as a hyperplane in $\mc{V}_n$. For example, at four points, the vector space $\mc{V}_4$ is $\mb{R}^3$ with coordinates $(s_{12},s_{23},s_{13})$ and the kinematic space $\mc{K}_4$ is the hyperplane
$$
s_{12}+s_{23}+s_{13} = 0.
$$
Now let $\alpha$ be an ordering of $\{1,...,n\}$ such that $\alpha(n) = n$. Associated to each such ordering, AHBHY associate a cone in $\mc{K}_n$.\footnote{This is done in section 3.2 of \cite{ahbhy17}.} The cone associated to $\alpha$ is cut out by the inequalities
\begin{equation}
\label{inequalities}
X_{S} = - \frac{1}{2}\sum_{i,j\in S} s_{ij} \geq 0
\end{equation}
for all subsets $S\subset \{1,...,n-1\}$ which are `consecutive' with respect to $\alpha$.\footnote{By `consecutive' I mean that $S$ can be written as $\{\alpha(a),\alpha(a+1),...,\alpha(a+k)\}$ for some positive integers $a$ and $k$. For instance $\{1,2,3\}$ is consecutive with respect to $\alpha = (1324)$, but $\{1,2\}$ is not.} There are
$$
{n\choose 2} - (n-1)= \frac{n(n-3)}{2} + 1
$$
such subsets $S$, corresponding to the diagonals of the $n$-gon. However, $X_{123...n-1}$ is identically zero on the hyperplane $\mc{K}_n$. So only
$$
\frac{n(n-3)}{2}
$$
inequalities are implied by equation \eqref{inequalities}. We write
$$
\mc{C}(\alpha) = \left\{(s_{ij})\in\mc{K}_n\,|\,\text{satisfying the inequalities, equation}~\eqref{inequalities} \right\}
$$
for this cone. The cone $\mc{C}(\alpha)$ is `polyhedral' in the sense that it has finitely many flat sides.\footnote{This is in contrast to a cone generated by a sphere, or any arbitrary convex set. See appendix \ref{oldcones} for a review of definitions and results concerning polyhedral cones.} Moreover, the inequalities that define $\mc{C}(\alpha)$ are linearly independent, which means that $\mc{C}(\alpha)$ contains an interior region of full dimension, $\dim \mc{K}_n = n(n-3)/2$. There are $(n-1)!$ permutations $\alpha$ such that $\alpha(n) = n$. However, the inequalities, equation \eqref{inequalities}, do not define $(n-1)!$ distinct cones $\mc{C}(\alpha)$. Given a permutation $\alpha$ such that $\alpha(n) = n$, define the reversed permutation $\bar\alpha$ as
$$
(\bar\alpha(1),\bar\alpha(2),...,\bar\alpha(n-1),\bar\alpha(n)) = (\alpha(n-1),\alpha(n-2),...,\alpha(1),\alpha(n)).
$$
The inequalities, equation \eqref{inequalities}, are identical for $\alpha$ and its reverse $\bar\alpha$. That is, $\mc{C}(\alpha) = \mc{C}(\bar\alpha)$. For this reason, we obtain only $(n-1)!/2$ distinct cones. In the four point example, we obtain $3!/2 =3$ cones. These are
\begin{align*}
\mc{C}(1234) = \mc{C}(3214) = \{(s_{ij})\,|\,-s_{12},-s_{23}\geq 0\},\\
\mc{C}(3124) = \mc{C}(2134) = \{(s_{ij})\,|\,-s_{12},-s_{13}\geq 0\},\\
\mc{C}(2314) = \mc{C}(1324) = \{(s_{ij})\,|\,-s_{13},-s_{23}\geq 0\}.
\end{align*}
We show these cones in figure \ref{ex4a}. To do this, we employ a non-orthogonal basis
\begin{equation}
\label{hexa}
e_{12} = \begin{bmatrix} 0\\1\end{bmatrix},\qquad\text{and}\qquad e_{23} = \begin{bmatrix} \sqrt{3}/2\\-1/2\end{bmatrix}
\end{equation}
for the plane, so that the point $(-s_{12},-s_{23})$ in $\mc{K}_4$ is represented in the plane by $-s_{12}e_{12} - s_{23}e_{23}$. We choose this basis to make the symmetries of the construction more apparent in our drawings. Before we introduce associahedra in the following subsection, let us briefly comment on the physical significance of what we have done so far. Recall that the Mandelstam variables are $s_{ij} = 2k_i\cdot k_j$, where $k_i$ and $k_j$ are massless momentum vectors. The sign of $s_{ij}$ then encodes whether $k_i$ and $k_j$ are time directed in the same way (i.e. both forward or both past directed) or in opposite ways. To see this, write $k_i = \omega_i(\epsilon_i,n_i)$ for a sign $\epsilon_i = \pm1$, a positive energy $\omega_i$, and a unit vector $n_i$. In Lorentzian signature we have
$$
k_i\cdot k_j = \omega_i\omega_j (\epsilon_i\epsilon_j - \cos \theta),
$$
where $\theta$ is the angle between $n_i$ and $n_j$. So $s_{ij} \geq 0$ iff $\epsilon_i\epsilon_j = 1$ and $s_{ij}\leq 0$ iff $\epsilon_i\epsilon_j = -1$. For example, at four points, both $s_{12}$ and $s_{23}$ are negative inside the cone $\mc{C}(1234)$. Momentum conservation implies that $s_{13}$ is positive inside the cone. This means that $k_1$ and $k_3$ are both future-pointing (or both past-pointing) while $k_2$ is past-pointing (resp. future-pointing). So we see that some of the inequalities in equation \eqref{inequalities} have a physical interpretation in terms of positive and negative energies.

\subsection{Associahedra in kinematic space}
\begin{figure}
\begin{center}
\includegraphics[scale=0.40]{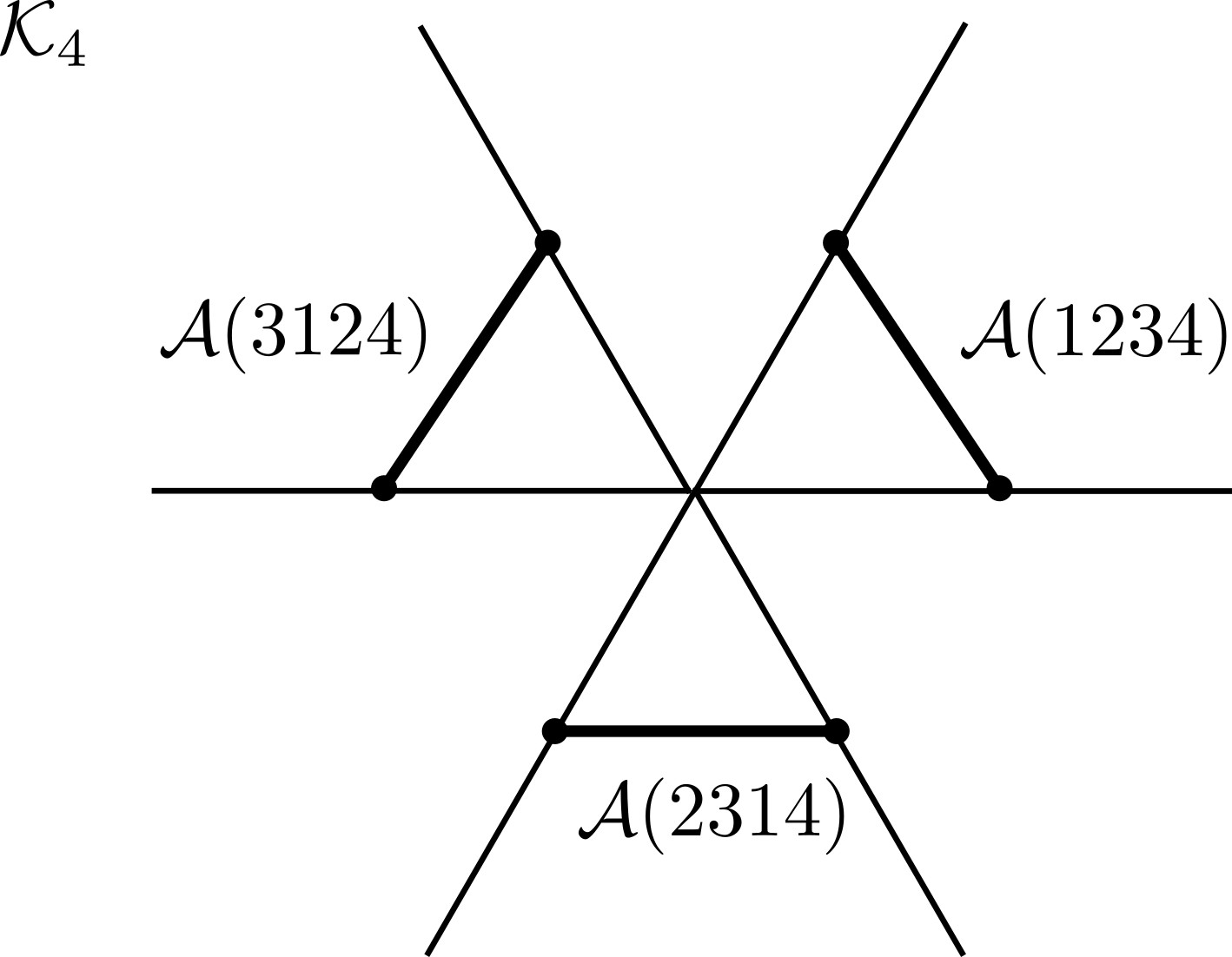}
\end{center}
\caption{The three realisations $\mc{A}(\alpha)$ of the associahedron in the kinematic space $\mc{K}_4$ for four points.}
\label{ex4b}
\end{figure}
AHBHY construct $(n-1)!/2$ associahedra in kinematic space $\mc{K}_n$ by intersecting each of the cones $\mc{C}(\alpha)$ with an appropriate hyperplane.\footnote{See again section 3.2 of \cite{ahbhy17}.} For each ordering $\alpha$, with $\alpha(n)=n$, define the hyperplane
$$
H(\alpha) = \left\{s_{\alpha(i),\alpha(j)} = \text{constant}\,|\,\text{for all non-adjacent pairs $i,j$ less than $n$}\right\}.
$$
There are $(n-2)(n-3)/2$ pairs of non-adjacent $(i,j)$ for $i$ and $j$ less than $n$. So the dimension of $H(\alpha)$ is
$$
\frac{n(n-3)}{2} - \frac{(n-2)(n-3)}{2} = n-3.
$$
The intersection
$$
\mc{A}(\alpha) = \mc{C}(\alpha)\cap H(\alpha) \subset \mc{K}_n
$$
defines a polytope. For generic choices of the constants that define $H(\alpha)$, $\mc{A}(\alpha)$ is a realisation of the associahedron. The polytope $\mc{A}(\alpha)$ lies in the $n-3$ dimensional plane $H(\alpha)$. Nevertheless, the facets of $\mc{A}(\alpha)$ (these are the top dimension faces of $\mc{A}(\alpha)$) generate the cone $\mc{C}(\alpha)$, which bounds an interior region of dimension $n(n-3)/2$. Since all faces of $\mc{A}(\alpha)$ can be obtained as the intersections of facets, it follows that all faces of $\mc{A}(\alpha)$ are contained in the boundary of the cone $\mc{C}(\alpha)$. For example, at four points, consider the cone
$$
\mc{C}(1234) = \{(s_{ij})\,|\,-s_{12},-s_{23}\geq 0\}.
$$
We choose the hyperplane
$$
H(1234) = \{(s_{ij})\,|\,s_{13} = \sqrt{3}/2\} .
$$
Then, using the basis introduced in equation \eqref{hexa}, the associahedron is given by the convex hull
$$
\mc{A}(1234) = \Conv \left(\begin{bmatrix}1 \\ 0\end{bmatrix}, \begin{bmatrix} 1/2\\ \sqrt{3}/2\end{bmatrix}\right).
$$
This is shown in figure \ref{ex4b}, together with the associahedra $\mc{A}(3124)$ and $\mc{A}(2314)$.

\subsection{Dual cones and associahedra}
\label{reviewdual}
\begin{figure}
\begin{center}
\includegraphics[scale=0.40]{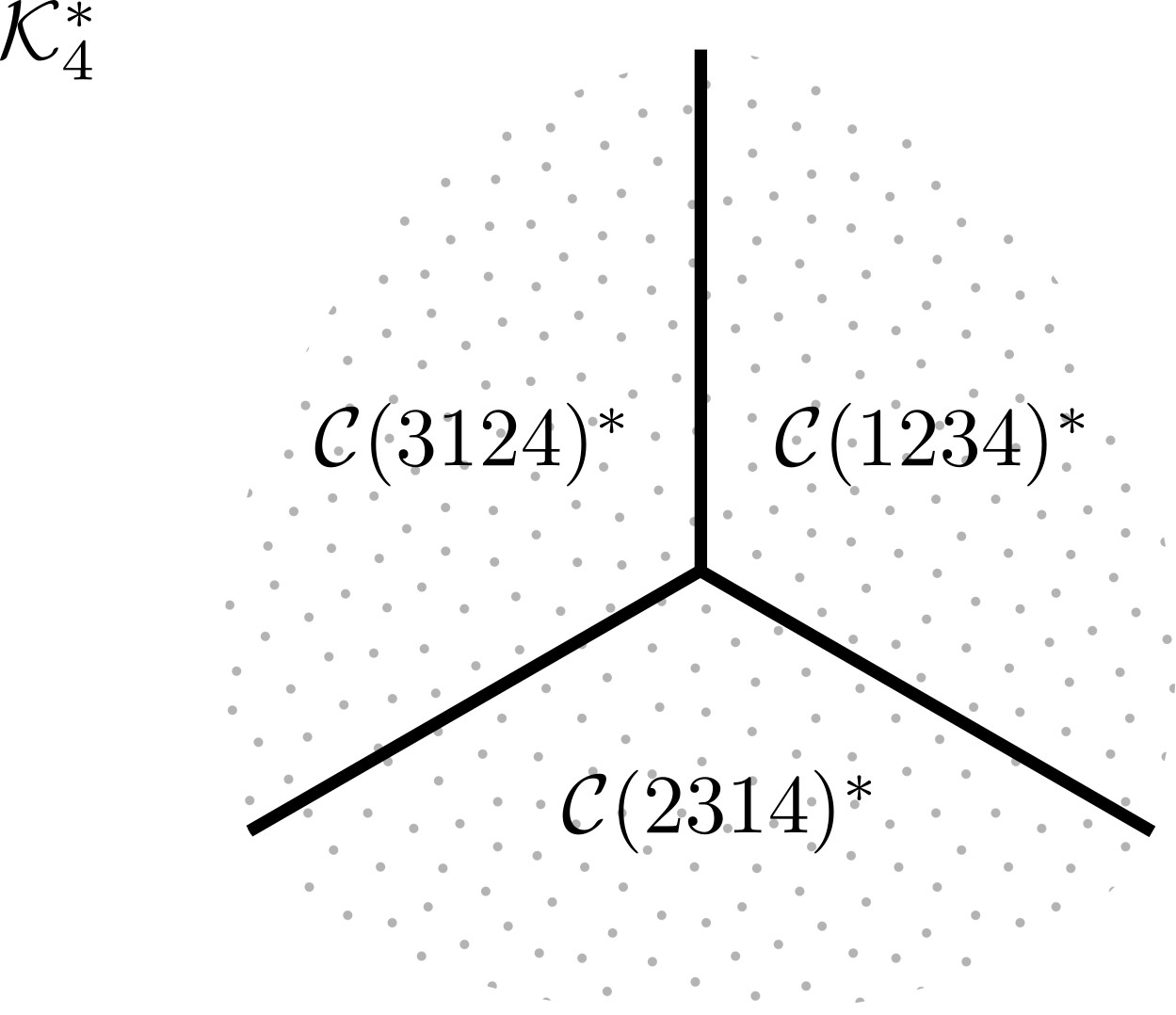}
\end{center}
\caption{The three dual cones $\mc{C}(\alpha)$ in $\mc{K}_4^*$.}
\label{ex4c}
\end{figure}
Let $\mc{K}_n^*$ be the dual (as a vector space) of $\mc{K}_n$. There is a standard notion of a `dual cone.' Define
\begin{equation}
\label{dualconedef}
\mc{C}(\alpha)^* = \left\{W \in \mc{K}_n^*\,|\,W\cdot Z \geq 0 ~\text{for all}~Z \in \mc{C}(\alpha) \right\}.
\end{equation}
The inequalities that define $\mc{C}(\alpha)$, equation \eqref{inequalities}, can be re-expressed as
$$
W_{S}\cdot Z \geq 0,
$$
where the covectors $W_{S}\in \mc{K}_n^*$ are defined by
\begin{equation}
\label{gendef}
W_{S}\cdot Z = X_{S}.
\end{equation}
Given this, the cone $\mc{C}(\alpha)^*$ is given by the `conic hull' or `positive span'\footnote{See equation \eqref{defconichull} in appendix \ref{oldcones} for conic hulls.}
$$
\mc{C}(\alpha)^* = \Cone \left\{ W_{S}\,|\,\text{for}~S~\text{consecutive with respect to}~\alpha \right\}.
$$
In the four point example we have, for instance,
$$
\mc{C}(1234)^* = \Cone (W_{12},W_{23}).
$$
In fact, the three dual cones $\mc{C}(1234)^*$, $\mc{C}(3124)^*$, and $\mc{C}(2314)^*$ fill the dual space $\mc{K}_4^*$, as shown in figure \ref{ex4c}.\footnote{This is a consequence of the fact that the three cones $\mc{C}(1234)$, $\mc{C}(3124)$, and $\mc{C}(2314)$ in $\mc{K}_4$ are the tangent cones of an isosceles triangle (translated so that their apexes are all at the origin). In general, the sum of dual tangent cones of a convex polytope fills the dual vector space. See theorem \ref{dualcones}.} What about dual associahedra? Unlike the cones, there is no natural notion of a dual associahedron in $\mc{K}_n^*$. Instead, we must restrict to the $(n-3)$-hyperplane $H(\alpha)$ that contains $\mc{A}(\alpha)$. Then the `dual polytope' is given by
$$
\mc{A}(\alpha)^* = \{Y \in H(\alpha)^* \,|\, Y\cdot Z \geq -1~\text{for all}~Z \in \mc{A}(\alpha) \} \subset H(\alpha)^*.
$$
The duality operation swaps dimension-$k$ faces in $\mc{A}(\alpha)$ for codimenion-$(k+1)$ faces in $\mc{A}(\alpha)^*$. Thus, vertices in $\mc{A}(\alpha)$ become facets in $\mc{A}(\alpha)^*$, and so on. To give a concrete example, let's return to the four point case and consider the line $H(1234)$. Using $-s_{12}$ as a coordinate on this line, the associahedron is the interval
$$
\mc{A}(1234) = \left[0,\sqrt{3}/2\right] \subset H(1234).
$$
The dual polytope is then
$$
\mc{A}(1234)^* = \left[-2/\sqrt{3},\infty\right) \subset H(1234)^*.
$$
If we compactify at infinity, we can regard $H(1234)$ as $\mb{RP}^1$. Then the duality operation has sent a 1-simplex $\mc{A}(1234)$ to a dual 1-simplex $\mc{A}(1234)^*$. We illustrate this process in figure \ref{exb1} so as to emphasise that the dual associahedron defined this way is not yet embedded into dual kinematic space. Such an embedding is an additional construction which inovlves making choices. This is analogous to the choices that are involved in AHBHY's embeddings of the associahedron into kinematic space. We choose a particular embedding in the next section.
\begin{figure}
\begin{center}
\includegraphics[scale=0.40]{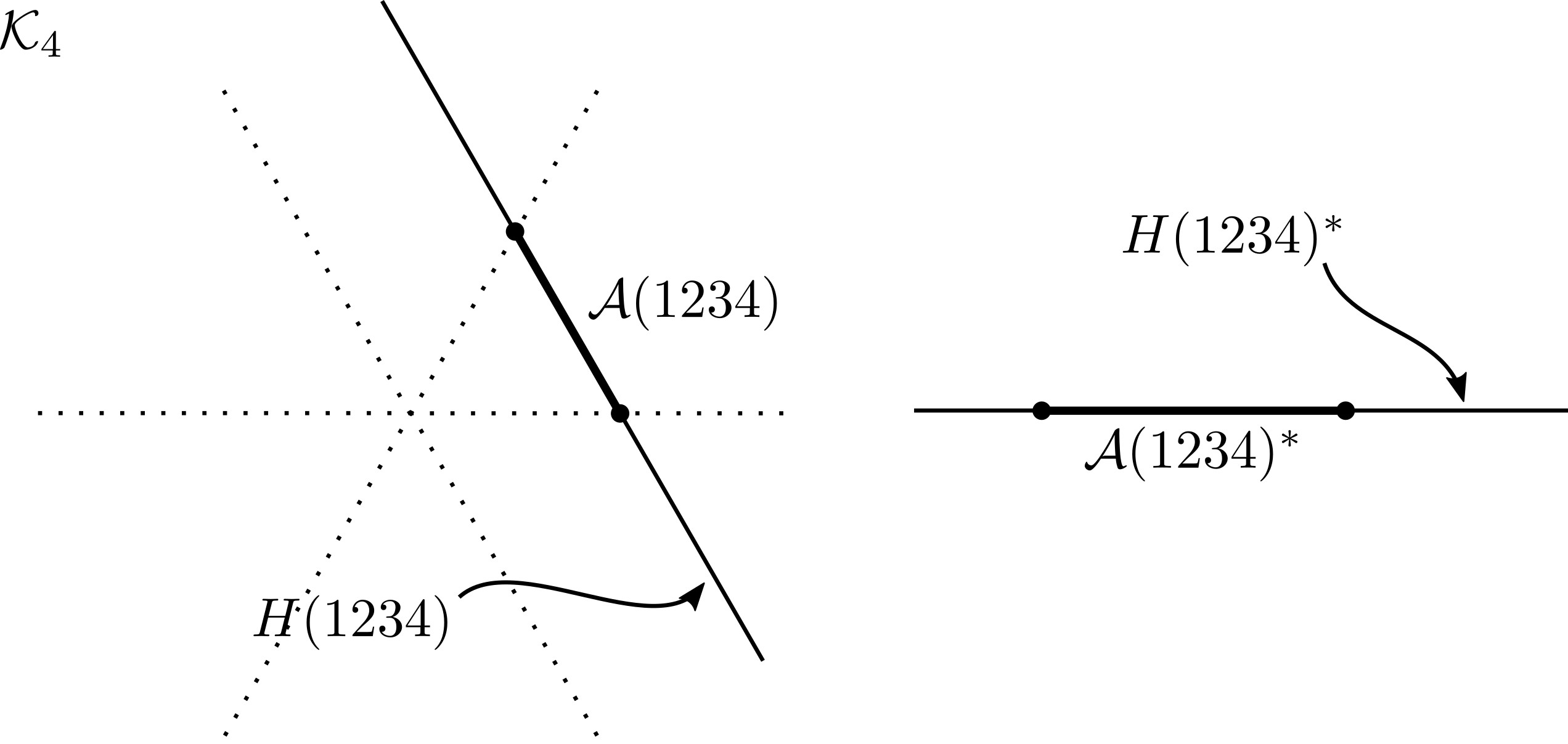}
\end{center}
\caption{The dual polytope construction for a 1-simplex associahedron encountered in the four point example. Notice that duality for a polytope $P$ in $\mc{K}_n$ is not defined with respect to $\mc{K}_n$, but rather with respect to the smallest linear subspace containing $P$. This means that the dual associahedron $\mc{A}(1234)^*$ is not naturally regarded as living in $\mc{K}_n^*$ until an embedding is chosen. We make a choice of embedding in section \ref{embed}.}
\label{exb1}
\end{figure}

\section{Embedding dual associahedra}
\label{embed}
AHBHY construct $(n-1)!/2$ embeddings of the associahedron, which has dimension $n-3$, into a higher dimensional vector space, $\mc{K}_n$. We will now construct $(n-1)!/2$ embeddings of the dual associahedra $\mc{A}(\alpha)^*$ (or, in fact, their faces) into dual kinematic space $\mc{K}_n^*$. The idea is to embed the (faces of the) dual associahedron $\mc{A}(\alpha)^*$ into the (boundary of the) dual cone $\mc{C}(\alpha)^*$. This can be done in a canonical way such that the embedded associahedra $\mc{A}(\alpha)^* \subset \mc{K}_n^*$ have combinatorially meaningful intersections. In defining our embedding we will make use of the vectors
$$
W_S \in \mc{K}_n^*,\qquad S \subset \{1,...,n-1\},
$$
that we defined in equation \eqref{gendef}. These vectors can be used as generators for the cones $\mc{C}(\alpha)$. Now recall that there is a 1-1 correspondence between codimension $k$ faces of $\mc{A}(\alpha) \subset H(\alpha)$ and dimension $k-1$ faces of $\mc{A}(\alpha)^*\subset H(\alpha)^*$. In particular, the facets (or codimension-1 faces) of $\mc{A}(\alpha) \subset H(\alpha)$ correspond to the vertices of $\mc{A}(\alpha)^*\subset H(\alpha)^*$. A facet of $\mc{A}(\alpha) \subset H(\alpha)$ is defined by a single inequality,
$$
W_S\cdot Z \geq 0,
$$
for some subset $S$. Let $Y_S \in H(\alpha)^*$ be the vertex of $\mc{A}(\alpha)^*\subset H(\alpha)^*$ corresponding to the facet $W_S\cdot Z = 0$ of $\mc{A}(\alpha)$. Then any dimension-$k$ face of $\mc{A}(\alpha)^*\subset H(\alpha)^*$ is given by the convex hull
$$
\Conv (Y_{S_1},...,Y_{S_k}) \subset H(\alpha)^*,
$$
for some subsets $S_i$ labelling the vertices of $\mc{A}(\alpha)^*$. We map this face to the convex hull
$$
\Conv  (W_{S_1},...,W_{S_k}) \subset  \mc{C}(\alpha)^*
$$
in $\mc{K}_n^*$. In particular, we map the vertex $Y_{S}$ to the vector $W_{S}$. In this way, we can embed all the faces of $\mc{A}(\alpha)^*$ into $\mc{K}_n^*$. In other words, we have an embedding of the boundary $\p\mc{A}(\alpha)^*$ into $\mc{K}_n^*$. We call this the `canonical embedding' of the faces and will typically abuse notation by denoting the embedded faces as $\p\mc{A}(\alpha)^*$. In general, the canonically embedded faces $\p\mc{A}(\alpha)^*$ do not bound a dimension $n-3$ polytope. (This does happen for four points, but not at higher points.) One can obtain an embedding of all $\mc{A}(\alpha)^*$ into $\mc{K}_n^*$ by choosing a triangulation of $\mc{A}(\alpha)^*$ into $(n-3)$-simplices. This idea is illustrated in figure \ref{exb2}.
\begin{figure}
\begin{center}
\includegraphics[scale=0.40]{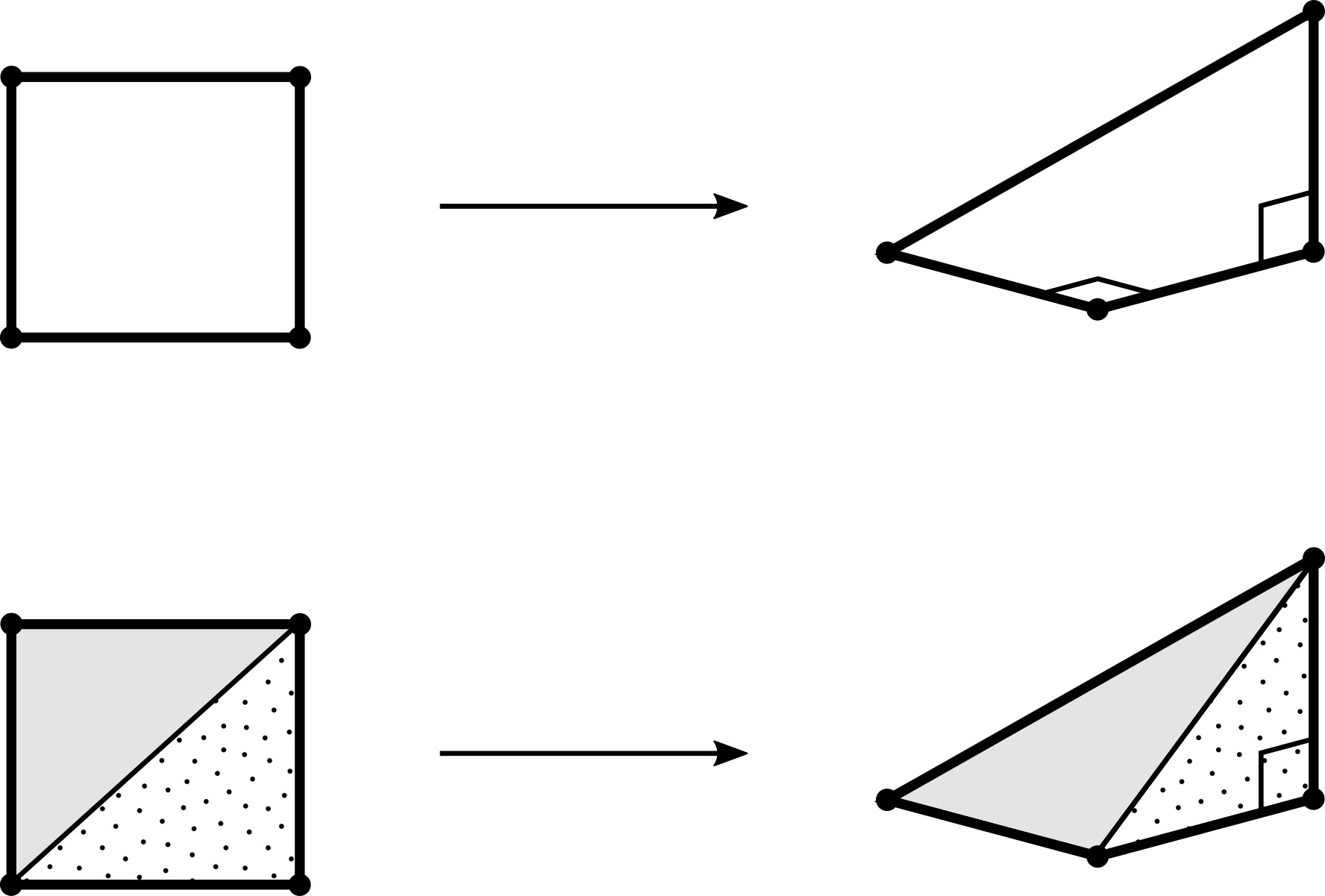}
\end{center}
\caption{The faces of a polytope (like the square) can be embedded in a higher dimensional space in such a way that they do not bound a polytope. The original polytope can still be embedded in the higher dimensional space by choosing a triangulation.}
\label{exb2}
\end{figure}
However, we will not do this. The reason is that all the combinatorial data that we need for amplitudes is already contained in the faces of the dual associahedra. (Recall that these faces are in direct correspondence with vertices of the associahedron, which is to say, with trivalent graphs.) Finally, let us denote the union of all the canonically embedded faces by
$$
FN_n = \bigcup_{\alpha} \p\mc{A}(\alpha)^* \subset \mc{K}_n^*
$$
(where we write `FN' for `face net'). We now give two examples. Consider first the four point example. The three dual vectors are
$$
W_{12}, W_{13}, W_{23},
$$
and the dual cones are given by
\begin{align*}
\mc{C}(1234)^* = \Cone( W_{12},W_{23} ),\\
\mc{C}(2314)^* = \Cone( W_{13},W_{23} ),\\
\mc{C}(3124)^* = \Cone( W_{12},W_{13} ).
\end{align*}
Then the canonically embedded dual associahedra are
\begin{align}
\label{ex4da}
\mc{A}(1234)^* = \Conv( W_{12},W_{23} ),\\
\mc{A}(2314)^* = \Conv( W_{13},W_{23} ),\\
\label{ex4da2}
\mc{A}(3124)^* = \Conv( W_{12},W_{13} ).
\end{align}
That is, the dual associahedra tile an isosceles triangle in $\mc{K}_4^*$. See figure \ref{ex4d}. So $FN_4$ gives the vertices of a triangle.
\begin{figure}
\begin{center}
\includegraphics[scale=0.40]{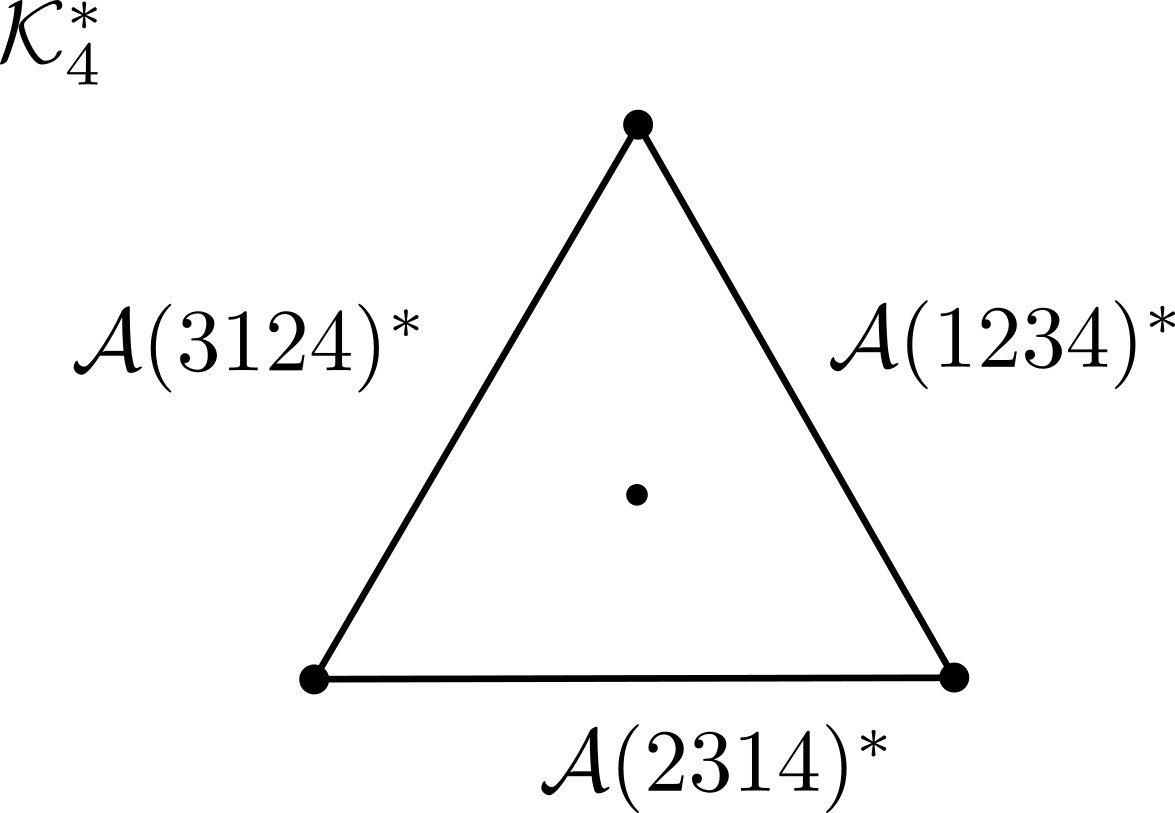}
\end{center}
\caption{The three dual associahedra $\mc{A}(\alpha)^*$ canonically embedded in $\mc{K}_4^*$ form a triangle.}
\label{ex4d}
\end{figure}
The situation at five points is less straightforward. In this case, there are $(5-1)!/2 = 12$ associahedra, each being a pentagon. The dual of a pentagon is a pentagon. So $FN_5$ will be formed by embedding the boundaries of 12 pentagons into $\mc{K}_5^*$. The vertices of the pentagons will each be mapped under the canonical embedding to one of the following ten vectors,
$$
\{W_{12},W_{13},W_{14},W_{23},W_{24},W_{34},W_{234},W_{134},W_{124},W_{123}\}.
$$
Though we cannot easily sketch the result of this embedding, we can draw the `net' associated to it, which is shown in figure \ref{ex5net}. Vertices in the net with the same label are identified in $\mc{K}_5^*$ under the embedding. Figure \ref{ex5net} can be arrived at from the analogous diagram showing $\overline{\mc{M}_{0,5}}(\mb{R})$ tiled by pentagons. This procedure is described in figure \ref{exb3}. Examining figure \ref{ex5net}, we see that each vertex is contained in the boundary of 6 distinct faces. Moreover, there are 12 faces altogether and 10 vertices. This suggests that the embedded associahedra in $\mc{K}_5^*$ tile a `halved' or `degenerate' dodecahedron, since a dodecahedron has 20 vertices, each of which meets 3 of its 12 faces.
\begin{figure}
\begin{center}
\includegraphics[scale=0.40]{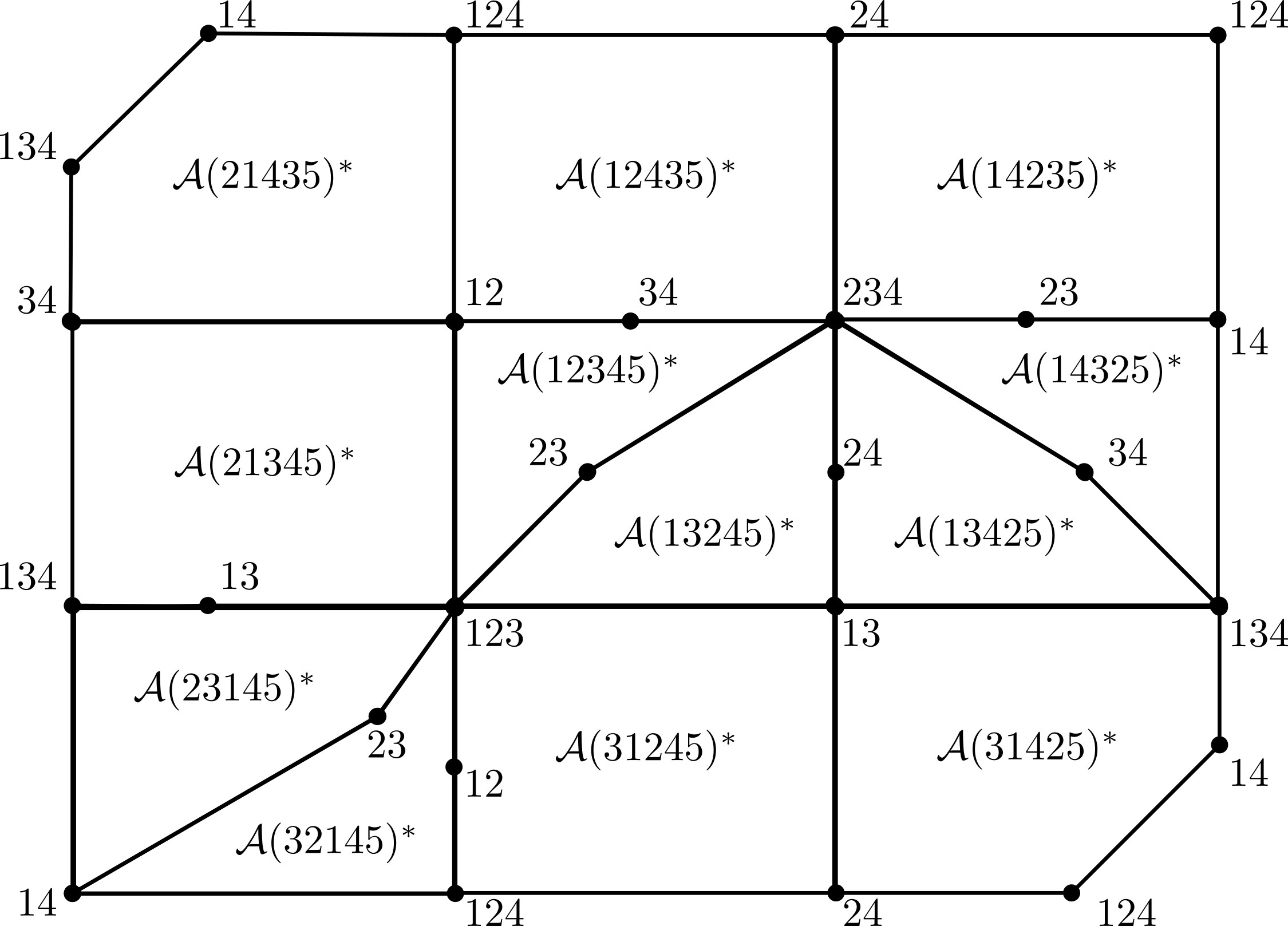}
\end{center}
\caption{The net which corresponds to the embedded associahedra in $\mc{K}_5^*$. Vertices with the same label are identified in the embedding.}
\label{ex5net}
\end{figure}
\begin{figure}
\begin{center}
\includegraphics[scale=0.40]{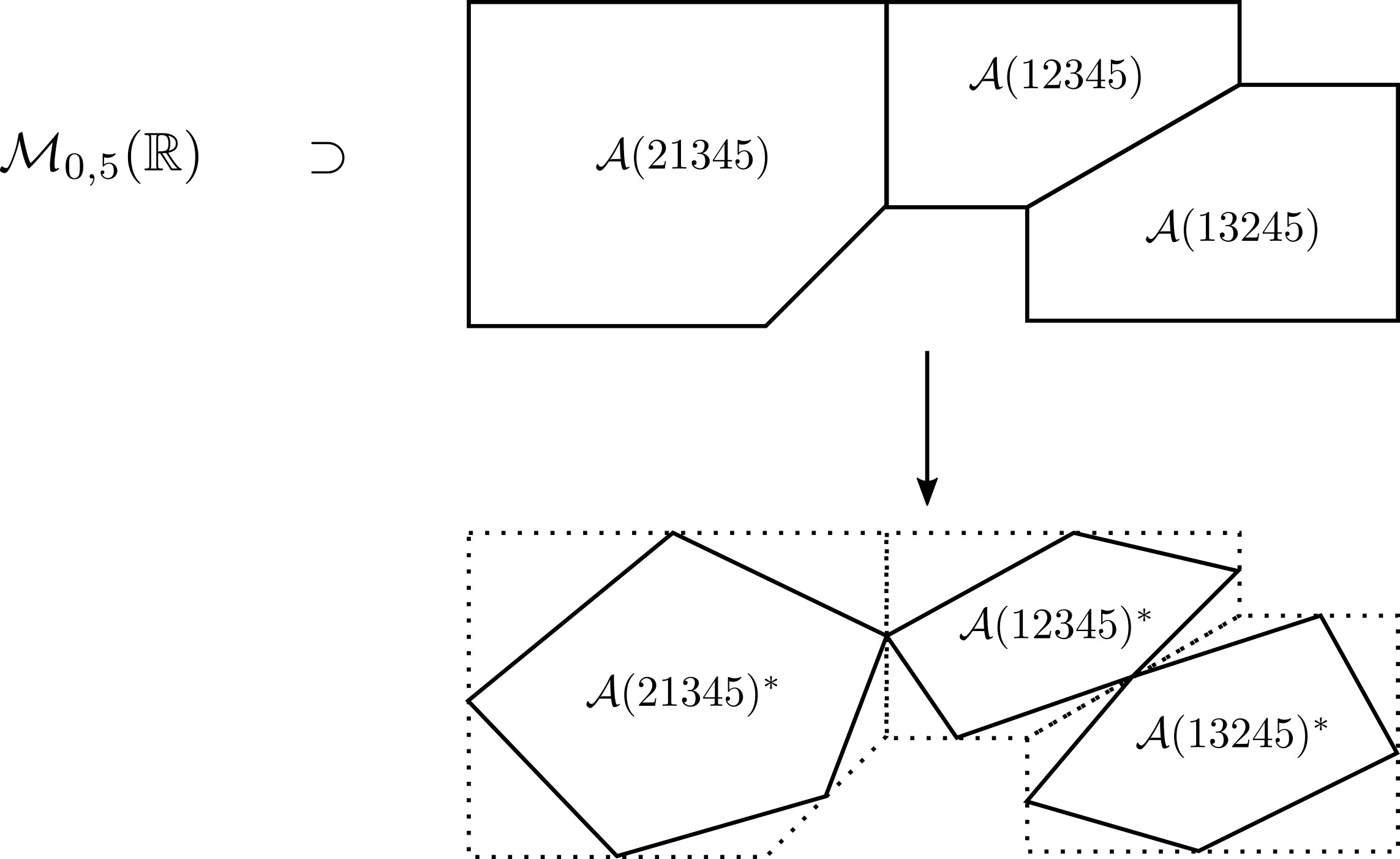}
\end{center}
\caption{A portion of $\overline{\mc{M}_{0,5}}(\mb{R})$. By associating to every side of each pentagon a vertex of the dual pentagon, we arrive at a net of dual pentagons with shared vertices. Rearranging this net leads us to figure \ref{ex5net}.}
\label{exb3}
\end{figure}
In general, $FN_n$ contains
$$
\sum_{k=2}^{n-2} {n-1\choose k} = 2^{n-1} - n-1
$$
vertices in $\mc{K}_n^*$. It is not clear whether $FN_n$ (or perhaps some double cover of it) can be mapped to the face lattice of a convex polytope. See section \ref{comments} for a discussion of this and related problems. 

\section{The partial amplitudes $m(\alpha|\beta)$}
\label{statement}
In the previous section we constructed an embedding of the (faces of the) dual associahedra $\mc{A}(\alpha)^*$ into dual kinematic space, $\mc{K}_n^*$. Given this embedding, our main claim is that the biadjoint scalar tree amplitudes are given by
$$
m(\alpha|\beta) = \Val (\p\mc{A}(\alpha)^* \cap \p\mc{A}(\beta)^*),
$$
where $\Val$ is a certain `valuation' or `volume' that we define below in equations \eqref{valdef1} and \eqref{valdef2}. We prove this formula as theorem \ref{theorem} at the end of this section. Before we define $\Val$, we need to review a few preliminaries.\footnote{See also appendix \ref{old} for a more detailed review and references.}

\subsection{Preliminaries}
\label{pre}
Let $P$ be a polyhedron in $\mc{K}_n^*$. For instance, $P$ could be a cone like $\mc{C}(\alpha)^*$. Then consider the integral
\begin{equation}
\label{Idef}
I_P(Z,\alpha') = \int\limits_P e^{-\alpha' W\cdot Z} \d W_P,
\end{equation}
where $\d W_P$ is the Euclidean volume form of the appropriate dimension to be integrated over $P$. We have introduced a parameter, $\alpha'$, which, at this stage, is not directly related to string theory in any way. Notice that $W$ is dimensionful and so $\alpha'$ is dimensionful, too. The integral $I_P$ is related to the volume of $P$. If $P$ is a bounded polytope, then the $(\alpha')^0$ term in the Laurent expansion of $I_P(Z,\alpha')$,
$$
I_P(Z,\alpha') = ...+  I_P^{(-1)}(Z)(\alpha')^{-1}+ I_P^{(0)}(Z) + I_P^{(1)}(Z)\alpha' + ...,
$$
will coincide with the volume of $P$. For now, let us fix $\alpha' = 1$. Suppose $P$ is a cone. Then, referring to equation \eqref{Idef}, we see that the integral $I_P(V,\alpha')$ is well defined if $V\cdot W > 0$ for all $W \in P$. This means that $I_P(Z,1)$ is well defined if $Z$ is in the dual cone $P^*\subset \mc{K}_n$.\footnote{The dual cone was defined in equation \eqref{dualconedef} in section \ref{reviewdual}. See also equation \eqref{olddualconedef} in appendix \ref{oldcones}.} If $Z$ lies on the boundary of $P^*$, then $W\cdot Z = 0$ for $W$ in some face of $P$. So $I_P(Z,1)$ diverges to $+\infty$ as $Z$ tends to the boundary of $P^*$ and the integral is not defined for $Z$ outside $P^*$. Now consider any polyhedral cone
$$
C = \Cone (W_1,...,W_k),
$$
where $k \leq \dim \mc{K}^*_n$. A standard calculation shows that for $Z$ in the interior of $C^*$, the integral $I_C$ is\footnote{See also theorem \ref{intcal2} in appendix \ref{oldint} where $I_C$ is computed for $k > \dim V$. The basic idea is to use the result that
$$
\int\limits_0^{\infty} \d x\, e^{-ax} = \frac{1}{a}
$$
several times.}
\begin{equation}
\label{mimic}
I_{C}(Z,1) = \frac{\avg{W_1,...,W_k}}{\prod_{i=1}^k (W_i\cdot Z)},
\end{equation}
where $\avg{W_1,...,W_k}$ is the Euclidean volume of the unit (open-closed) box
$$
\Box(W_1,...,W_k) = \left\{ \sum_{i=1}^kc_iW_i\,|\,0\leq c_i <1 \right\}.
$$
We are not interested in evaluating the integral $I$ for the cones $\mc{C}(\alpha)^*\subset \mc{K}_n^*$ introduced in section \ref{reviewdual}. For the purposes of writing amplitudes, the factors of $W_i\cdot Z$ appearing in equation \eqref{mimic} are propagators. The cone $\mc{C}(\alpha)^*$ has $n(n-3)/2$ vertices, and so $I_{\mc{C}(\alpha)^*}$ would contain too many propagators to be an $n$-point amplitude. An $n$-point amplitude should have $n-3$ propagators, since this is the number of internal lines in any $n$-point trivalent graph. All the combinatorial structure we need to write amplitudes is in the boundary $\p\mc{C}(\alpha)^*$. This boundary is generated by the facets of the dual associahedron $\mc{A}(\alpha)^*$ canonically embedded in $\mc{K}_n^*$. Under the canonical embedding, $\mc{A}(\alpha)^*$ is not necessarily itself a convex polytope in $\mc{K}_n^*$ (except for the $n=4$ case). However, the faces of $\mc{A}(\alpha)^* \subset \mc{K}_n^*$ are all convex polytopes by construction. It turns out that the amplitude $m(\alpha|\alpha)$ is given by a sum of terms in $I_{\p\mc{C}(\alpha)^*}$.

\subsection{The formula}
\label{new}
The integral $I_P$ computes biadjoint scalar amplitudes when $P$ is the cone generated by $\p\mc{A}(\alpha)^*$. To see this, consider the boundary $\p\mc{A}(\alpha)^*$ of the dual associahedron, canonically embedded in $\mc{K}_n^*$. This is the union of the embedded facets of $\mc{A}(\alpha)^*$, which we call $\{F_i\}$. So
\begin{equation}
\label{whamlicht}
\p\mc{A}(\alpha)^* = \bigcup_i F_i.
\end{equation}
Since $n-3$ is smaller than $\dim \mc{K}_n^* = n(n-3)/2$, we can use equation \eqref{mimic} to evaluate $I$ on $\p\mc{A}(\alpha)^*$. If $A$ and $B$ are cones that intersect in a strictly lower dimensional cone, then the standard properties of integrals tell us that
$$
I_{A\cup B} = I_A + I_B.
$$
It follows from this, and equation \eqref{whamlicht}, that
\begin{equation}
\label{valdef2}
I _{\Cone (\p\mc{A}(\alpha)^*)}(Z,1) = \sum_{\text{facets}~F} \frac{1}{\prod_{i=1}^{n-3} (W_{F(i)}\cdot Z)},
\end{equation}
where the facet $F$ is given by the convex hull of $W_{F(1)},...,W_{F(n-3)}$ and the volume of the unit box spanned by these vectors is one.\footnote{In abstract terms, we could normalise the volume so that the W vectors span unit-volume cells. Concretely, however, taking coordinates $Z = (s_{ij})$ and defining $W_S$ as in equation \eqref{gendef}, the vectors $W_{F(1)},...,W_{F(n-3)}$ span a parallelogram formed by translations of a unit box.} The right-hand-side of equation \ref{valdef2} is a biadjoint scalar tree amplitude. Indeed,
\begin{equation}
\label{lala}
m(\alpha|\alpha) = I _{\Cone (\p\mc{A}(\alpha)^*)}(Z,1).
\end{equation}
This is a special case of the theorem, theorem \ref{theorem}, that we prove below. We emphasise here that equation \eqref{lala} is merely an alternative form of the result presented for $m(\alpha|\alpha)$ by AHBHY. Their result is presented in sections 5.1 and 5.2 of \cite{ahbhy17}, and the equivalence with equation \eqref{lala} follows almost immediately from equation (5.7) in section 5.2 of their paper.\\

\begin{thm}
\label{theorem}
For dual associahedra $\mc{A}(\alpha)^*$ and $\mc{A}(\beta)^*$ embedded in $\mc{K}_n^*$ as described in section \ref{embed}, the biadjoint scalar amplitudes are given by
$$
m(\alpha|\beta) = I_{\Cone (\p\mc{A}(\alpha)^* \cap \p\mc{A}(\beta)^*)}(Z,1).
$$
\end{thm}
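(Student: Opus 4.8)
\emph{Proof strategy.} The plan is to reduce the claim to the trivalent-tree expansion of the amplitude and then to recognise $I_{\Cone(\p\mc{A}(\alpha)^*\cap\p\mc{A}(\beta)^*)}$ term by term, exactly as equations \eqref{whamlicht}--\eqref{lala} do in the case $\alpha=\beta$. Recall that, in the conventions fixed by equations \eqref{valdef2} and \eqref{lala}, the biadjoint amplitude $m(\alpha|\beta)$ is the sum, over trivalent plane trees $g$ that are planar with respect to \emph{both} $\alpha$ and $\beta$, of the product $\prod_S 1/X_S$ of the inverse propagators of $g$, where $X_S = W_S\cdot Z$ and $S$ runs over the $n-3$ poles of $g$. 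Each such $g$ corresponds to a triangulation $\tau_g$ of the $n$-gon whose $n-3$ diagonals are consecutive with respect to both $\alpha$ and $\beta$, and by equation \eqref{mimic}, with the unit-box normalisation noted after equation \eqref{valdef2}, its contribution is exactly $I_{\Cone(\{W_S\,:\,S\in\tau_g\})}(Z,1)$. So it is enough to show that $\Cone(\p\mc{A}(\alpha)^*\cap\p\mc{A}(\beta)^*)$ is, up to a subset of strictly smaller dimension, the union of the simplicial cones $\Cone(\{W_S\,:\,S\in\tau\})$ with $\tau$ ranging over the triangulations consecutive with respect to both $\alpha$ and $\beta$.

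First I would set up the dictionary of section \ref{embed}. Under the canonical embedding the faces of $\p\mc{A}(\gamma)^*$ are the simplices $\Conv(\{W_S\,:\,S\in D\})$, one for each partial triangulation $D$ of the $n$-gon with diagonals consecutive with respect to $\gamma$; the facets are the full triangulations (equivalently, the trivalent trees planar with respect to $\gamma$), and since $\Cone(\{W_S\,:\,S\in\tau\})$ is $(n-3)$-dimensional the $n-3$ vectors $W_S$, $S\in\tau$, are linearly independent, so $\Conv(\{W_S\,:\,S\in\tau\})$ is a genuine $(n-4)$-simplex and $\tau$ can be read off from it. Hence $\tau\mapsto\Conv(\{W_S\,:\,S\in\tau\})$ is injective, and a facet of $\p\mc{A}(\alpha)^*$ and a facet of $\p\mc{A}(\beta)^*$ coincide in $\mc{K}_n^*$ precisely when they arise from a common triangulation $\tau$, which is then consecutive with respect to both orderings. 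I would also record that the canonical embedding realises $FN_n$ as an honest geometric simplicial complex, so that any two of its faces meet along their common face: $\Conv(\{W_S\,:\,S\in D\})\cap\Conv(\{W_S\,:\,S\in D'\}) = \Conv(\{W_S\,:\,S\in D\cap D'\})$.

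Next I would compute the intersection. Writing each $\p\mc{A}(\gamma)^*$ as the union of its simplices and using the complex structure, $\p\mc{A}(\alpha)^*\cap\p\mc{A}(\beta)^* = \bigcup_D\Conv(\{W_S\,:\,S\in D\})$, the union taken over all partial triangulations $D$ with diagonals consecutive with respect to both $\alpha$ and $\beta$. The key combinatorial input is that this complex is \emph{pure} of dimension $n-4$: every such $D$ extends to a full triangulation that is still consecutive with respect to both $\alpha$ and $\beta$. Granting this, the intersection is exactly the union of the shared facets $\Conv(\{W_S\,:\,S\in\tau\})$, $\tau$ the trees planar with respect to both, whence $\Cone(\p\mc{A}(\alpha)^*\cap\p\mc{A}(\beta)^*) = \bigcup_\tau\Cone(\{W_S\,:\,S\in\tau\})$ is a union of $(n-3)$-dimensional simplicial cones meeting pairwise in strictly lower dimension. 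Using $I_{A\cup B}=I_A+I_B$ for such $A$ and $B$ (as noted before equation \eqref{valdef2}; the top-dimensional Euclidean volume form does not see the lower-dimensional overlaps) and then equation \eqref{mimic}, one obtains $I_{\Cone(\p\mc{A}(\alpha)^*\cap\p\mc{A}(\beta)^*)}(Z,1) = \sum_\tau I_{\Cone(\{W_S\,:\,S\in\tau\})}(Z,1) = \sum_\tau\prod_{S\in\tau}1/X_S = m(\alpha|\beta)$.

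The main obstacle is the purity statement above: the complex of sets of diagonals consecutive with respect to both $\alpha$ and $\beta$ must have all its maximal faces of size $n-3$; everything else is bookkeeping with the embedding of section \ref{embed} and the valuation property of $I$. I would prove purity by induction on $n$: if $D$ is bi-consecutive but not maximal, one of the regions it cuts out has at least four vertices, and one checks that it always contains a diagonal still consecutive with respect to both cyclic orders, the underlying mechanism being that two crossing channels common to $\alpha$ and $\beta$ force a coarser common channel (as is visible already in the low-point cases). A minor secondary point is to confirm that the canonical embedding really does realise $FN_n$ as a geometric simplicial complex, so that distinct embedded simplices never overlap improperly; this follows from the linear independence of the $W_S$ attached to any triangulation together with the laminarity of mutually compatible diagonals. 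With these facts the theorem reduces to the $\alpha=\beta$ computation of equations \eqref{whamlicht}--\eqref{lala} carried out on the shared facets.
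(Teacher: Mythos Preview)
Your approach is the paper's: put shared facets of the two canonically embedded dual associahedra in bijection with trivalent trees planar for both orderings, then sum $I$ over those facets using \eqref{mimic} and the additivity noted before \eqref{valdef2} to recover the Feynman expansion of $m(\alpha|\beta)$. The paper's proof stops at that bijection and calls the result ``almost tautological''; you go a step further and isolate the one point it leaves implicit---that $\p\mc{A}(\alpha)^*\cap\p\mc{A}(\beta)^*$ has no maximal pieces of dimension below $n-4$ (your purity lemma), together with the check that the canonical embedding really makes $FN_n$ a geometric simplicial complex---so that the set-theoretic intersection coincides, up to lower-dimensional strata, with the union of shared facets.
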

\begin{proof}
Given the set up as we have described it, this result is almost tautological. Vertices of $\mc{A}(\alpha)$ correspond to $\alpha$-planar trivalent graphs. So facets of $\mc{A}(\alpha)^*$ correspond to $\alpha$-planar trivalent graphs. Suppose that a particular $\alpha$-planar graph $g$ has propogators $1/S_{g(1)},....,1/S_{g(n-3)}$ where $-S_{g(i)} = W_{F(i)}\cdot V$ for some vectors $W_{F(i)} \in \mc{K}_n^*$. Then the corresponding facet $F$ of $\mc{A}(\alpha)^*$ is canonically embedded in $\mc{K}_n^*$ as
$$
\Conv(W_{F(1)},...,W_{F(n-3)}).
$$
If $g$ is also a $\beta$-planar graph, then $\Conv(W_{F(1)},...,W_{F(n-3)})$ will also be a facet of $\mc{A}(\beta)^*$, canonically embedded in $\mc{K}_n^*$. In this way, the $\alpha,\beta$-planar graphs are in 1-1 correspondence with shared facets of the dual associahedra $\mc{A}(\alpha)^*,\mc{A}(\beta)^*\subset \mc{K}_n^*$ under the canonical embedding. The result then follows from
$$
\sum_{\text{shared}~F} \frac{1}{\prod_{i=1}^{n-3} (W_{F(i)}\cdot V)} =\sum_{\alpha,\beta-\text{planar}~ g} \frac{1}{\prod_{i=1}^{n-3} (-S_{g(i)})}.
$$
This last expression is $m(\alpha|\beta)$.
\end{proof}

The statement of theorem \ref{theorem} emphasises the cones generated by the boundaries $\p\mc{A}(\alpha)^*$. However, the integral $I_{\Cone P}$ can also be regarded, if you prefer, as a volume or valuation of the polytope $P$ itself. To emphasise the role of the dual associahedra, we might define
\begin{equation}
\label{valdef1}
\Val(P)(Z) = I_{\Cone(P)}(Z,1).
\end{equation}
Notice that $\Val$ is a valuation (i.e. `like' a volume). This follows from the valuation property for $I_P$ ($I_{P\cup Q} = I_P + I_Q - I_{P\cap Q}$) and some other observations such as $\Cone(P\cup Q) =  \Cone(P) \cup \Cone(Q)$. This said, we can write the following.\\

\begin{cor}
\label{valley}
The biadjoint scalar tree amplitudes are given by
$$
m(\alpha|\beta) = \Val(\p\mc{A}(\alpha)^* \cap \p\mc{A}(\beta)^*),
$$
where $\Val$ is the valuation defined in equation \eqref{valdef1}.
\end{cor}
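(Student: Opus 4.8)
The plan is to read the corollary as nothing more than theorem \ref{theorem} rewritten in the notation of the valuation $\Val$. By definition \eqref{valdef1} one has $\Val(P)(Z) = I_{\Cone(P)}(Z,1)$ for any polyhedral set $P$ on which the right-hand side makes sense, so applying this with $P = \p\mc{A}(\alpha)^* \cap \p\mc{A}(\beta)^*$ turns the claim
$$
m(\alpha|\beta) = \Val(\p\mc{A}(\alpha)^* \cap \p\mc{A}(\beta)^*)(Z)
$$
into the statement $m(\alpha|\beta) = I_{\Cone(\p\mc{A}(\alpha)^* \cap \p\mc{A}(\beta)^*)}(Z,1)$, which is exactly theorem \ref{theorem}. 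So, modulo checking that $\Val$ is legitimately defined on the set in question, the proof is a one-line substitution.

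The single point worth spelling out is that $\Val$ does make sense on $\p\mc{A}(\alpha)^* \cap \p\mc{A}(\beta)^*$, even though this set is in general neither convex nor of top dimension. As recorded in the proof of theorem \ref{theorem}, the intersection is the union of the shared facets $\Conv(W_{F(1)},\dots,W_{F(n-3)})$ of the two canonically embedded dual associahedra together with their common lower-dimensional faces; each such piece is a convex polytope of dimension at most $n-3 < \dim \mc{K}_n^*$, so its cone is a pointed polyhedral cone on which $I$ is given by the closed form \eqref{mimic}. Since distinct pieces meet only in strictly lower-dimensional cones, $I$ is additive over the union, and the resulting sum is precisely the sum over shared facets that was shown in the proof of theorem \ref{theorem} to equal the trivalent-graph sum defining $m(\alpha|\beta)$. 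One then also notes, as indicated after equation \eqref{valdef1}, that $\Val$ is a valuation: this follows from the inclusion–exclusion property $I_{A\cup B} = I_A + I_B - I_{A\cap B}$ for polyhedra together with $\Cone(P\cup Q) = \Cone(P)\cup\Cone(Q)$, so that $\Val(P\cup Q) = \Val(P) + \Val(Q) - \Val(P\cap Q)$.

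There is no real obstacle here; the corollary is a restatement of theorem \ref{theorem}, and the only thing requiring care is bookkeeping — making sure that the operations "take the cone" and "intersect the boundaries" are applied in a consistent order, so that each term falls under the hypotheses of equation \eqref{mimic} ($k \leq \dim \mc{K}_n^*$ generators, pointed cone) and the additivity of $I$ over unions of cones meeting in lower dimension can be invoked termwise. With that settled, the displayed formula for $m(\alpha|\beta)$ follows immediately.
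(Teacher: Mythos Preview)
Your proposal is correct and matches the paper's approach exactly: the corollary is stated in the paper without a separate proof, as a direct restatement of theorem \ref{theorem} via the definition $\Val(P)(Z) = I_{\Cone(P)}(Z,1)$ in equation \eqref{valdef1}. Your additional remarks verifying that $\Val$ is well-defined on the intersection and is a valuation simply make explicit what the paper leaves implicit in the sentence preceding the corollary.
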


\subsection{Examples}
\label{examples}
We now illustrate theorem \ref{theorem} at four and five points. No essentially new phenomena appear at higher points, so these examples suffice to illustrate the result.\\

\begin{ex}
The kinematic associahedra at four points were presented explicitly in section \ref{embed}, equations \eqref{ex4da} to \eqref{ex4da2}. For example, the dual associahedron $\mc{A}(1234)^*$ is given by
$$
\mc{A}(1234)^* = \Conv (W_{12},W_{23}),
$$
and it has two zero-dimensional facets, namely $W_{12}$ and $W_{23}$. Then we compute
$$
I_{\Cone(W_{12})\cup\Cone(W_{23})} = \frac{1}{ W_{12}\cdot V} + \frac{1}{ W_{23}\cdot V}.
$$
It follows that the amplitude is
$$
m(1234|1234) =\frac{1}{-s_{12}} + \frac{1}{-s_{23}}.
$$\\
\end{ex}

\begin{ex}
Recall from equation \eqref{ex4da2} that
$$
\mc{A}(3124)^* = \Conv (W_{13},W_{12}).
$$
Then we evaluate the intersection
$$
\mc{A}(1234)^* \cap \mc{A}(3124)^* = W_{12}.
$$
This implies the amplitude
$$
m(1234|3124) = \Val (\Cone(W_{12})) = \frac{1}{W_{12}\cdot V} = \frac{1}{-s_{12}}.
$$\\
\end{ex}
\begin{figure}
\begin{center}
\includegraphics[scale=0.40]{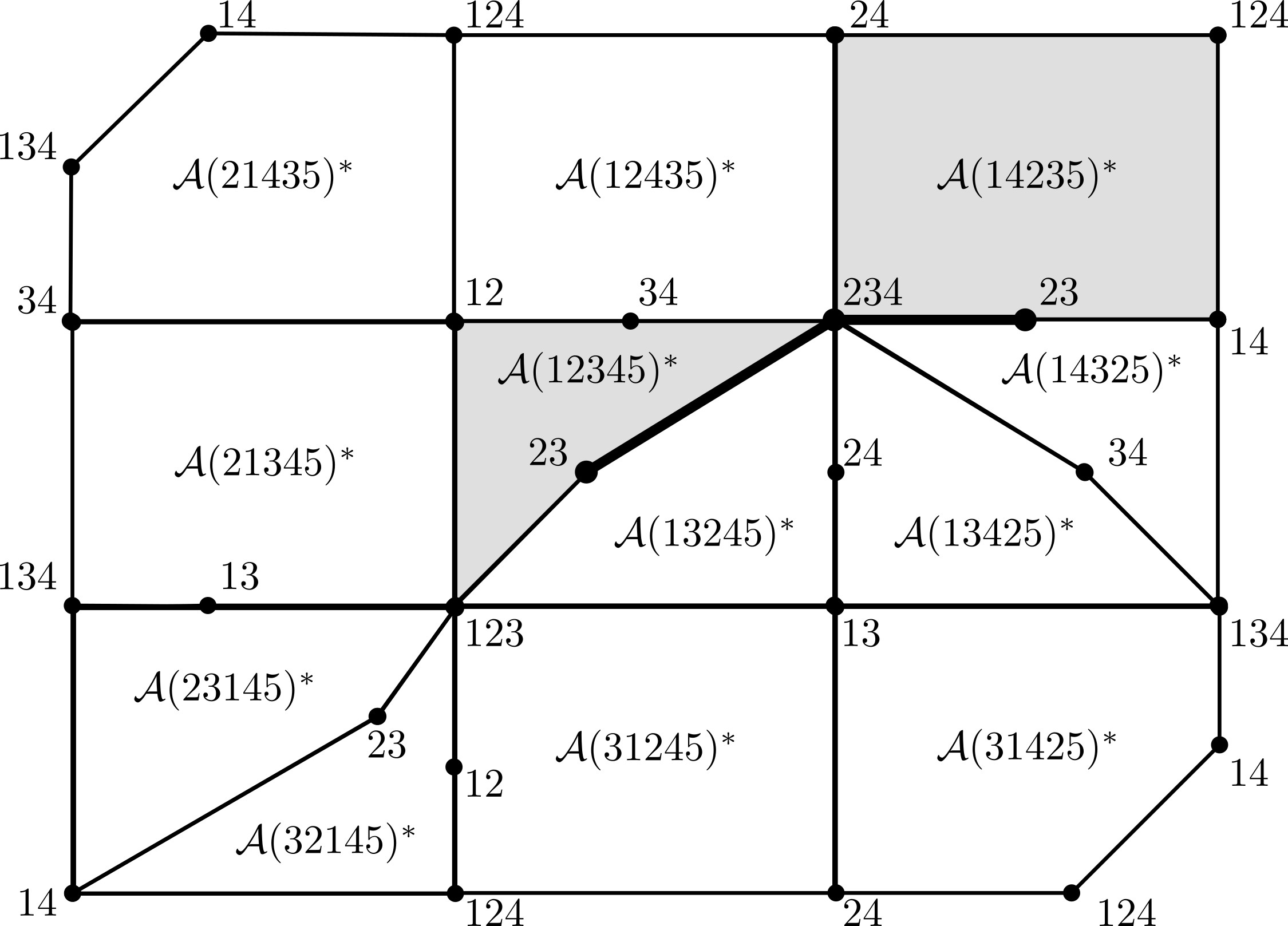}
\end{center}
\caption{The shared facet of $\mc{A}(12345)^*$ and $\mc{A}(14235)^*$. Vertices in the net with the same label are identified under the canonical embedding into $\mc{K}_5^*$.}
\label{ex5a}
\end{figure}
\begin{ex}
At five points, the kinematic space $\mc{K}_5$ is defined by the hyperplane
$$
s_{12}+s_{13}+s_{14}+s_{23}+s_{24}+s_{34}=0
$$
in $V = \mb{R}^6$. Consider the ordering $\alpha = 12345$. The cone $\mc{C}(\alpha)$ is defined by the inequalities
\begin{align*}
&X_{13} = -s_{12} \geq 0\\
&X_{14} = -s_{123} \geq 0\\
&X_{24} = -s_{23} \geq 0\\
&X_{25} = -s_{234} \geq 0\\
&X_{35} = -s_{34} \geq 0.\\
\end{align*}
The dual cone $\mc{C}(\alpha)^*$ is the conic hull
$$
\mc{C}(\alpha)^* = \Cone (W_{12},W_{123},W_{23},W_{234},W_{34}).
$$
(For the definition of the dual vectors $W_S$, see equation \eqref{gendef} in section \ref{reviewdual}.) Similarly, the dual cone $\mc{C}(\beta)^*$ for $\beta = 14235$ is the conic hull
$$
\mc{C}(\beta)^* = \Cone (W_{14},W_{124},W_{24},W_{234},W_{23}).
$$
The associated dual associahedra, $\mc{A}(\alpha)^*$ and $\mc{A}(\beta)^*$, canonically embedded in $\mc{K}_5^*$, share one face:
$$
\mc{A}(\alpha)^*\cap \mc{A}(\beta)^* = \Conv (W_{23},W_{234}).
$$
This can be read off from figure \ref{ex5net}, or computed explicitly by listing the faces of the two dual associahedra. See figure \ref{ex5a} for a drawing that highlights the intersection. We compute
$$
I_{\Cone(W_{23}W_{234})}(Z,1) = \frac{1}{Z\cdot W_{23} Z \cdot W_{234}}.
$$
So the amplitude is
$$
m(\alpha|\beta) = \Val \left(\Conv (W_{23},W_{234})\right) = \frac{1}{s_{23}s_{234}}.
$$\\
\end{ex}
\begin{figure}
\begin{center}
\includegraphics[scale=0.40]{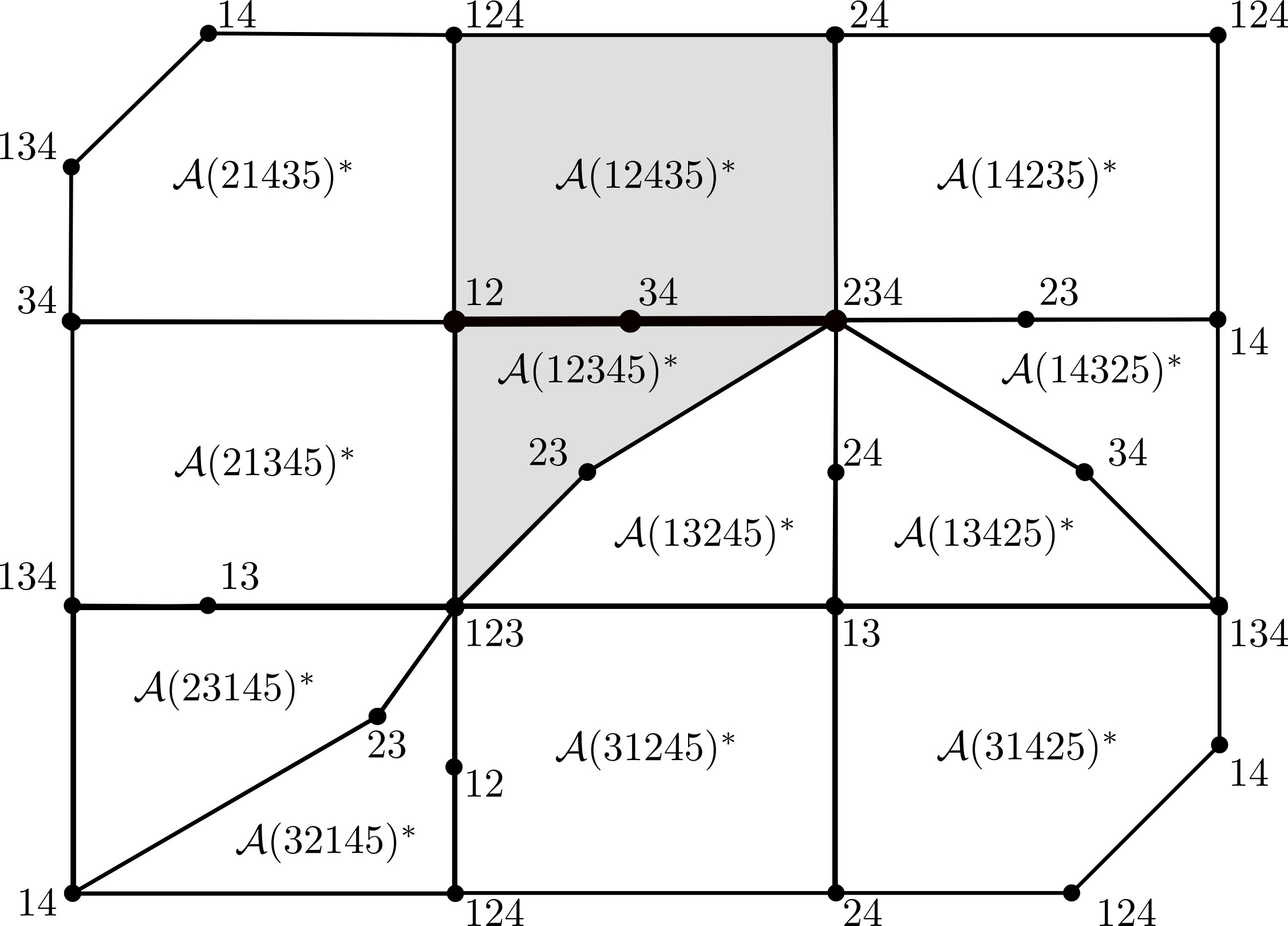}
\end{center}
\caption{The shared facets of $\mc{A}(12345)^*$ and $\mc{A}(12435)^*$, pictured as a net. Vertices with the same label are identified under the canonical embedding into $\mc{K}_5^*$.}
\label{ex5b}
\end{figure}
\begin{ex}
\label{sibilanga}
To give another example, the dual cone for $\gamma = 12435$ is the conic hull
$$
\mc{C}(\gamma)^* = \Cone (W_{12},W_{124},W_{24},W_{234},W_{34}).
$$
The dual associahedron $\mc{A}(\gamma)^*$, canonically embedded in $\mc{K}_5^*$, shares two faces with $\mc{A}(\alpha)^*$. This can be read off from figure \ref{ex5net} or computed by hand. The shared faces are
$$
\text{Int} = \mc{A}(\alpha)^*\cap\mc{A}(\gamma)^* = \Conv(W_{234},W_{34})\cup \Conv (W_{34},W_{12}).
$$
The two faces are highlighted in figure \ref{ex5b}. The valuation $I$ of this intersection is
$$
I_{\Cone(\text{Int})}(Z,1) = \frac{1}{W_{34}\cdot Z W_{234}\cdot Z} + \frac{1}{W_{34}\cdot Z W_{12}\cdot Z}.
$$
The amplitude is
$$
m(\alpha|\gamma) = \Val (\text{Int}) = \frac{1}{s_{23}s_{234}} + \frac{1}{s_{12}s_{34}}.
$$
\\
\end{ex}
\begin{ex}
As a final example, the cone $\mc{C}(\delta)^*$ for $\delta = 13524$ is given by
$$
\mc{C}(\delta)^* = \Cone (W_{13},W_{24},W_{124},W_{14},W_{134}).
$$
One can readily verify that $\mc{C}(\delta)^*\cap \mc{C}(\alpha)^* = \emptyset$.\footnote{This is an exercise in linear reduction. One should write the generators of $\mc{C}(\delta)^*$ as linear combinations of the generators of $\mc{C}(\alpha)^*$ and observe that each generator of $\mc{C}(\delta)^*$ lies outside of $\mc{C}(\alpha)^*$. } This means that the dual associahedra $\mc{A}(\alpha)^*$ and $\mc{A}(\delta)^*$ share no common faces in the canonical embedding into $\mc{K}_5^*$. (This can also be read off from figure \ref{ex5net}.) This means that the amplitude vanishes:
$$
m(\alpha|\delta) = \Val (\mc{A}(\alpha)^*\cap \mc{A}(\delta)^*) = 0.
$$
\end{ex}

\section{A connection with the KLT kernel}
\label{statement'}
In this section we state a connection between the dual associahedra in $\mc{K}_n^*$ and the inverse KLT kernel. We denote the inverse KLT kernel by $m_{\alpha'}(\alpha|\beta)$, following Mizera. \cite{mizera1706} In section \ref{stringyformula} we present a new formula for the diagonal elements $m_{\alpha'}(\alpha|\alpha)$. This formula involves a discrete sum over a lattice which we introduce in section \ref{lattice}. In kinematic space, $\mc{K}_n$, the lattice is a standard lattice of points for which the Mandelstam variables $s_{ij}$ take integer values (or, if you prefer, are integer multiples of $1/\alpha'$). Before we prove the formula in section \ref{stringyformula}, we review some technical preliminaries in sections \ref{lattice} and \ref{stringyreview}. But to put this discussion in context, let us begin by recalling that Kawai-Lewellen-Tye derived a relation of the form
$$
A_{\text{closed}} = \sum_{\alpha,\beta} A_{\text{open}}(\alpha) \text{KLT}(\alpha|\beta) A_{\text{open}}(\beta)
$$
between closed string tree amplitudes and open string amplitudes. This relation was derived in \cite{klt} using an analytic continuation argument to deform the integration contours of the amplitudes in such a way that $A_{\text{closed}}$ factorises as shown. The kernel, $\text{KLT}(\alpha|\beta)$, can be inferred given knowledge of the string amplitudes. However, in \cite{mizera1706}, Mizera conjectured an algorithmic description of the inverse kernel, $m_{\alpha'}(\alpha|\beta)$, that makes no reference to the string amplitudes (and their associated hypergeometric functions). In subsequent work, Mizera showed that his formulas for $m_{\alpha'}(\alpha|\beta)$ can be regarded as (twisted) intersection pairings of the associahedra that tile $\mc{M}_{0,n}(\mb{R})$. \cite{mizera1708} For this reason, it is interesting that $m_{\alpha'}(\alpha|\alpha)$ appears naturally in our present context, where we are concerned with intersecting dual associahedra. For further speculations about whether the two presentations are related, see section \ref{comments}.

\subsection{The lattice}
\label{lattice}
\begin{figure}
\begin{center}
\includegraphics[scale=0.40]{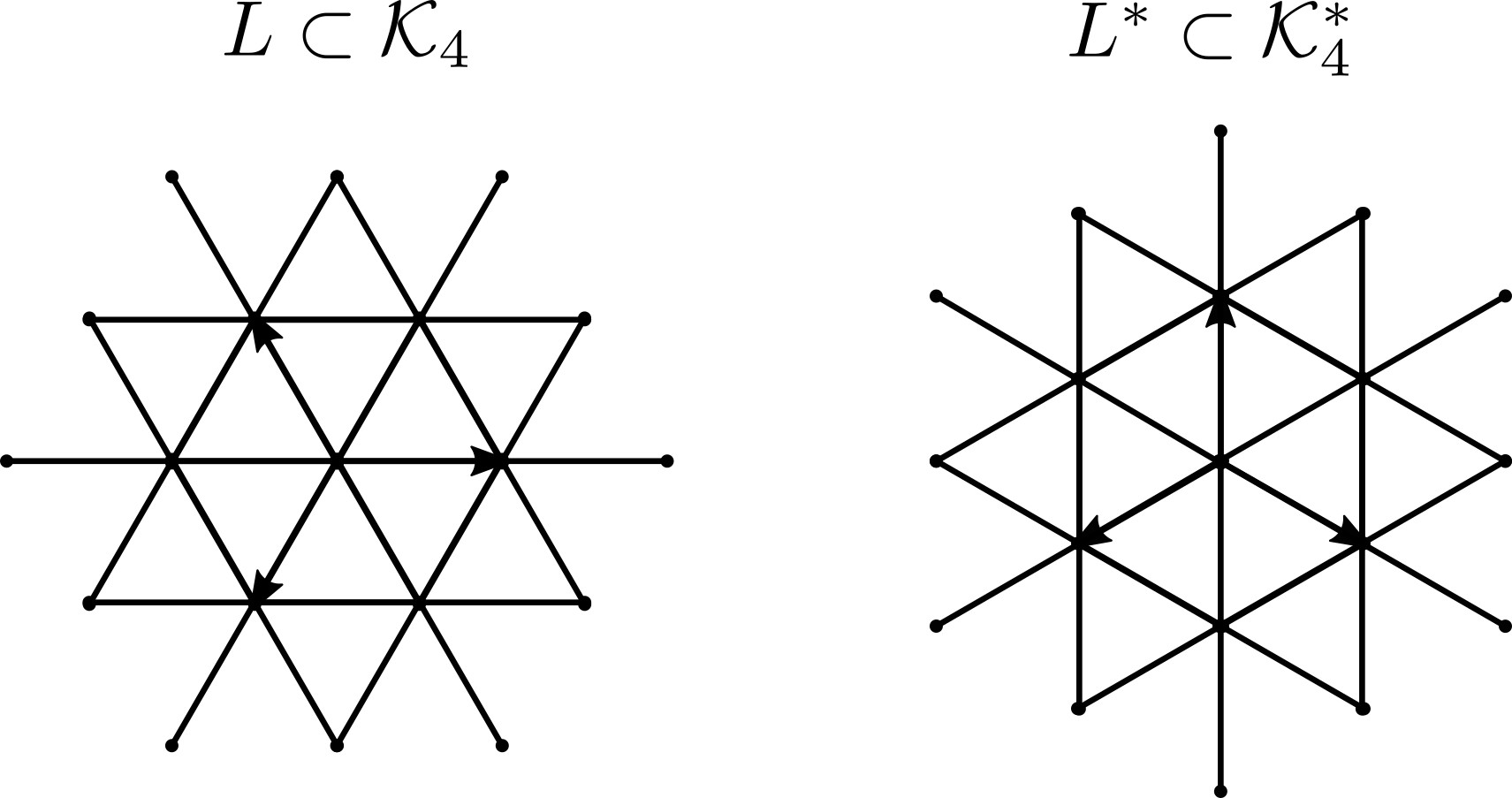}
\end{center}
\caption{The lattice of points with integer Mandelstam variables, $L$, and its dual lattice, $L^*$.}
\label{ex4lat}
\end{figure}
Recall from section \ref{review} that $\mc{V}_n = \mb{R}^{n(n-3)/2+1}$ is the vector space with coordinates $s_{ij}$ for all pairs $i,j$ strictly less than $n$. Let $\tilde L$ be the standard lattice $\mb{Z}^{n(n-3)/2+1} \subset \mc{V}_n$ of points for which the $s_{ij}$ are all integers. Given this, we have a lattice $L = \tilde L \cap \mc{K}_n$ in kinematic space.\footnote{It is clear that $\mc{K}_n$ is a rational subspace with respect to $L$. That is, we can choose a basis that spans $\mc{K}_n$ from among vectors lying in $L$. Indeed, for some ordering of the $s_{ij}$ we could adopt a basis of the form
$$
\{(1,-1,0,0,...),(0,1,-1,0,...),(0,0,1,-1,...),...\},
$$
which are all vectors in $L$ and span the hyperplane $\mc{K}_n$.} It turns out that one can choose generators for the cones $\mc{C}(\alpha)$ from among vectors in the lattice $L$.\footnote{The proof is as follows. The vectors $W_S$ defined in equation \eqref{gendef} are $L^*$-vectors, as shown in equation \eqref{show} of the main text. This means that the hyperplanes $W_S\cdot Z = 0$ in $\mc{K}_n$ are $L$-rational. The rays of the cone $\mc{C}(\alpha)$ are given by intersecting these hyperplanes. It follows that the rays are $L$-rational and, therefore, they are generated by some vector in $L$.} For this reason, the cones $\mc{C}(\alpha)$ are called `rational' cones, with respect to the lattice $L$. Dual to $L$, we have the lattice $L^*$ in $\mc{K}_n^*$. The cones $\mc{C}(\alpha)^*$ are generated by vectors in $L^*$. Indeed, the vectors $W_S$ that we have been using as generators for $\mc{C}(\alpha)^*$ are $L^*$-vectors. This is because
\begin{equation}
\label{show}
Z\cdot W_S \in \mb{Z} \qquad\text{for all}\qquad Z\in L,
\end{equation}
which is the condition for $W_S$ to be a vector in $L^*$. To illustrate all of this, consider again the example of four points. A basis for the lattice $L$ in $\mc{K}_3 \subset \mb{R}^3$ is
$$
L = \mb{Z} \left( \begin{bmatrix} 1 \\ 0 \\ -1 \end{bmatrix}, \begin{bmatrix} 0 \\ 1 \\ -1 \end{bmatrix} \right).
$$
Alternatively, we can employ the coordinates on $\mc{K}_3$ that we introduced in equation \eqref{hexa} (section \ref{reviewcones}). In these coordinates,
$$
L = \mb{Z} \left( \begin{bmatrix} 1/2 \\ \sqrt{3}/2  \end{bmatrix}, \begin{bmatrix} 1 \\ 0 \end{bmatrix} \right).
$$
This is just the triangular lattice, see figure \ref{ex4lat}.  On the other hand, the dual lattice is
$$
L^* = \mb{Z} \left( W_{12}, W_{23} \right),
$$
which is the triangular lattice rotated by $\pi/6$. Notice that $W_{13} = - W_{12} - W_{13}$, and so $W_{13}$ is also a vector in $L^*$.

\subsection{Preliminaries}
\label{stringyreview}
Before we can present our formula for $m_{\alpha'}(\alpha|\alpha)$ in theorem \ref{theorem'}, we need to introduce a sum over lattice points which is the direct analogy of the integral $I_C$ that we introduced in equation \eqref{Idef} (section \ref{pre}). Let $P$ be a polyhedron in $\mc{K}_n^*$. We are interested in the sum
$$
S_P(V,\alpha') = \sum_{W\in P\cap L^*} e^{-\alpha' W\cdot V}.
$$
Just as the integral $I_P$ is the related to the volume of $P$, the sum $S_P$ is related to the number of $L^*$-points contained in $P$. If $P$ is a bounded polytope, the number of $L^*$-points would be given by
$$
\#(P\cap L^*) = \lim_{\alpha'\rightarrow 0} S_P(V,\alpha').
$$
For now, let $\alpha'$ be positive and real. When $P$ is a cone, the sum $S_P$ is well defined if $V$ is in the interior of the dual cone $P^*$.\footnote{Recall that, for such a $V$, $W\cdot V\geq 0$ for all $W \in P$. The convergence of the sum $S_P(V,\alpha')$ then follows from the usual convergence statement:
$$
\sum_{n=0}^{\infty} x^n
$$
converges to $1/(1-x)$ if $|x| < 1$.} Now fix some cone
$$
C = \Cone(W_1,...,W_k),
$$
where $k \leq \dim \mc{K}_n^*$. The sum is particularly easy to evaluate if the vectors $W_i$ are all generators of the lattice $L^*$. One finds
\begin{equation}
\label{key}
S_C(V,\alpha') = \prod_{i=1}^k \frac{1}{1-e^{-\alpha' W_i\cdot V}},
\end{equation}
for $V \in C^*$.\footnote{See appendix \ref{appdiscrete}, for the analogous formulas when $W_i$ are not lattice generators and for when $k>\dim \mc{K}_n^*$. The general idea is to repeatedly apply
$$
\sum_{n=0}^{\infty} e^{-n\alpha' W_i\cdot Z} = \frac{1}{1-e^{-\alpha'W_i\cdot Z}}.
$$} In section \ref{statement} we studied the integral $I_{\p C}$ over the boundary, $\p C$, of a cone $C$. We conclude this subsection by evaluating the discrete sum $S_{\p C}$. We find a large number of new terms, when compared with $I_{\p C}$, because of the valuation property of the sum $S_P$. In effect, the sum $S_P$ `sees' not just the faces $\Cone(F)$ of the cone $C$, but also their intersections $\Cone(F_i)\cap \Cone(F_j)$. This turns out to be crucial for the realtionship with the inverse KLT kernel, which we explain in section \ref{stringyformula}.\\

\begin{lem}
\label{lemma'}
Let $C$ be a cone in the vector space $\mc{K}^*$ whose boundary is generated by some set of polytopes $\{F\}$ of the same dimension that intersect in polytopes of strictly lower dimension. Then
$$
S_{\p C}(Z,\alpha') = - \sum_{k=1} (-1)^k \sum_{F_1,...,F_k} S_{\Cone (F_1\cap ... \cap F_k)}(Z,\alpha'),
$$
where the summations are over all $k$-tuples, $\{F_1,...,F_k\}$, such that the intersection $F_1\cap ... \cap F_k$ is not empty.\\
\end{lem}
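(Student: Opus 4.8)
The plan is to prove Lemma~\ref{lemma'} by an inclusion--exclusion (Bonferroni) expansion, using that $S_\bullet$ is a valuation on the polyhedral pieces of $\p C$. By hypothesis $\p C=\bigcup_F\Cone(F)$, where the $\Cone(F)$ are the facets of the pointed cone $C$; fix $V$ in the interior of $C^*$, so that $W\cdot V\ge 0$ on all of $C$ and every sum below converges absolutely. The sole analytic input is the two-term valuation property $S_{A\cup B}=S_A+S_B-S_{A\cap B}$, which for fixed $V$ is just the identity of indicator functions $\mathbf{1}[W\in A\cup B]=\mathbf{1}[W\in A]+\mathbf{1}[W\in B]-\mathbf{1}[W\in A\cap B]$ summed against $e^{-\alpha'W\cdot V}$ over $W\in L^*$. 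Iterating over the cover $\{\Cone(F)\}$ of $\p C$ gives
$$
S_{\p C}(V,\alpha')=\sum_{k\ge 1}(-1)^{k+1}\sum_{\{F_1,\dots,F_k\}}S_{\Cone(F_1)\cap\cdots\cap\Cone(F_k)}(V,\alpha'),
$$
the inner sum over unordered $k$-element collections of facets. As $(-1)^{k+1}=-(-1)^k$, this already has the shape claimed; it remains to rewrite the cone intersections and to deal with the collections of empty intersection.

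For the first task I would establish the identity $\Cone(F_1)\cap\cdots\cap\Cone(F_k)=\Cone(F_1\cap\cdots\cap F_k)$ as a small lemma about fans: the facets of the pointed polyhedral cone $C$ together with all their faces form a fan, so the intersection of any collection of them is again a cone of this fan, namely a common face of each $\Cone(F_j)$. Since each polytope $F$ has $0\notin\Aff(F)$ (its $\Cone(F)$ being a facet of a pointed cone), the faces of $\Cone(F)$ are exactly the cones $\Cone(G)$ over the faces $G$ of $F$; with the hypothesis that the $\{F\}$ meet along common faces of strictly smaller dimension, one deduces that $\bigcap_j\Cone(F_j)=\Cone(\bigcap_j F_j)$, this being the apex $\{0\}$ exactly when $\bigcap_j F_j=\varnothing$. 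Substituting and re-indexing by $k$ converts the Bonferroni sum into the form in the statement, with the summation restricted to collections whose polytope intersection is nonempty.

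The step I expect to be the real obstacle is the bookkeeping of the excluded, degenerate collections: those with $F_1\cap\cdots\cap F_k=\varnothing$ but $\Cone(F_1)\cap\cdots\cap\Cone(F_k)=\{0\}$, which contribute $S_{\{0\}}=1$ to the full Bonferroni expansion and so must be shown to cancel. The natural device is to run the inclusion--exclusion instead on the punctured boundary $\p C\setminus\{0\}$, covered by the $\Cone(F)\setminus\{0\}$ --- there a collection has nonempty intersection precisely when $\bigcap_j F_j\neq\varnothing$, so the degenerate terms genuinely disappear and one obtains
$$
S_{\p C\setminus\{0\}}(V,\alpha')=\sum_{k\ge 1}(-1)^{k+1}\sum_{\substack{\{F_1,\dots,F_k\}\\ F_1\cap\cdots\cap F_k\neq\varnothing}}\bigl(S_{\Cone(F_1\cap\cdots\cap F_k)}(V,\alpha')-1\bigr),
$$
and then restores the lattice point $W=0$ via $S_{\p C}=S_{\p C\setminus\{0\}}+1$. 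Reassembling, the whole lemma comes down to a single combinatorial identity for the signed count $\sum_{k\ge 1}(-1)^{k+1}\#\{\text{collections of facets with nonempty polytope intersection}\}$ --- equivalently, the Euler characteristic of the nerve of the complex $\{F\}$, which by the nerve lemma (all relevant intersections being convex, hence contractible or empty) computes the Euler characteristic of $\p C\setminus\{0\}$. Carrying out this last topological accounting carefully, and in particular tracking the resulting constant precisely, is the part of the argument I would expect to need the most attention.
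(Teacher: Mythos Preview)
Your approach is the paper's approach: the paper's proof is a two-sentence sketch that invokes the valuation identity $S_{P\cup Q}=S_P+S_Q-S_{P\cap Q}$, writes $\p C=\bigcup_i\Cone(F_i)$, cites $\Cone(F_i)\cap\Cone(F_j)=\Cone(F_i\cap F_j)$, and declares the rest ``an exercise in induction.'' It does not touch the empty-intersection bookkeeping you single out.

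Your instinct that this is where the real content lies is correct --- and in fact it exposes that the lemma as literally stated is off by a constant. With the paper's own convention (every $\Cone$ contains the origin), collections with $\bigcap_j F_j=\varnothing$ still have $\bigcap_j\Cone(F_j)=\{0\}$ and contribute $S_{\{0\}}=1$ to the full Bonferroni expansion; excising them shifts the answer. Already at four points $\p C=\Cone(W_{12})\cup\Cone(W_{23})$ has true value $S_{\Cone(W_{12})}+S_{\Cone(W_{23})}-1$, while the lemma's right-hand side (only $k=1$ survives, since $W_{12}\cap W_{23}=\varnothing$) gives $S_{\Cone(W_{12})}+S_{\Cone(W_{23})}$. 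Your nerve argument pins down the discrepancy in general: the nerve of the facet cover of $\p\mc{A}(\alpha)^*$ is homotopy equivalent to $S^{n-4}$, so the alternating count you are chasing is $1-\chi(S^{n-4})=(-1)^{n-3}$, never zero. This does no real damage downstream --- Theorem~\ref{theorem'} compares $1-S_{\p C}$ against Mizera's formula, and both sides carry explicit constant terms that can absorb a $\pm 1$ --- but as a standalone lemma it needs either a convention such as $\Cone(\varnothing):=\varnothing$ or the correction your punctured-boundary computation supplies. You are being more careful than the paper here, not less; the ``obstacle'' you anticipate is genuine, and your proposed route through it is the right one.
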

\begin{proof}
The sum $S_P$ is a `valuation' in the sense that it behaves like a volume:
\begin{equation}
\label{valprop}
S_{P\cup Q} = S_P + S_Q - S_{P\cap Q},
\end{equation}
for two polyhedra $P$ and $Q$. The lemma is a consequence of this property, writing
$$
\p C = \bigcup_i \Cone(F_i).
$$
To arrive at the formula explicitly is an exercise in induction, making use of the observation that
$$
\Cone(F_i)\cap\Cone(F_j) = \Cone(F_i\cap F_j)
$$
and using standard set theory identities such as $(A\cup B)\cap X = (A\cap X)\cup(B\cap X)$.
\end{proof}

\subsection{The formula}
\label{stringyformula}
We will now observe that the sum $S_{\p \mc{C}(\alpha)}$, defined above, is related to the diagonal components $m_{\alpha'}(\alpha|\alpha)$ of the inverse KLT kernel.\\

\begin{thm}
\label{theorem'}
For a dual associahedron $\mc{A}(\alpha)^*$ canonically embedded in $\mc{K}_n^*$ (as described in section \ref{review}),
$$
S_{\Cone (\p \mc{A}(\alpha)^*)}(Z,2\pi i \alpha') = 1- m_{\alpha'}(\alpha|\alpha).
$$
The left hand side can be found by, for instance, evaluating $S_{\Cone \p \mc{A}(\alpha)^*}(Z,\alpha')$ and then analytically continuing the result by replacing $\alpha'$ with $2\pi i \alpha'$.
\end{thm}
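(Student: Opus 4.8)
The plan is to evaluate the left-hand side explicitly as a rational function of the $e^{-2\pi i\alpha' X_S}$ and then to recognise that function as $1-m_{\alpha'}(\alpha|\alpha)$ by comparison with the known combinatorial formula for the diagonal of the inverse KLT kernel. I would first evaluate $S_{\Cone(\p\mc{A}(\alpha)^*)}(Z,\alpha')$ for generic $\alpha'>0$ and $Z$ in the interior of $\mc{C}(\alpha)$; the latter is exactly the dual cone of $\Cone(\p\mc{A}(\alpha)^*)$ (since $\bigcup_T\Cone(W_S\mid S\in T)$ has dual $\bigcap_T\{Z\mid X_S\geq0,\ S\in T\}=\mc{C}(\alpha)$, as $\bigcup_T T$ is the set of all diagonals), so the sum converges there. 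By construction $\p\mc{A}(\alpha)^*$ is the union of its facets, one per $\alpha$-planar triangulation $T$, embedded as $\Conv(W_S\mid S\in T)$; hence $\Cone(\p\mc{A}(\alpha)^*)=\bigcup_T\Cone(W_S\mid S\in T)$ is a union of unimodular cones (the spanning boxes have unit volume, as used in equation \eqref{valdef2}), and any two of them meet along $\Cone(W_S\mid S\in T\cap T')$, the cone over their common sub-dissection.

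Next I would sort the $L^*$-points of $\Cone(\p\mc{A}(\alpha)^*)$ by their support, i.e. by the dissection $D$ with $W=\sum_{S\in D}c_S W_S$ and every $c_S\geq1$. This gives a disjoint decomposition of $L^*\cap\Cone(\p\mc{A}(\alpha)^*)$ indexed by the origin together with the nonempty faces of $\mc{A}(\alpha)$ (equivalently, the nonempty dissections of the $\alpha$-gon), and summing the geometric series as in equation \eqref{key} yields
$$
S_{\Cone(\p\mc{A}(\alpha)^*)}(Z,\alpha')\;=\;1+\sum_{\emptyset\neq D}\ \prod_{S\in D}\frac{1}{e^{\alpha' X_S}-1}.
$$
The same expression follows by instead applying Lemma \ref{lemma'} to $\p\mc{A}(\alpha)^*=\bigcup F$ and collapsing the inclusion--exclusion over tuples of facets into a sum over faces, the net cancellation being the Euler relation for the face lattice of the associahedron. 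Substituting $\alpha'\mapsto2\pi i\alpha'$ and using $W_S\cdot Z=X_S$, each factor becomes the monodromy factor $1/(e^{2\pi i\alpha' X_S}-1)$ familiar from the collision behaviour of the Koba--Nielsen factor, so the claim reduces to the identity
$$
\sum_{\emptyset\neq D}\ \prod_{S\in D}\frac{1}{e^{2\pi i\alpha' X_S}-1}\;=\;-\,m_{\alpha'}(\alpha|\alpha).
$$

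This last identification is the crux, and I expect it to be the main obstacle. I would establish it from Mizera's description of the inverse KLT kernel \cite{mizera1706,mizera1708}: $m_{\alpha'}(\alpha|\alpha)$ is the self-intersection number of the associahedral twisted cycle in $\mc{M}_{0,n}(\mb{R})$, and such a self-intersection localises on the boundary strata, contributing precisely one term per nonempty face of the associahedron --- a dissection $D$ --- dressed by the product over the diagonals $S\in D$ of the local monodromy factors, with the overall sign above. The real work is translating Mizera's moduli-space formula into the kinematic variables $X_S$ (which are minus the relevant Mandelstam invariants) and fixing all signs and the power of $2\pi i$; everything upstream is elementary bookkeeping with geometric series and the face lattice. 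Two consistency checks are worth carrying out: the $\alpha'\to0$ limit, where the most singular term comes from $D$ a full triangulation and reproduces $-(2\pi i\alpha')^{-(n-3)}m(\alpha|\alpha)$, matching theorem \ref{theorem} at $\beta=\alpha$; and the $n=4$ case, where the formula should collapse to an elementary expression in $\cot(\pi\alpha' s_{ij})$ that can be checked against the known four-point kernel. Finally, since both sides are meromorphic in $\alpha'$ and the Mandelstam variables, the identity proved on the convergence domain $Z\in\Int\mc{C}(\alpha)$ extends everywhere by analytic continuation --- exactly the continuation $\alpha'\mapsto2\pi i\alpha'$ named in the statement.
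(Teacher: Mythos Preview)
Your proposal is correct and follows essentially the same route as the paper: compute the lattice sum over $\Cone(\p\mc{A}(\alpha)^*)$ explicitly as a sum over nonempty dissections $D$ of the $\alpha$-gon, then identify the resulting expression (after $\alpha'\mapsto 2\pi i\alpha'$) with Mizera's face-sum formula for $m_{\alpha'}(\alpha|\alpha)$ (equation (4.19) of \cite{mizera1708}, originating in \cite{ky94}). The only difference is bookkeeping: the paper reaches the face sum via Lemma \ref{lemma'} (inclusion--exclusion over tuples of facets), whereas your primary argument is the cleaner disjoint ``support'' decomposition of the lattice points; you even note the Lemma \ref{lemma'} route as an alternative and that the Euler relation collapses it to the same thing.
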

\begin{proof}
This follows directly from lemma \ref{lemma'}. Indeed, combining the lemma with equation \eqref{key},
$$
1-S_{\Cone (\p \mc{A}(\alpha)^*)}(Z,\alpha') = 1 + \sum_{k=1} \sum_{F_1,...,F_k} \prod_{i=1}^{n-2-k} \frac{1}{e^{-\alpha' W_{F(i)}\cdot Z}-1}.
$$
Since the sum only includes $k$-tuples $(F_1,...,F_k)$ such that $F_1\cap...\cap F_k \neq \emptyset$, any $F_1\cap...\cap F_k$ appearing in the sum has $n-2-k$ vertices $W_{I_a}$. Then $W_{I_a}\cdot V = - S_{I_a}$, where $S_{I_a}$ are the $n-2-k$ propagators associated to the codimension-$k$ face $F_1\cap ... \cap F_k$. (Dually, they are associated to a codimension-$(n-2-k)$ face in $\mc{A}(\alpha)$.) Then
$$
1-S_{\Cone (\p \mc{A}(\alpha)^*)}(Z,2\pi i\alpha') = 1 + \sum_{k=1} \sum_{F_1,...,F_k} \prod_{a=1}^{n-2-k} \frac{1}{e^{2\pi i\alpha' S_{I_a}\cdot Z}-1}.
$$
This is the expression for $m_{\alpha'}(\alpha|\alpha)$ discussed by Mizera in \cite{mizera1708} (equation (4.19) of Mizera's paper). The formula first appears in \cite{ky94}.
\end{proof}

We emphasize here that we have not found a natural interpretation for the off-diagonal elements $m_{\alpha'}(\alpha|\beta)$ in terms of the lattice sum. See section \ref{comments} for some speculations. To illustrate theorem \ref{theorem'}, consider the four point example. For instance,
$$
\p \mc{A}(1234)^* = \{W_{12}\}\cup\{W_{23}\}.
$$
It follows that
$$
1 - S_{\Cone(W_{12})} - S_{\Cone(W_{23})}  = 1 + \frac{1}{e^{-\alpha' s_{12}} - 1} + \frac{1}{e^{-\alpha' s_{23}} - 1}.
$$
After an analytic continuation, $\alpha' \mapsto 2 \pi i \alpha'$, this expression becomes
$$
m_{\alpha'}(1234|1234) = - \frac{1}{2i\tan (\pi \alpha' s_{12})} - \frac{1}{2i\tan (\pi \alpha' s_{23})},
$$
which is the formula also given by Mizera. We recover the amplitude $m(1234|1234)$ from the pole in $\alpha'$,
$$
m(1234) = \oint \d\alpha' m_{\alpha'}(1234) = \frac{1}{-s_{12}} + \frac{1}{-s_{23}},
$$
which follows since $1/\tan(x) \simeq 1/x + \mc{O}(x^0)$.

\section{Further comments}
\label{comments}
The main result in this paper is the formula for $m(\alpha|\beta)$ presented in section \ref{statement}, theorem \ref{theorem}. The formula is based on an embedding of dual associahedra $\mc{A}(\alpha)^*$ into dual kinematic space $\mc{K}_n^*$. The amplitude $m(\alpha|\beta)$ can then be expressed in terms of the shared faces of $\mc{A}(\alpha)^*$ and $\mc{A}(\beta)^*$. Arguably, what we have done is ``trivial'' in the sense that the dual associahedra do no more than express Feynman diagrammatics in a geometric setting. The intersecting faces of the dual associahedra are just a fancy way to describe a sum over trivalent graphs which are $(\alpha,\beta)$-planar. There is, however, something interesting about the new presentation. The embedded associahedra $\mc{A}(\alpha)^*$ tile a larger object in dual kinematic space. This is the object pictured in figure \ref{ex4c} for four points, where it is a triangle, and in figure \ref{ex5net} for five points, where it resembles a degenerate or `halved' dodecahedron. What is this object in general? And, given the role that $m(\alpha|\beta)$ plays in the double copy relation, how can this object be related to Yang-Mills and gravity amplitudes? We offer some speculations below, and describe many unresolved loose ends.

\paragraph{The open string moduli space.}
In section \ref{embed} we introduced $(n-1)!/2$ embeddings of the boundary of the dual associahedron into $\mc{K}_n^*$. We defined the union of these embeddings,
\begin{equation}
\label{ppp}
FN_n = \bigcup_{\alpha}\,\p\mc{A}(\alpha)^* \subset \mc{K}_n^*.
\end{equation}
$FN_n$ does not bound a convex polytope in $\mc{K}_n^*$, except when $n=4$ (in which case it bounds an equilateral triangle). The first loose end is to study $FN_n$ in its own right and explain how it is related to $\mc{M}_{0,n}(\mb{R})$.  The open string moduli space, $\mc{M}_{0,n}(\mb{R})$, after compactification, is tiled by $(n-1)!/2$ copies of the associahedron. (See, for instance, theorem 3.1.3 of \cite{devadoss}.) There is a duality map from each associahedron $\mc{A}(\alpha)$ in $\mc{M}_{0,n}(\mb{R})$ to the corresponding dual associahedron $\mc{A}(\alpha)^*$. In this sense, $FN_n$ is something like a `dual' of the open string moduli space. With some effort, this could probably be made into a precise statement about dual polytopes.\footnote{A possible route would be to exploit the relationship of  $\mc{M}_{0,n}(\mb{R})$ to the permutoassociahedron (introduced by Kapranov in \cite{kapranov93}) which can be realised as a convex polytope, as shown in \cite{zr94}. As a convex polytope, we can take the polytope dual of the permutoassociahedron, from which we might be able to recover $FN_n$.} A duality statement might help us to present a relationship between the intersecting dual associahedra in this paper and the (twisted) intersections of associahedra that have been explored recently by Mizera in \cite{mizera1708}.

\paragraph{The permutohedron.}
\begin{figure}
\begin{center}
\includegraphics[scale=0.40]{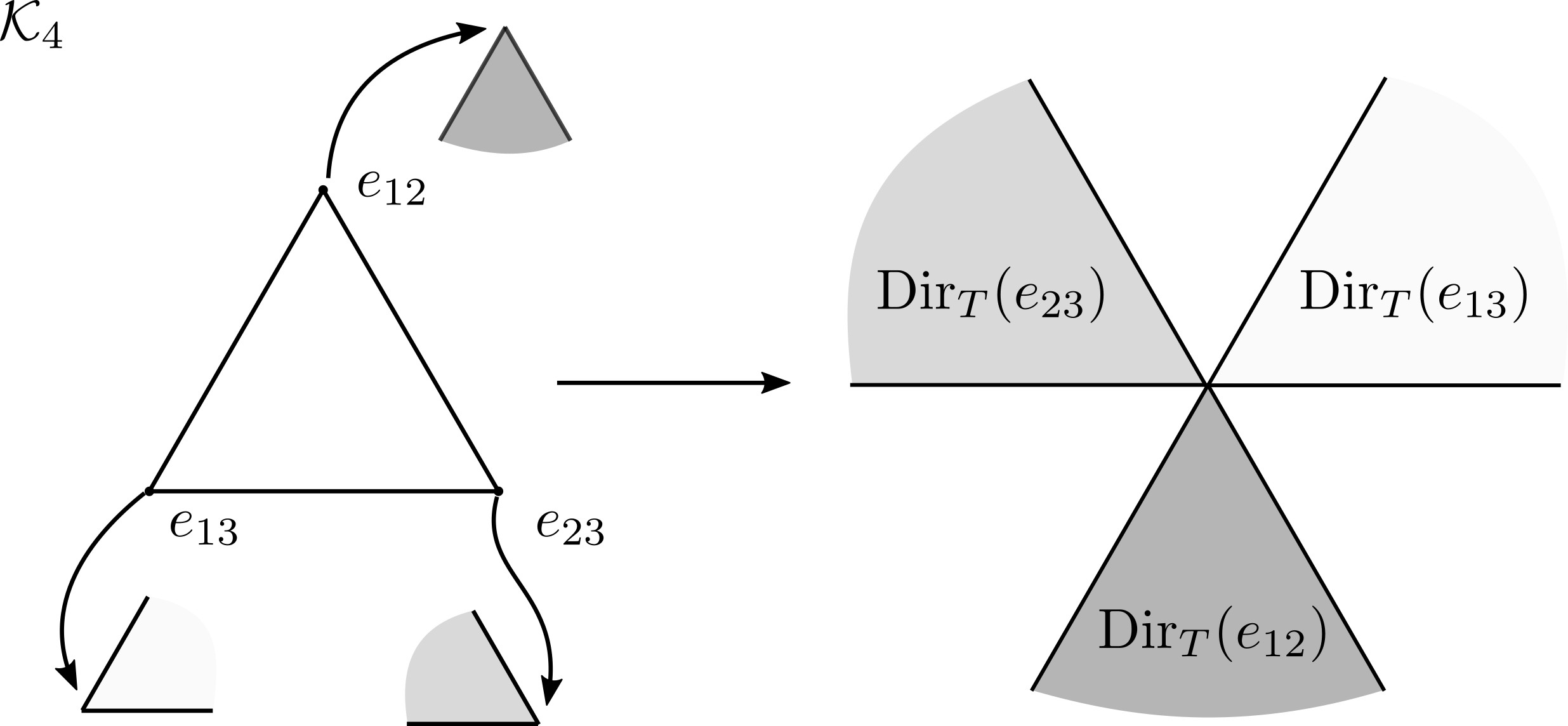}
\end{center}
\caption{At four points, the cones $\mc{C}(\alpha)$ defined by AHBHY arise as the tangent cones to a triangle. In conjecture \ref{conjecture}, we speculate that this has a generalisation to higher points.}
\label{exd1}
\end{figure}
A second loose end concerns the cones $\mc{C}(\alpha)$ in kinematic space $\mc{K}_n$ introduced by AHBHY. There are $(n-1)!/2$ cones cut out by
$$
2^{n-1} - n -1
$$
hyperplanes. At four points, the three cones, $\mc{C}(\alpha)$, are the `tangent cones' of a triangle.\footnote{In general, the tangent space at the vertex of a polytope is a cone. See section \ref{oldcones} and especially equation \eqref{dirrr} for the definitions.} Indeed, consider the triangle 
$$
T_4 = \Conv \left(e_{12},e_{13},e_{23}\right)
$$
in $\mc{K}_4$. The vectors $e_{12}$ and $e_{23}$ are as in equation \eqref{hexa}, while $e_{13}$ is defined as $-e_{12}-e_{23}$. The three vertices of $T_4$ have tangent cones such as, at the vertex $e_{12}$,
$$
\Cone(e_{13}-e_{12},e_{23}-e_{12}).
$$
But this cone is just the cone $\mc{C}(2314)$ as defined by AHBHY. Indeed, we see that $\mc{C}(1234)$ is the tangent cone of $T_4$ at $e_{23}$ and $\mc{C}(3124)$ is the tangent cone of $T_4$ at $e_{13}$. We illustrate the idea in figure \ref{exd1}. This leads us to a conjecture.\\

\begin{con}
\label{conjecture}
AHBHY's cones $\mc{C}(\alpha) \subset \mc{K}_n$ arise as the tangent cones of a polytope $P_n$ with $(n-1)!/2$ vertices.\\
\end{con}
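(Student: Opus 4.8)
The natural approach is to pass to the dual picture and recognise the conjecture as a statement about fans. Since $\mc{C}(\alpha)$ is a closed polyhedral cone, $\mc{C}(\alpha)=(\mc{C}(\alpha)^*)^*$, so $\mc{C}(\alpha)$ is the (translated) tangent cone at a vertex of a polytope $P_n\subset\mc{K}_n$ exactly when $\mc{C}(\alpha)^*$ is the inner normal cone of $P_n$ at that vertex. Thus the conjecture is equivalent to the statement that the cones $\mc{C}(\alpha)^*$, together with all of their faces, form the inner normal fan of a polytope $P_n$: granting this, $P_n$ has one vertex for each of the $(n-1)!/2$ maximal cones, and its tangent cone there is $\mc{C}(\alpha)$. (For $n=4$ this is precisely figure \ref{exd1}, with $P_4=T_4$.) A complete fan is a normal fan if and only if it is regular, i.e.\ carries a strictly convex piecewise-linear function, so I would split the work into (A) showing that $\{\mc{C}(\alpha)^*\}$ and their faces form a complete fan in $\mc{K}_n^*$, and (B) showing that this fan is regular and reading off $P_n$ (equivalently, its support function).

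The structural half of (A) is easy. By the excerpt, $\mc{C}(\alpha)^*=\Cone(W_S : S \text{ consecutive with respect to }\alpha)$ is a full-dimensional simplicial cone whose $n(n-3)/2$ rays are indexed by the diagonals $S$ of the $n$-gon consecutive with respect to $\alpha$, and whose faces are the conic hulls of subsets of these rays. Hence any two of these cones, $\mc{C}(\alpha)^*$ and $\mc{C}(\beta)^*$, meet in $\Cone(W_S : S \text{ consecutive with respect to both }\alpha\text{ and }\beta)$, which is a common face of both, so the face-compatibility axiom of a fan holds automatically; the families $\{S : S \text{ consecutive with respect to }\alpha\}$ are exactly the maximal families of subsets of $\{1,\dots,n-1\}$ admitting a common ``consecutive-ones'' linear order, and the moves relating them are the usual flips of $\alpha$-planar trivalent graphs, so the relevant connectedness is also clear.

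The main obstacle is the \emph{completeness} of $\{\mc{C}(\alpha)^*\}$. For $n=4$ this holds, because any two of the three cones share a ray (a codimension-one face) and so they tile the plane, producing the triangle $T_4$. For $n\geq5$ it fails: no two distinct maximal consecutive-ones families of subsets of $\{1,\dots,n-1\}$ differ in exactly one member --- removing a member from such a family still leaves enough adjacent pairs to rigidify the underlying linear order, so the only subset one can add back while preserving the consecutive-ones property is the one just removed --- and hence any two distinct cones $\mc{C}(\alpha)^*,\mc{C}(\beta)^*$ meet only in codimension $\geq2$. But finitely many full-dimensional simplicial cones with pairwise-disjoint interiors meeting only in codimension $\geq2$ cannot cover the ambient space, so the $(n-1)!/2$ cones $\mc{C}(\alpha)^*$ do not tile $\mc{K}_n^*$ for $n\geq5$, and there is no \emph{convex} polytope having exactly these as its normal cones. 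The conjecture therefore has to be read with a more flexible notion of $P_n$, which is presumably why it is posed only as a conjecture. I see two plausible ways around this: shell the $\mc{C}(\alpha)$ together, guided by the flip combinatorics, into a possibly non-convex PL-sphere $\Pi_n$ whose vertex-stars are the $\mc{C}(\alpha)$, and then argue separately that $\Pi_n$ bounds a convex body; or identify $P_n$ with a polytope already known to be convex --- the polytope dual of Kapranov's permutoassociahedron, or a suitable affine section of the secondary polytope of the $n$-gon, as hinted at in section \ref{comments} --- and transport its realisation back.

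Concretely, to produce a candidate $P_n$ I would generalise AHBHY's own construction: fix positive constants $c_S$, one for each diagonal $S$ of the $n$-gon; for each $\alpha$ let $v_\alpha\in\mc{K}_n$ be the unique point with $X_S(v_\alpha)=c_S$ for all $S$ consecutive with respect to $\alpha$ (the defining hyperplanes are independent, so $v_\alpha$ exists); and set $P_n=\Conv\{v_\alpha\}$. In support-function language the task is to choose $\{c_S\}$ so that the piecewise-linear function taking the value $c_S$ on the ray $\Cone(W_S)$ is well defined (a linear relation among the $c_S$ at each flip) and strictly convex across every wall (a strict inequality at each flip); AHBHY's conditions achieve this one ordering at a time, and the open, hard point is to exhibit a single globally consistent family $\{c_S\}$ and to show it forces every $v_\alpha$ to be a genuine vertex of $P_n$ with cone $\mc{C}(\alpha)$. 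In light of the obstruction above, this can only succeed once the correct weakened notion of ``polytope'' has been fixed, and verifying convexity of whatever object one then writes down --- since complete fans can fail to be polytopal --- is where I expect essentially all of the difficulty to lie.
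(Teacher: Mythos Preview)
This statement is a \emph{conjecture} in the paper and is left open there: the paper verifies it only at $n=4$ (the triangle $T_4$), observes that convexity of $P_n$ would force the covering $\bigcup_\alpha\mc{C}(\alpha)^*\simeq\mc{K}_n^*$ (equation \eqref{provemewrong}) via theorem \ref{dualcones}, and then speculates about generalised permutohedra and the permutoassociahedron as candidates for $P_n$, without settling anything for $n\ge5$. So there is no proof in the paper to compare against; your proposal is an attack on an open problem.

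Your dual reformulation (tangent cones of $P_n$ $\leftrightarrow$ inner normal fan built from the $\mc{C}(\alpha)^*$) is exactly the viewpoint the paper adopts, and your candidate constructions overlap with the paper's own speculations. Where you go beyond the paper is the claim that for $n\ge5$ no \emph{convex} $P_n$ can exist because the $\mc{C}(\alpha)^*$ fail to tile $\mc{K}_n^*$. Your combinatorial lemma---that for $n\ge5$ removing a single member from the family of $\alpha$-consecutive subsets still determines $\alpha$ up to reversal, so distinct maximal families never differ in exactly one element---is correct and is a sharper statement than anything in the paper.

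The gap is in the geometric step. You assert that $\mc{C}(\alpha)^*\cap\mc{C}(\beta)^*=\Cone\{W_S:S\text{ consecutive for both}\}$, so that the dual cones form a fan with pairwise disjoint interiors meeting only in codimension $\ge2$. That identity would follow if the vectors $\{W_S:S\text{ consecutive for }\alpha\text{ or }\beta\}$ were linearly independent, but they are not: already at $n=5$ any two distinct orderings share at most three of their five generators, so together they involve at least seven of the ten vectors $W_S$ in the five-dimensional space $\mc{K}_5^*$. The resulting linear relations among the $W_S$ could in principle make $\mc{C}(\alpha)^*\cap\mc{C}(\beta)^*$ strictly larger than the cone on the common rays---possibly even full-dimensional---and your non-coverage argument relies on the interiors being disjoint. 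This is precisely the point the paper flags as unresolved, and your proposal does not yet close it; one would need either an explicit check of these intersections (or a solid-angle count) or a structural argument specific to the $W_S$.
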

\begin{figure}
\begin{center}
\includegraphics[scale=0.40]{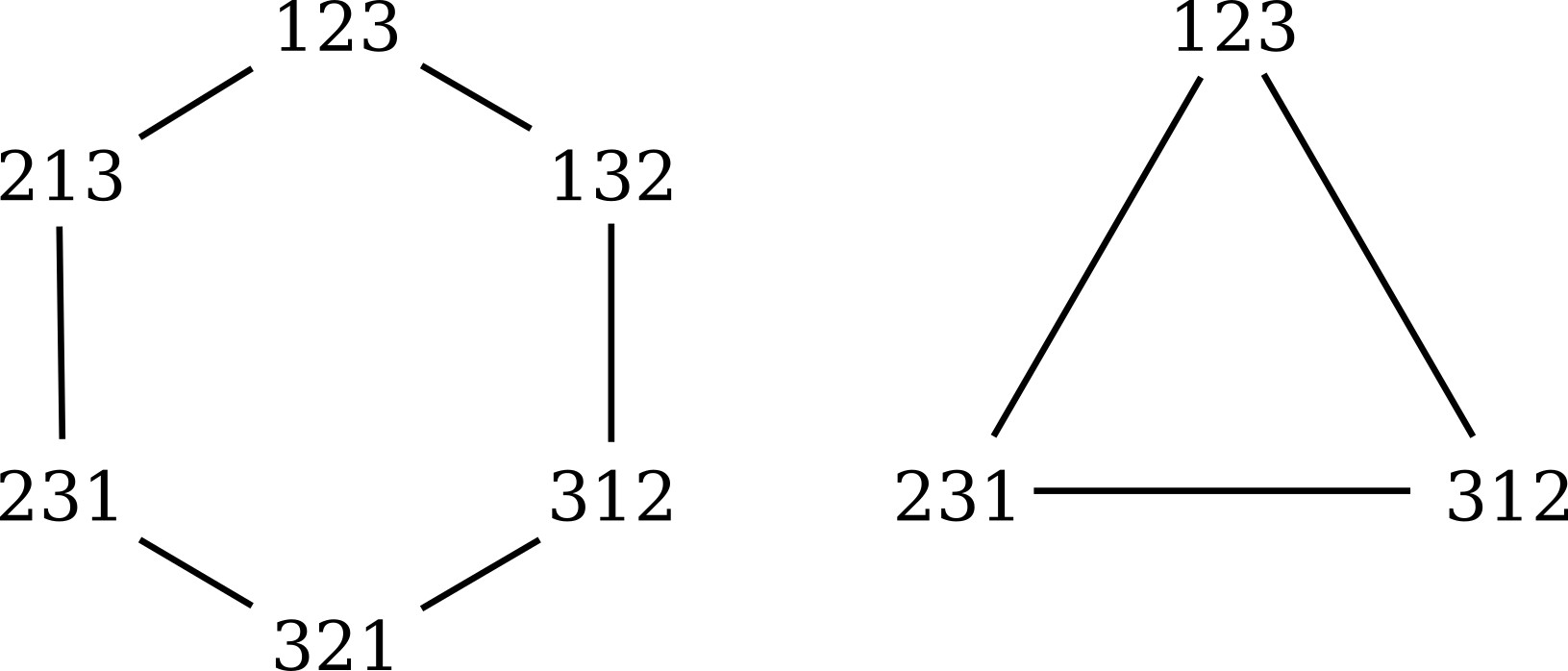}
\end{center}
\caption{The permutohedron on three letters is the hexagon shown on the left. This can be collapsed into a `degenerate' or `generalised' permutohedron on three letters which ignores flips. This is the triangle on the right.}
\label{exd2}
\end{figure}
A polytope in $N$ dimensions must have at least $N+1$ vertices. So, in support of the conjecture, notice that $(n-1)!/2$ is greater than  $\dim \mc{K}_n$ for all $n \geq 4$. If the polytopes $P_n$ do exist, it is not clear whether or not they should be convex polytopes. At four points, the triangle is a convex polytope. We do not know if this generalises. A possible way to test convexity is theorem \ref{dualcones} in appendix \ref{oldcones}. This theorem states that the union of dual tangent cones of a convex polytope covers all of the dual vector space. In our case, this would mean that
\begin{equation}
\label{provemewrong}
\bigcup_{\alpha} \mc{C}(\alpha)^* \simeq \mc{K}_n^*.
\end{equation}
This is certainly true for $n=4$ (see figure \ref{ex4c}). But if equation \eqref{provemewrong} is false for any $n$, this would mean that the polytopes $P_n$ are not always convex. A possible candidate for the polytopes $P_n$ are generalised permutohedra. The permutohedron can  be realised by considering the orbit of the symmetric group $\mf{G}_n$ on a generic point in $\mb{R}^n$. That is, for a point $(x_1,...,x_n) \in \mb{R}^n$, consider all the points
$$
(x_{\sigma(1)},...,x_{\sigma(n)})
$$
obtained for permutations $\sigma$. Taking the convex hull of these points gives a polytope, which is the permutohedron. For instance, the permutohedron on three letters is the hexagon. Indeed, consider the point $(c,-c,0)$ in $\mc{V}_3$. Under the symmetric group action, the orbit of this point are the six points in $\mc{K}_3$ at the boundaries of the associahedra $\mc{A}(\alpha)$. The convex hull of these points is a hexagon (refer back to figure \ref{ex1a}, for instance).\footnote{It is clearly a general fact that the orbit of a point $(\lambda_1,...,\lambda_n)$ lies in the hyperplane $\lambda_1+...+\lambda_n = \text{const.}$. So the orbits of points in the hyperplane $\mc{K}_n$ remain in $\mc{K}_n$.} `Generalised' permutohedra can be obtained by taking the orbits of non-generic points (for which some of the coordinates are equal to each other) or by translating the faces. \cite{postnikov05} For instance, we could take the orbit of the point $(1,1,-2)$ in $\mc{K}_4$, which gives a triangle. (We show these two permutohedra, the hexagon and the triangle, in figure \ref{exd2}.) At higher points, one can likewise obtain generalised permutohedra on $n-1$ letters that have the required number of $(n-1)!/2$ vertices. The polytopes $P_n$ conjectured in conjecture \ref{conjecture} may well be realisations of these generalised permutohedra.\footnote{A result of reference \cite{lmh} says that $P_n$ is a generalised permutohedron if its face lattice refines the Weyl fan. For work on expanding cones as a sum of Weyl fans (or their duals, called `plates'), see N. Early's recent paper. \cite{early}}

\paragraph{The double copy.}
The amplitudes $m(\alpha|\beta)$ form the inverse kernel for the field theory double copy relation
\begin{equation}
\label{ftklt}
M_n = \sum_{\alpha,\beta} A(\alpha) m^{-1}(\alpha|\beta) A(\beta)
\end{equation}
between gravity amplitudes $M_n$ and Yang-Mills partial amplitudes $A(\alpha)$. Whilst there are many presentations of the amplitudes $m(\alpha|\beta)$, the emphasis given by AHBHY, and pursued in this paper, is combinatorial. This prompts a question: can the Yang-Mills partial amplitudes $A(\alpha)$ be related to a combinatorial object? And can the field theory KLT relation, equation \eqref{ftklt}, be given a combinatorial interpretation? Speculations of this kind have been widespread ever since Bern-Cachazo-Johansson suggested that Yang-Mills partial amplitudes can be written in the form
$$
A_{YM} = \sum_{\alpha,\beta} n(\alpha)c(\beta)m(\alpha|\beta),
$$
where $c(\beta)$ are colour factors and $n(\alpha)$ are numerators that obey Jacobi-type relations. \cite{bcj08} The field theory KLT relation, equation \eqref{ftklt}, then reads
$$
M_n = \sum_{\alpha,\beta} n(\alpha)n(\beta)m(\alpha|\beta).
$$
A combinatorial presentation of these identities would perhaps be interesting, especially if it relied on $FN_n$ (discussed following equation \eqref{ppp}) which is, in some sense, dual to $\mc{M}_{0,n}(\mb{R})$.

\paragraph{Kawai-Lewellen-Tye.}
In section \ref{statement'}, we observed that the diagonal components $m_{\alpha'}(\alpha|\alpha)$ of the inverse KLT kernel are related to point counting in the cones generated by the dual associahedra $\mc{A}(\alpha)^*$. Is this appearance of $m_{\alpha'}(\alpha|\alpha)$ a fluke? Or is it related to something more interesting? To begin with, can the off-diagonal entries $m_{\alpha'}(\alpha|\beta)$ be related to the intersections of the dual associahedra? To answer this, it may be helpful to find an explicit map between $FN_n$ and the open string moduli space. Alternatively, the importance of the lattice $L$ suggests that we look for answers in toric geometry. As emphasised below, in `the toric dictionary,' our formulas for $m(\alpha|\beta)$ have a natural interpretation in terms of toric geometry. What is the toric interpretation of our formula for $m_{\alpha'}(\alpha|\alpha)$? One might hope that intersection theory on the appropriate toric variety could be used to encode the full inverse KLT kernel|though this is mere speculation. It would then be necessary to understand how our intersecting dual associahedra (or toric cycles) are related to Mizera's presentation of $m_{\alpha'}(\alpha|\beta)$ as the intersection of associahedra in $\mc{M}_{0,n}(\mb{R})$. It would be interesting to make any of this precise and to elaborate the relation of point counting to the string theory KLT relations.
 
\paragraph{The toric dictionary.}
We conclude by describing the natural correspondence between cones and toric varieties. Under this correspondence, theorem \ref{theorem} takes on a new significance: the amplitudes $m(\alpha|\beta)$ can be understood as coming from an integral over subvarieties in some toric variety. These integrals can be explicitly evaluated using the Duistermaat-Heckman (`localisation') formula. In this paragraph, we will briefly describe the toric dictionary as it bears on our results in section \ref{statement}. Some more details are included in appendix \ref{oldlattice}, but the relevant facts are as follows. Given a cone $C^*\subset \mc{K}_n^*$ and a lattice $L^* \subset \mc{K}_n^*$, a standard construction produces an associated variety $X_{C^*}$ with a toric action. The torus $T_{C^*}$ that acts on $X_{C^*}$ may be identified with the unit cell of the lattice $L$ in $\mc{K}_n$. Given these identifications, the moment map for the $T_{C^*}$ action is a map
$$
\mu : X_{C^*} \rightarrow \Lie(T_{C^*})^* = \mc{K}_n^*
$$
and the image of this map turns out to be $C^*$. This means that we can write
$$
I_{C^*}(Z,1) = \int\limits_{C^*} e^{- Z \cdot W} \d W|_{C^*} = \int\limits_{X_{C^*}} e^{- Z\cdot \mu(x)} \d x,
$$
where $\d x$ is the pushforward of $\d W$. The torus action $T_{C^*}$ has a fixed point $x_o \in X_{C^*}$. If the cone $C^*$ is based at the origin, $\mu(x_o) = 0$. The Duistermaat-Heckmann formula then evaluates the integral as
$$
\int\limits_{X_{C^*}} e^{- Z\cdot \mu(x)} \d x = \frac{1}{\det_{x_o}(-Z)},
$$
where $Z$ is regarded as a generator for a 1-parameter subgroup of $T_{C^*}$ and acts on $X_{C^*}$ at $x_o$. It turns out that this is just a toric version of the formula
$$
\int\limits_{C^*} e^{- Z \cdot W} \d W|_{C^*} = \frac{1}{\prod_{i} ( Z\cdot W_i)},
$$
where $C^* = \Cone (W_1,W_2,...)$. Things get more interesting when more than one cone is involved. A union of cones is called a fan (provided that the intersections of the cones are cones). Consider a fan formed from multiple cones $C^*,D^*,...$ in $\mc{K}_n^*$. A standard construction associates a toric variety to this fan. This is constructed by glueing together the toric varieties $X_{C^*}, X_{D^*},...$ associated to each cone. If the cones $C^*$ and $D^*$ intersect in a cone $C^*\cap D^*$, then $C^*\cap D^*$ defines a subvariety $X_{C^*\cap D^*}$ in both $X_{C^*}$ and $X_{D^*}$. We can then glue $X_{C^*}$ and $X_{D^*}$ together by identifying this subvariety. Returning to the context of section \ref{statement}, let $Y$ be the toric variety associated to the fan $\Cone (FL_n)$. (See section \ref{embed} for $FL_n$.) Then all cones $C^* \subset FN_n$ give rise to subvarieties $X_{C^*}$ in $Y$. If $C^*$ is the intersection of two other cones, then $X_{C^*}$ is also an intersection of toric subvarieties. Since all the n-point amplitudes $m(\alpha|\beta)$ can be written in terms of the integrals $I_{C^*}$ for various $C^* \subset FN_n$, we can rewrite theorem \ref{theorem} as
\begin{equation}
\label{swp}
m(\alpha|\beta) = \int\limits_{X_{\alpha\beta}} e^{-Z\cdot \mu(x)} \d x,
\end{equation}
where $X_{\alpha\beta}$ is some cycle (or toric subvariety) in $Y$.\footnote{$X_{\alpha\beta}$ is homologous to what we might write as $X_{\Cone (\p \mc{A}(\alpha)^*)}\cap X_{\Cone (\p \mc{A}(\beta)^*)}$.} This rewriting of theorem \ref{theorem} is strictly tautological, but it may suggest new ways forward. In particular, notice that the exponent in the integrand, $-Z \cdot \mu(x)$, is a (perfect) Morse function on the toric variety $Y$. (See \cite{atiyah}, for instance.) In fact, we get a whole family of Morse functions by varying $Z$. Given these observations, we might consider arriving at the amplitudes $m(\alpha|\beta)$ using a supersymmetric-quantum-mechanics model with $Y$ as its target space (in the manner of \cite{witten} or \cite{fln}). It is not clear whether such a model would be useful for understanding the double copy relation. Moreover, it remains to work out whether or not intersections in the toric variety $Y$ are related to the components $m_{\alpha'}(\alpha|\beta)$ of the inverse KLT kernel.

\appendix
\section{Review of polyhedral cones and convex polytopes}
\label{old}
This appendix reviews some facts about polyhedral cones in vector spaces and in discrete lattices. This topic is closely related to toric geometry and a fast-paced review of polyhedral cones appears in Appendix A of Oda's textbook on toric geometry. \cite{oda88} Ref. \cite{bp99} begins with a substantial review of polyhedra in lattices. I have also relied heavily on \cite{brion88}. Section \ref{em} on Euler-Maclaurin formulas presents a recent result that first appears in \cite{bv05} following earlier work in \cite{bv97a} and \cite{bv97b}. Useful lecture notes on these subjects appear on A. Barvinok's university webpage, some of which have been published in book form, \cite{barvinok08}.

\subsection{Cones}
\label{oldcones}
Let $V$ be a vector space of dimension $\dim V$. A \emph{polyhedron} is a subset $P\subset V$ defined by finitely many linear inequalities and linear equalities. In other words, a polyhedron is the intersection of some planes and half-spaces. In particular, a \emph{polytope} is a bounded polyhedron. A polytope $P$ is a \emph{convex polytope} if for all $Z,Z'\in P$, the midpoint $(Z+Z')/2$ is in $P$, too. One way to produce convex polytopes, is from their vertices. Take some vectors $Z_1,...,Z_k\in V$ as the vertices of a convex polytope $P$. Then one way to express $P$ is as the \emph{convex hull} of these points
$$
P = \Conv (Z_1,...,Z_k) = \left\{ \sum_{i=1}^k c_i Z_i \,|\, \sum_{i=1}^k c_i = 1 ~~\text{and}~~0\leq c_i \leq 1\,\forall\,i\right\}.
$$
For any polyhedron $P$, we can define the affine space containing $P$, $\Aff(P)$, as the smallest hyperplane in $V$ containing all of $P$. By the \emph{interior} of $P$, $\Int(P)$, we mean the interior of $P$ regarded as a subset of $\Aff(P)$. The \emph{boundary} of $P$ is then $\p P = P\backslash\Int(P)$. The study of polyhedra can often be reduced to a study of `polyhedral cones' using theorem \ref{conicdecomposition}, which we will come to shortly. In general, a \emph{cone} is a subset $S\subset V$ which is (i) convex, and (ii) for $Z\in P$, $\lambda Z$ is also in $P$ for all positive real numbers $\lambda$. A cone is a \emph{polyhedral cone} if it has flat sides. (i.e. if the cone is also a polyhedron in the sense described above.) One way to produce polyhedral cones is by taking the \emph{conic hull} of some vectors,
\begin{equation}
\label{defconichull}
\Cone (Z_1,...,Z_k) = \left\{ \sum_{i=1}^k c_i Z_i \,|\, 0\leq c_i \,\forall\,i\right\}.
\end{equation}
A cone is a \emph{pointed cone} if its apex is a point. This means that a pointed cone contains no fully extended line (i.e. it contains only half-lines). A polyhedron $P$ can be decomposed into cones. Let $\Ver(P)$ be the vertices of $P$. For any vertex $Z \in \Ver(P)$ define the \emph{tangent cone}
$$
\Tan_P(Z) = \{ Z+X\,|\, Z + \lambda X \in P ~~\text{for some}~~\lambda > 0 \}
$$
and the \emph{cone of directions}
\begin{equation}
\label{dirrr}
\Dir_P(Z) = \{X\,|\, Z + \lambda X \in P ~~\text{for some}~~\lambda > 0 \}.
\end{equation}
Clearly these two cones are related by a translation:
$$
\Tan_P(Z) = Z + \Dir_P(Z).
$$
A useful observation is that $P$ can be decomposed into the sum of its tangent cones.\\

\begin{thm}
\label{conicdecomposition}
For any polyhedron $P$,
$$
P \simeq \sum_{Z \in \Ver(P)} \Tan_P(Z),
$$
where $\simeq$ denotes equality modulo the addition or subtraction of polyhedra containing fully extended lines.\footnote{This is theorem 3.5 in \cite{bp99}.}\\
\end{thm}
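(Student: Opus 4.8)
The plan is to prove the decomposition as an identity in the abelian group $\Pi(V)$ of $\mb{Z}$-valued functions on $V$ spanned by indicator functions $[P] := \mathbf 1_P$ of polyhedra $P\subset V$; in this group finite subdivisions of polyhedra are automatically relations. Let $\bar\Pi(V)$ be the quotient of $\Pi(V)$ by the subgroup spanned by the classes of polyhedra containing a fully extended line; the symbol ``$\simeq$'' in the statement is precisely equality in $\bar\Pi(V)$. First I would reduce to pointed $P$: if $P\neq\emptyset$ has no vertex then $P$ contains a line, so $[P]=0$ in $\bar\Pi(V)$ while $\Ver(P)=\emptyset$, and there is nothing to prove. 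For pointed $P$ the cones $\Tan_P(Z)=Z+\Dir_P(Z)$ (recall \eqref{dirrr}) are pointed, and both sides of the claimed identity are supported in $\Aff(P)$, so we may pass to the affine hull of $P$ and assume $P$ is full-dimensional.

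When $P$ is moreover bounded, the identity is immediate from the Brianchon--Gram relation $\sum_F (-1)^{\dim F}[\Tan_P(F)]=[P]$, the sum running over all nonempty faces $F$ of $P$, with $\Tan_P(P)=V$: for any face with $\dim F\geq 1$ the cone $\Tan_P(F)$ contains the affine hull of $F$, hence a line, so its class dies in $\bar\Pi(V)$, leaving exactly $\sum_{Z\in\Ver(P)}[\Tan_P(Z)]$ with coefficients $(-1)^0=1$. For unbounded pointed $P$ this naive relation fails, and the correct replacement (Lawrence's version of Brianchon--Gram) restricts the sum to the \emph{bounded} faces of $P$; the same observation again kills every positive-dimensional bounded face modulo lines and yields the theorem.

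So the substance of the proof is the Brianchon--Gram relation, and I would obtain it by Lawrence--Varchenko polarization, which handles bounded and unbounded $P$ uniformly. Fix $\xi\in V^*$ generic, meaning $\xi$ is nonzero on every edge direction of $P$ at every vertex. At a vertex $Z$, triangulate $\Dir_P(Z)$ into simplicial cones if necessary; in each simplicial piece reflect every generator $u$ with $\xi(u)<0$ to $-u$, producing a half-open cone $\widetilde{\Tan}_P(Z)$ on which $\xi\geq\xi(Z)$, and let $n^-(Z)$ be the total number of reflections. A telescoping argument along a generic direction (as in \cite{barvinok08}) gives the \emph{exact} identity of functions $\mathbf 1_P=\sum_{Z\in\Ver(P)}(-1)^{n^-(Z)}\mathbf 1_{\widetilde{\Tan}_P(Z)}$; then, modulo line-containing polyhedra, each single generator reflection merely flips the sign of a cone's class, because the half-open union of a simplicial cone and its reflection in one generator is a half-open wedge containing a line and so has class $0$ in $\bar\Pi(V)$. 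Propagating this through all the reflections sends $(-1)^{n^-(Z)}[\widetilde{\Tan}_P(Z)]$ to $[\Tan_P(Z)]$ in $\bar\Pi(V)$, and summing over $Z$ proves the theorem. I expect the main obstacle to be exactly the half-open-cone and orientation bookkeeping: it is what makes the polarized identity an \emph{exact} equality of functions rather than one valid only up to lower-dimensional sets, and what makes the ``reflection flips the sign'' step clean, and one must additionally check that the internal walls of the auxiliary triangulations of $\Dir_P(Z)$ cancel. (As a consistency check, applying the exponential-integral valuation $I_{(\cdot)}$ of \eqref{Idef}--\eqref{mimic}, which annihilates line-containing polyhedra, to both sides recovers Brion's identity $I_P=\sum_Z I_{\Tan_P(Z)}$ as an equality of meromorphic functions.)
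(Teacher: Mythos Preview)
The paper does not supply its own proof of this theorem: it is stated in the review appendix and attributed, via the footnote, to theorem 3.5 of \cite{bp99}. So there is nothing in the paper to compare your argument against line by line.

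That said, your proposal is a correct and standard route to the result (this is Brion's decomposition, and the Lawrence--Varchenko polarization you invoke is one of the cleanest ways to get it). The reduction to pointed $P$, the passage to $\Aff(P)$, the observation that Brianchon--Gram modulo line-containing cones collapses to the vertex sum, and the polarized identity for unbounded $P$ are all sound. Your caveats about half-open bookkeeping and cancellation of internal triangulation walls are exactly the places where care is needed, but these are routine once one fixes conventions. The consistency check with the exponential valuation $I_{(\cdot)}$ is also apt and matches how the paper actually \emph{uses} the theorem in equation \eqref{dhindisguise}.
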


\begin{ex}
\label{exstrip}
Consider the polyhedron $P$ in $V = \mb{R}^2$ with vertices $Z_1 = (0,0)$ and $Z_2 = (1,0)$ and tangent cones
$$
\Tan_P(Z_1) = \Cone \left(\begin{bmatrix} 0\\1\end{bmatrix},\begin{bmatrix} 1\\0\end{bmatrix}\right)
$$
and
$$
\Tan_P(Z_2) = Z_2 + \Cone \left(\begin{bmatrix} 0\\1\end{bmatrix},\begin{bmatrix} -1\\0\end{bmatrix}\right).
$$
Then the sum of these two cones is
$$
\Tan_P(Z_1) + \Tan_P(Z_2) = P + H,
$$
where $H$ is the upper-half-plane $H = \{(x,y)\in V\,|\,y\geq 0\}$. Since $H$ contains a fully extended line, we conclude that
$$
\Tan_P(Z_1) + \Tan_P(Z_2) \simeq P,
$$
as in the theorem. See figure \ref{exa2}.
\end{ex}
\begin{figure}
\begin{center}
\includegraphics[scale=0.40]{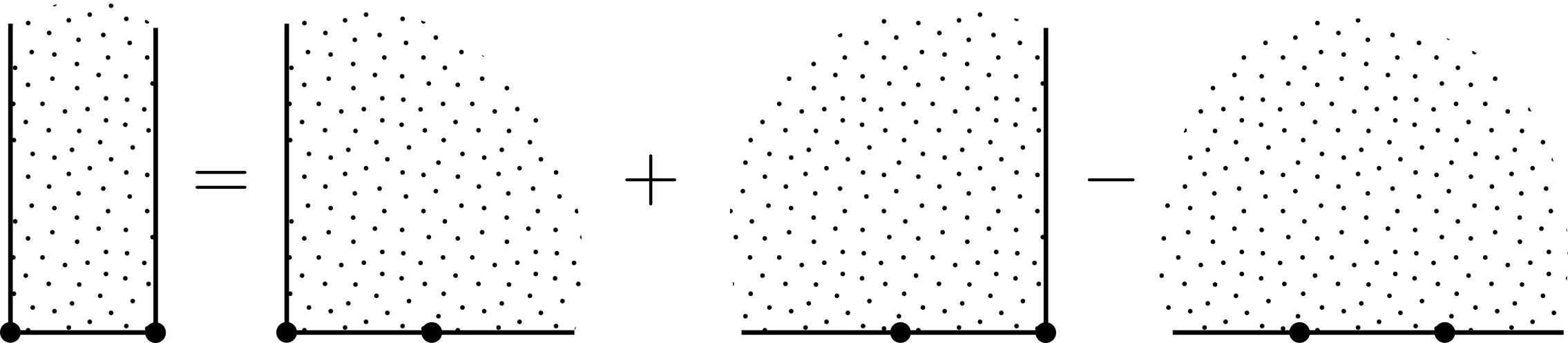}
\end{center}
\caption{An example of the polyhedral decomposition into cones, theorem \ref{conicdecomposition}.}
\label{exa2}
\end{figure}

For any set $S$ we may define its \emph{dual}
$$
S^* = \{ W \in V^*\,|\, W\cdot Z \geq -1 \, \forall\, Z \in S \}.
$$
If $C$ is a cone, this definition implies that
\begin{equation}
\label{olddualconedef}
C^* = \{ W \in V^*\,|\, W\cdot Z \geq 0 \, \forall\, Z \in C \}.
\end{equation}

We now mention an interesting result, related to the decomposition in theorem \ref{conicdecomposition}.\\

\begin{thm}
\label{dualcones}
Let $P$ be a polytope in $V$ with vertices $\Ver_P$. Then
$$
\bigcup_{Z\in \Ver_P} \Dir_P(Z)^* \simeq V^*,
$$
where $\simeq$ denotes equality as sets modulo excision by polyhedra contained in proper subspaces of $V^*$.
\end{thm}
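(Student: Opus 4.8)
The plan is to recognise the dual cone $\Dir_P(Z)^*$ as the \emph{(inner) normal cone} of $P$ at the vertex $Z$ and to deduce the covering statement from the fact that a linear functional on a compact polytope is minimised at a vertex. The first step is the reformulation: for $Z\in\Ver_P$ and $W\in V^*$,
$$
W\in\Dir_P(Z)^* \iff W\cdot Z \leq W\cdot Z' \text{ for all } Z'\in P.
$$
For the forward direction, if $W\in\Dir_P(Z)^*$ then $W\cdot X\geq 0$ for every $X\in\Dir_P(Z)$; in particular, for any $Z'\in P$ the vector $X=Z'-Z$ satisfies $Z+1\cdot X = Z'\in P$, so $X\in\Dir_P(Z)$ and hence $W\cdot(Z'-Z)\geq 0$. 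For the converse, suppose $Z$ minimises $W$ over $P$ and let $X\in\Dir_P(Z)$, so that $Z+\lambda X\in P$ for some $\lambda>0$; then $W\cdot X = \lambda^{-1}\bigl(W\cdot(Z+\lambda X) - W\cdot Z\bigr)\geq 0$, so $W\in\Dir_P(Z)^*$.

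With the reformulation in hand, the theorem is immediate. Fix any $W\in V^*$. Since $P$ is a nonempty bounded polyhedron, the linear functional $W$ attains its minimum on $P$; the set where this minimum is achieved is a nonempty face of $P$, hence contains a vertex $Z\in\Ver_P$. By the reformulation, $W\in\Dir_P(Z)^*$. As $W$ was arbitrary, $\bigcup_{Z\in\Ver_P}\Dir_P(Z)^* = V^*$ on the nose, which a fortiori gives the claimed equality modulo excision by polyhedra in proper subspaces. One can alternatively obtain the result by applying polar duality to the conic decomposition of theorem \ref{conicdecomposition}: dualising $P\simeq\sum_{Z}\Tan_P(Z)$ turns the Minkowski sum into the union of the $\Dir_P(Z)^*$ and converts ``modulo fully extended lines'' into ``modulo proper subspaces,'' which explains why the statement is phrased with $\simeq$; I would present the direct argument above as the main proof and record this only as a remark.

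I do not expect a genuine obstacle here: the proof is essentially a translation between two standard descriptions of the normal fan. The only points that need care are the precise quantifier in the definition of $\Dir_P(Z)$ (``$Z+\lambda X\in P$ for \emph{some} $\lambda>0$''), which is exactly what makes both directions of the reformulation work, and the standard but non-trivial fact that a nonempty face of a bounded polytope contains an extreme point, which is where boundedness (pointedness of $P$) enters. One should also implicitly assume $P\neq\emptyset$, since otherwise $\Ver_P=\emptyset$ while $V^*\neq 0$, and the statement would fail.
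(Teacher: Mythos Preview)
Your proof is correct. In fact the paper does not supply a proof of this theorem at all: it merely states the result, remarks that it is ``related to the decomposition in theorem \ref{conicdecomposition},'' and then illustrates it with the triangle and hexagon examples. So there is nothing to compare against except that one-line hint.

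Your direct argument via the inner normal cone is the cleanest route. The reformulation $W\in\Dir_P(Z)^*\iff Z$ minimises $W$ on $P$ is exactly right given the paper's definition of $\Dir_P(Z)$ (with the existential quantifier on $\lambda$), and boundedness of $P$ is precisely what guarantees that every linear functional is minimised at a vertex. Note that your argument in fact gives literal equality $\bigcup_Z \Dir_P(Z)^*=V^*$, which is stronger than the $\simeq$ in the statement; the weaker phrasing in the paper presumably reflects the alternative derivation you sketch in your remark, namely dualising theorem \ref{conicdecomposition}, where the ``modulo lines'' caveat dualises to ``modulo proper subspaces.'' That alternative is indeed what the paper gestures at, so it is good that you mention it, but your primary proof is both more elementary and slightly sharper.
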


Example \ref{exstrip} is a non-example of this theorem, since the infinite strip is not a polytope (it is not bounded). Indeed, the union of the dual cones $\Dir_P(Z_1)^*$ and $\Dir_P(Z_2)^*$ in example \ref{exstrip} only cover a half-space in $V^* = \mb{R}^2$. For a bonafide example of theorem \ref{dualcones} consider the equalitateral triangle.\\

\begin{ex}
\label{extri}
The equilateral triangle is the convex hull of three points
$$
e_1 = (0,1),\qquad e_2 = (\sqrt{3}/2,-1/2)\qquad e_3 = (-\sqrt{3}/2,-1/2).
$$
Let $T = \Conv(e_1,e_2,e_3)$ be the triangle. Then the direction cones are, for instance,
$$
\Dir_T(e_1) = \Cone(e_2-e_1,e_3-e_1).
$$
Notice that $e_3$ is orthogonal to $e_2-e_1$ and $e_2$ is orthogonal to $e_3-e_1$. Based on this, one can show that the dual cone is
$$
\Dir_T(e_1)^* = \Cone(e_2,e_3).
$$
Likewise,
$$
\Dir_T(e_2)^* = \Cone(e_3,e_1)\qquad\text{and}\qquad \Dir_T(e_3)^* = \Cone(e_1,e_2).
$$
It is clear that these three cones fill the vector space.
\end{ex}

\begin{figure}
\begin{center}
\includegraphics[scale=0.40]{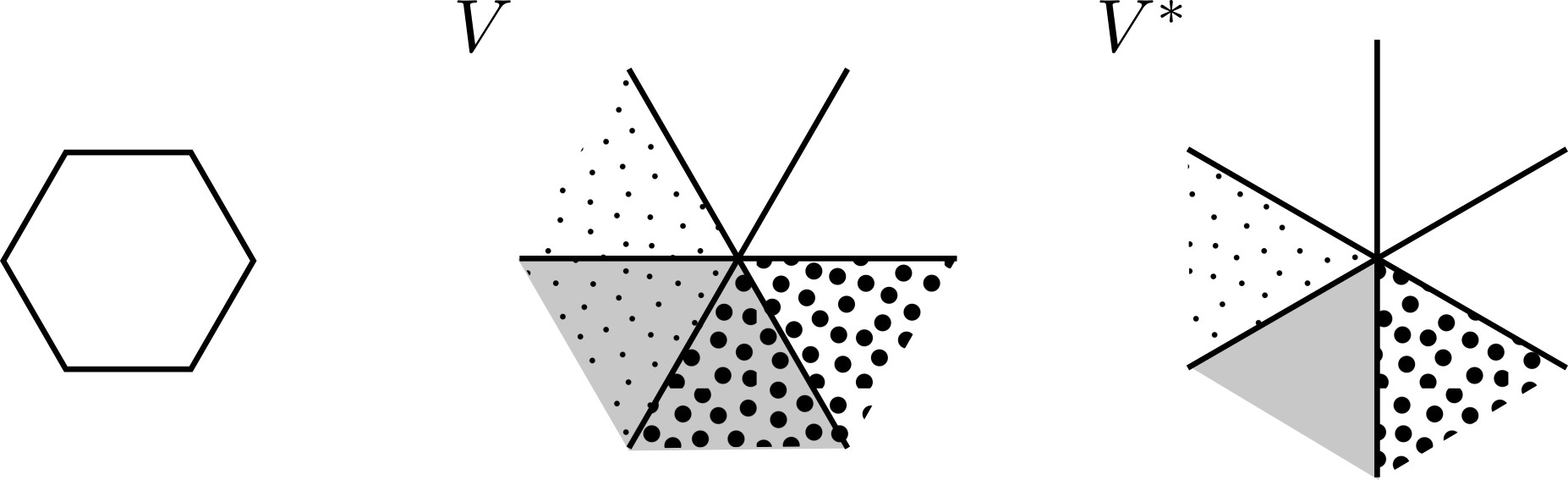}
\end{center}
\caption{Another example of theorem \ref{dualcones}.}
\label{exa5}
\end{figure}

This example appears in the text in connection with the dual associahedra for the four point amplitude. See section \ref{reviewdual}. For another example of the theorem, consider the hexagon shown in figure \ref{exa5}. The cones defined by the vertices of the hexagon overlap with each other. However, the dual cones tile the dual space in agreement with the theorem.

\subsection{Continuous valuation}
\label{oldint}
We now define a valuation (or `volume') on polyhedra, and on cones in particular, that plays a significant role in section \ref{statement} of the main text. Let $P \subset V$ be a polyhedron containing no fully extended line. For $W \in V^*$ we consider the integral
$$
I_P(W,\alpha') = \int\limits_P e^{-\alpha'W\cdot Z} \d Z_P.
$$
When this integral does not converge, we set $I_P(W,\alpha') = 0$. If the interior of $P$ has dimension $d$ (possibly smaller than $\dim V$) the measure $\d Z_P$ appearing in the integral is the $d$-dimensional Euclidean volume on the interior of $P$. Notice the following shift property,
$$
I_{Z_o+P}(W) = e^{-\alpha' W\cdot Z_o} I_P(W).
$$
We now compute an example.\\

\begin{ex}
\label{exstrip2}
Consider the cone $C = \Cone((1,0),(0,1))$, which is the first quadrant of the plane. If $W = (w_1,w_2)$ are coordinates on $V^*$,
$$
I_C(W,\alpha') = \int\limits_0^{\infty}\d a\int\limits_0^{\infty}\d b \,e^{-\alpha'aw_1-\alpha'bw_2} = (-1)^2 \frac{1}{(\alpha')^2w_1w_2}.
$$
\end{ex}

The result of the calculation in example \ref{exstrip2} is a function which is divergent as $w_1\rightarrow 0$ or $w_2\rightarrow 0$. Indeed, $w_1=0$ and $w_2= 0$ are the boundaries of the dual cone $C^*$. In general, if $C$ is a cone, then the integral
$$
I_C(W,\alpha')
$$
is well defined for all $W \in C^*$ and diverges to positive infinity as $W$ approaches the boundary of $C^*$. (See \cite{oda88} proposition A.10.) The calculation in example \ref{exstrip2} generalises to an arbitary polyhedral cone.\\

\begin{thm}
\label{intcal1}
If $C = \Cone(Z_1,...,Z_k)$ be a polyhedral cone with $k \leq \dim V$ we compute that
$$
I_C(W) = (-1)^k \frac{\Vol(\Box(Z_1,...,Z_k))}{\prod_{i=1}^kW\cdot Z_i},
$$
where $\Box(Z_1,...,Z_k)$ is the unit (open-closed) box $\{\sum_{i=1}^k c_i Z_i \,|\, 0\leq c_i < 1\,\forall\,i\}$.\\
\end{thm}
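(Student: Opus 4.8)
The plan is to prove Theorem~\ref{intcal1} by a single affine change of variables, after which the integral factorizes into $k$ copies of the elementary integral $\int_0^\infty e^{-ax}\,\d x = 1/a$. First I would reduce to the case in which $Z_1,\dots,Z_k$ are linearly independent. This is the only case in which the right-hand side is meaningful: if the $Z_i$ were dependent, then $\Box(Z_1,\dots,Z_k)$ is degenerate, so $\Vol=0$, while simultaneously $C=\Cone(Z_1,\dots,Z_k)$ has dimension strictly below $k$, so the measure $\d Z_C$ is lower-dimensional and the stated formula does not apply. Assuming linear independence, $C$ is a simplicial cone of dimension exactly $k$ contained in the subspace $U=\Span(Z_1,\dots,Z_k)\subseteq V$, and $\d Z_C$ is by definition the intrinsic $k$-dimensional Euclidean volume on $U$.

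Second, I would parametrize $C$ by the linear isomorphism $\phi:[0,\infty)^k\to C$, $\phi(c_1,\dots,c_k)=\sum_{i=1}^k c_i Z_i$. The geometric heart of the proof is the identification of the Jacobian of $\phi$: the factor relating $\d Z_C$ to $\d c_1\cdots\d c_k$ is exactly $\Vol(\Box(Z_1,\dots,Z_k))=\sqrt{\det G}$, where $G_{ij}=Z_i\cdot Z_j$ is the Gram matrix. Equivalently, $\phi$ maps the standard unit cube $[0,1]^k$ onto $\Box(Z_1,\dots,Z_k)$, so $\d c_1\cdots\d c_k$ pushes forward to $\Vol(\Box(Z_1,\dots,Z_k))^{-1}\,\d Z_C$. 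This is the standard fact that the Euclidean $k$-volume of a parallelepiped equals the square root of the determinant of its Gram matrix; here it is precisely what makes the unit box the natural unit of volume for $\d Z_C$ in these coordinates.

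Third, applying the change of variables and Fubini, with $a_i:=W\cdot Z_i$,
$$
I_C(W,\alpha') = \Vol(\Box(Z_1,\dots,Z_k))\prod_{i=1}^k\int_0^\infty e^{-\alpha' a_i c_i}\,\d c_i .
$$
For $W$ in the interior of $C^*$ one has $a_i=W\cdot Z_i>0$ for every $i$, so each factor converges and equals $1/(\alpha' a_i)$; keeping the same sign bookkeeping for the one-dimensional integrals as in Example~\ref{exstrip2}, one collects the displayed sign, and at $\alpha'=1$ this gives
$$
I_C(W) = (-1)^k\,\frac{\Vol(\Box(Z_1,\dots,Z_k))}{\prod_{i=1}^k W\cdot Z_i}.
$$
When $W\notin\Int(C^*)$ some $a_i\le 0$, the integral diverges, and $I_C(W)=0$ by the convention fixed just before the theorem; this is consistent with the blow-up of the right-hand side as $W$ approaches $\p C^*$, which was already noted in the text.

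The only genuinely non-routine step is the Jacobian identification in the second paragraph, and the subtlety to watch there is the case $k<\dim V$: then $C$ lies in a proper subspace, and one must work with the intrinsic $k$-dimensional Euclidean measure on $\Span(Z_1,\dots,Z_k)$ rather than treating $C$ as full-dimensional in $V$. Everything else is Fubini together with $k$ applications of $\int_0^\infty e^{-ax}\,\d x = 1/a$, so the proof is short once this measure-theoretic point is pinned down.
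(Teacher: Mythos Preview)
Your proof is correct and follows exactly the approach the paper indicates: compute $I$ on the standard cone $(\mb{R}^+)^k$ and then transport via the linear map $(\mb{R}^+)^k\to C$, which is precisely your change of variables $\phi(c_1,\dots,c_k)=\sum_i c_i Z_i$ with Jacobian $\Vol(\Box(Z_1,\dots,Z_k))$. Your extra care about the intrinsic $k$-dimensional measure when $k<\dim V$ and your remark on the $(-1)^k$ bookkeeping (cf.\ Example~\ref{exstrip2}) are both on point.
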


\begin{figure}
\begin{center}
\includegraphics[scale=0.40]{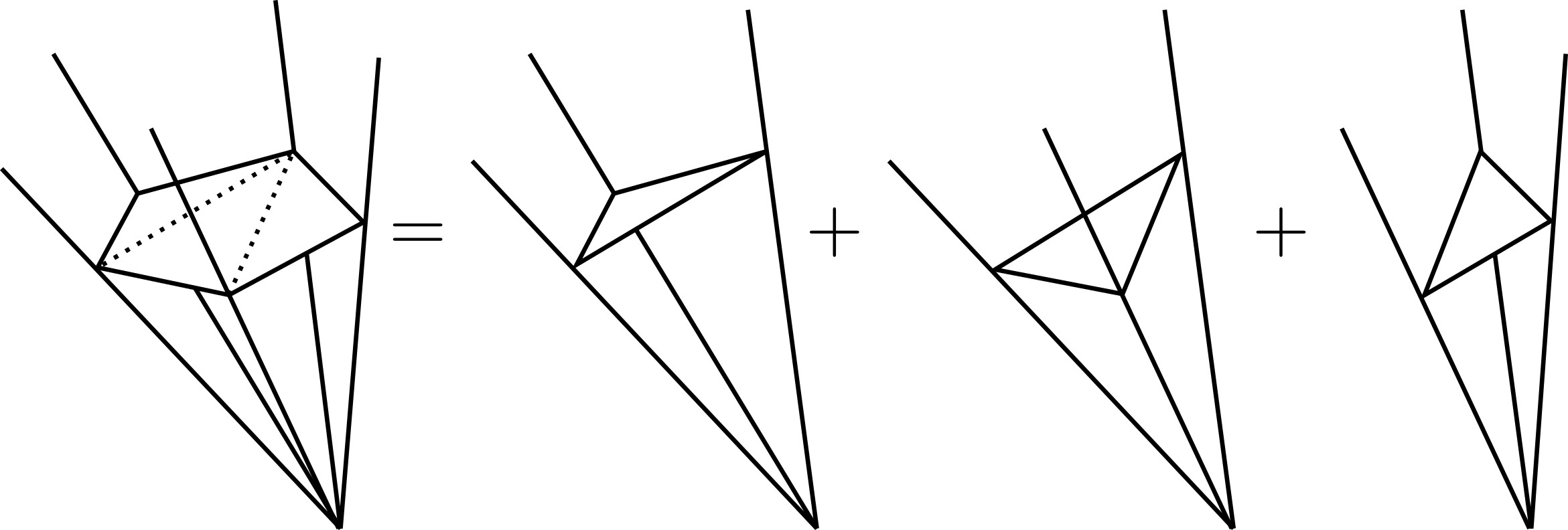}
\end{center}
\caption{Decomposing a polyhedral cone using a triangulation.}
\label{exa7}
\end{figure}
This theorem follows by computing $I$ for the standard cone $(\mb{R}^+)^k$ in $\mb{R}^k$ and then using the map from $(\mb{R}^+)^k$ to $C$. What happens when $k$ is larger than $\dim V$? In this case, we can evaluate $I_C$ for $C = \Cone(Z_1,...,Z_k)$ by giving a triangulation of the polytope $P = \Conv(Z_1,...,Z_k)$. Suppose that $I_a$ are a collection of subsets $I_a \subset \{1,...,k\}$ with length $\dim V$. Associated to each such subset is a cone
$$
C_a = \Cone \{ Z_i \,|\,i \in I_a\}.
$$
If the subsets $I_a$ define a triangulation of $P$, then
$$
C = \sum_a C_a.
$$
We sketch an example in figure \ref{exa7}. By the usual linearity of integration, we arrive at the following result.\\

\begin{thm}
\label{intcal2}
Let $C = \Cone(Z_1,...,Z_k) \subset V$ for $k > \dim V$ and suppose that some subsets $I_a \subset \{1,...,k\}$ define a triangulation of $P = \Conv(Z_1,...,Z_k)$ in $(d-1)$-simplices. Then
$$
I_C(W) = \sum_{a} I_{C_a}(W) = \sum_a (-1)^{\dim V} \frac{\Vol (\Box_a)}{\prod_{i\in I_a} W\cdot Z_i},
$$
where $\Box_a$ is the unit box generated by the $Z_i$ with $i\in I_a$.
\end{thm}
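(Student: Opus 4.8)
The plan is to deduce Theorem~\ref{intcal2} from Theorem~\ref{intcal1} by passing from the triangulation of the polytope $P = \Conv(Z_1,\dots,Z_k)$ to the induced subdivision of the cone $C = \Cone(Z_1,\dots,Z_k)$. Write $d = \dim V$ and let $\sigma_a = \Conv\{Z_i : i\in I_a\}$ be the $(d-1)$-simplices of the triangulation, so $P = \bigcup_a \sigma_a$ with $\sigma_a\cap\sigma_b$ a common face of dimension $< d-1$ for $a\neq b$. First I would check that $C = \mb{R}_{\ge 0}\cdot P$: every nonzero point $\sum c_i Z_i$ of $C$ has $\sum c_i > 0$ and so is a positive multiple of the point $\big(\sum c_i\big)^{-1}\sum c_i Z_i \in P$. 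Taking conic hulls termwise then gives $C = \bigcup_a \mb{R}_{\ge 0}\sigma_a = \bigcup_a C_a$ with $C_a = \Cone\{Z_i : i\in I_a\}$. Here I would also record the mild non-degeneracy hypothesis, implicit in the statement (since $I_C$ is written as a $d$-dimensional integral), that $\Aff(P)$ is a hyperplane \emph{not through the origin}: this is exactly what makes each $C_a$ genuinely $d$-dimensional, being the cone over a $(d-1)$-simplex from an apex off its affine hull, and forces $\dim(C_a\cap C_b) = \dim(\sigma_a\cap\sigma_b)+1 < d$. Thus $\{C_a\}$ is a fan subdividing $C$ with pairwise intersections of $d$-dimensional Lebesgue measure zero.

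Second, for $W$ in the interior of $C^*$ — where the integral defining $I_C$ converges, and where $W$ automatically lies in every $C_a^*\supseteq C^*$ — additivity of the Lebesgue integral over the measure-zero overlaps gives $I_C(W) = \sum_a I_{C_a}(W)$; this is the ``usual linearity of integration'' already invoked before the statement. Third, each $C_a$ is generated by exactly $|I_a| = d = \dim V$ vectors, so Theorem~\ref{intcal1} applies directly with $k = \dim V$ and yields $I_{C_a}(W) = (-1)^{\dim V}\,\Vol(\Box_a)/\prod_{i\in I_a} W\cdot Z_i$, where $\Box_a$ is the half-open unit box on $\{Z_i : i\in I_a\}$. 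Summing over $a$ produces the displayed formula.

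The only step that is not completely formal is the first one: checking that a triangulation of $P$ genuinely pulls back to a fan subdivision of $C$ with the correct dimension count on the overlaps, and pinning down the harmless hypothesis $0\notin\Aff(P)$ that underlies it; everything afterwards is linearity of integration plus one application of Theorem~\ref{intcal1}. A closing remark I would add: both sides of the identity are rational functions of $W$ on the complement of the hyperplanes $\{W\cdot Z_i = 0\}$, so the equality, once established on the open set $\Int(C^*)$, propagates to that whole domain irrespective of the convention $I_P \equiv 0$ for divergent integrals.
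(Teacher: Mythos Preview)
Your proposal is correct and follows essentially the same approach as the paper: triangulate $P$, pull this back to a decomposition $C=\bigcup_a C_a$ into simplicial cones, invoke linearity of integration, and apply Theorem~\ref{intcal1} to each $C_a$. The paper's argument is just the one-line ``By the usual linearity of integration'' preceding the statement; you have added more care about the overlap dimensions and the implicit hypothesis $0\notin\Aff(P)$, but the route is identical.
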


We can use the decomposition theorem, Theorem \ref{conicdecomposition}, to compute $I_P$ for any polyhedron $P$ as
\begin{equation}
\label{dhindisguise}
I_P(W,\alpha') = \sum_{Z\in \Ver(P)} I_{\Tan_P(Z)}(W,\alpha') =  \sum_{Z\in \Ver(P)} e^{-\alpha'W\cdot Z} I_{\Dir_P(Z)}(W,\alpha').
\end{equation}
The volume of $P$ is ostensibly given by the limit $\alpha'\rightarrow 0$. We can extract the volume of $P$ from $I_P(W,\alpha')$ by considering its Laurent expansion in $\alpha'$ and taking the $(\alpha')^0$ term. This is tractable because theorems \ref{intcal1} and \ref{intcal2} show us that
$$
I_{\Dir_P(Z)}(W,\alpha')
$$
is always homogeneous in $\alpha'$. If the interior of $P$ has dimension $d$, then $I_{\Dir_P(Z)}(W,\alpha')$ is homogeneous with weight $-d$.\\

\begin{thm}
\label{volthm}
Let $P$ be a polyhedron as above. The $(\alpha')^0$ term in the Laurent expansion of $I_P(W,\alpha')$ is
$$
I_P^{(0)}(W) = \sum_{Z\in \Ver(P)} \frac{1}{d!} (W\cdot Z)^d I_{\Dir_P(Z)}(W).
$$
\end{thm}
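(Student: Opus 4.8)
The plan is to read the constant term off directly from the conic decomposition already established as equation~\eqref{dhindisguise},
$$
I_P(W,\alpha') \;=\; \sum_{Z\in\Ver(P)} e^{-\alpha' W\cdot Z}\, I_{\Dir_P(Z)}(W,\alpha'),
$$
a finite sum over the vertices of $P$. The first step is to pin down the exact $\alpha'$-dependence of each summand. Since $P$ is a $d$-dimensional polyhedron containing no fully extended line, for every vertex $Z$ the cone $\Dir_P(Z)$ is a pointed polyhedral cone spanning the $d$-dimensional linear space $\Aff(P)-Z$; by the homogeneity remark immediately preceding the statement (which itself rests on Theorems~\ref{intcal1} and~\ref{intcal2}), $I_{\Dir_P(Z)}(W,\alpha')$ is then homogeneous of weight $-d$ in $\alpha'$. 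Hence $I_{\Dir_P(Z)}(W,\alpha') = (\alpha')^{-d}\, I_{\Dir_P(Z)}(W)$, where as usual $I_{\Dir_P(Z)}(W):=I_{\Dir_P(Z)}(W,1)$.

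The remaining step is pure bookkeeping. Expanding $e^{-\alpha' W\cdot Z}=\sum_{k\ge0}\frac{1}{k!}(-\alpha' W\cdot Z)^{k}$ and multiplying by $(\alpha')^{-d}I_{\Dir_P(Z)}(W)$, the $Z$-summand becomes the Laurent series $\sum_{k\ge0}\frac{1}{k!}(-W\cdot Z)^{k}(\alpha')^{k-d}I_{\Dir_P(Z)}(W)$, whose coefficient of $(\alpha')^{0}$ is supplied entirely by the term $k=d$, namely $\frac{1}{d!}(-W\cdot Z)^{d}I_{\Dir_P(Z)}(W)$. Summing over the vertices of $P$ yields the claimed formula for $I_P^{(0)}(W)$, the overall factor $(-1)^{d}$ being absorbed into the sign normalisation of $I$ fixed in Theorem~\ref{intcal1}.

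There is no serious obstacle here — the argument is essentially a two-line computation — but one point deserves a word of care: equation~\eqref{dhindisguise} must be read as an identity of rational (meromorphic) functions of $\alpha'$, in which the individual polar parts $(\alpha')^{-d}I_{\Dir_P(Z)}(W)$ cancel upon summation, so that $I_P(W,\alpha')$ genuinely admits a Laurent expansion and its $(\alpha')^{0}$ coefficient is well defined. That cancellation, and the fact that every vertex contributes with the same homogeneity weight $-d$, both use that $P$ is pure $d$-dimensional and line-free, which are the standing hypotheses. One could instead verify the formula vertex-by-vertex by dualising each $\Dir_P(Z)$ and applying Theorems~\ref{intcal1}–\ref{intcal2}, but routing through~\eqref{dhindisguise} is the most economical.
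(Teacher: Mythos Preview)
Your proposal is correct and follows exactly the route the paper indicates in the paragraph immediately preceding the theorem: use the conic decomposition~\eqref{dhindisguise}, invoke the homogeneity of $I_{\Dir_P(Z)}(W,\alpha')$ in $\alpha'$ with weight $-d$ (justified by Theorems~\ref{intcal1} and~\ref{intcal2}), expand the exponential, and pick off the $k=d$ term. The paper does not supply a formal proof beyond that sketch, so your write-up is essentially the intended argument made explicit; your closing remark on the $(-1)^{d}$ sign is a fair flag of a normalisation ambiguity in the stated formula rather than a defect in your reasoning.
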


When $P$ is a polytope, this expression is the volume of $P$. However, it is also well defined when $P$ is not bounded. We give two examples.\\

\begin{ex}
\label{exstrip3}
Let $P$ be the strip defined in example \ref{exstrip}. If $W = (w_1,w_2)$ are the dual coordinates, we find
$$
I^{(0)}_P(W) = \frac{1}{2}\frac{w_1}{w_2}.
$$
So, even though $P$ has infinite Euclidean volume, $I^{(0)}_P(W)$ is still an interesting function. Indeed, the function takes values in $[0,0.5]$ with extreme values on the boundary of $P$.\\
\end{ex}

\begin{ex}
Let $T$ be the triangle introduced in example \ref{extri}. The integrals $I_{\Dir_P(Z)}(W)$ evaluate to give, for example,
$$
I_{\Dir_T(e_1)}(W) = \frac{3\sqrt{3}}{2}\frac{4}{3}\frac{1}{w_1^2-3w_2^2}.
$$
Summing the contributions in theorem \ref{volthm} gives
$$
I_T^{(0)}(W) = \frac{3\sqrt{3}}{2}.
$$
In this case we see that $I^{(0)}_T(w)$ is the Euclidean area of the triangle, which is what we expect.\\
\end{ex}

\subsection{Lattices and toric varieties}
\label{oldlattice}
Now let $L$ be a lattice in $V$. A vector $Z$ is $L$-rational if an integer multiple of it, $n Z$, is a lattice point. More generally, a subspace $W\subset V$ is $L$-rational if it is the affine space generated by points in $L$. (Equivalently, $W$ is rational if $W\cap L$ is a lattice.) A polyhedral $C$ is rational if it can be generated by a collection of lattice points. In sections \ref{appdiscrete} and \ref{em} we discuss some concrete constructions based on rational cones in a lattice|these are used in section \ref{stringyformula} of the main text to discuss a possible relation between the cones in kinematic space and the inverse KLT kernel. In this section, we will briefly recall the construction of toric varieties from cones and polyhedra. Our reason for doing this is to point out that formulas for the integral $I_P(W)$ like theorem \ref{intcal1} and equation \eqref{dhindisguise} are, in fact, disguised forms of the Duistermaat-Heckman formula. Readers not interested in this connection may skip to section \ref{appdiscrete}.

\paragraph{}
Consider the rational cone, $C = \Cone (Z_1,...,Z_k)$, where the $Z_i$ are all lattice points. Let $\Aff(C)$ be the affine space containing $C$, as above. Clearly
$$
\Aff(C) = \Span_{\mb{R}} (Z_1,...,Z_k).
$$
We can consider the lattice
$$
M = \mb{Z}[Z_1,...,Z_k] \subset \Aff (C),
$$
which is a sublattice of $L$. This lattice defines a torus
$$
T = \Aff(C) / M
$$
which we can identify with the open-closed unit box
$$
\Box(Z_1,...,Z_k) = \left\{ \sum_{i=1}^k c_i Z_i \,|\, 0\leq c_i < 1 \right\}.
$$
And the image of $L$ in $T$, which is $G = L \cap \Aff(C) / M$, is a finite subgroup of $T$. Associated to the lattice $M$ we can consider the algebra of characters. For any $Z \in M$,
$$
\chi_Z(W) = e^{2\pi i W\cdot Z}
$$
and we define the algebra $\mb{C}[M]$ to be generated by these characters with the multiplication $\chi_Z\cdot \chi_{Z'} = \chi_{Z+Z'}$. Let $L \cap \Aff(C)$ have generators $\mb{Z}[\hat Z _1,...,\hat Z_k]$. Then the algebra associated to $L \cap\Aff(C)$ is generated by the characters $T_i^{\pm 1} = \chi_{\pm \hat Z_i}$. So,
$$
\mb{C}[L\cap \Aff(C)] = \mb{C}[T_1^{\pm 1},...,T_k^{\pm 1}].
$$
The characters of $M$, $\chi_Z$ with $Z\in M$, are positive integer powers of the $T_i^{\pm 1}$. So we have an inclusion
$$
\mb{C}[M] \hookrightarrow \mb{C}[T_1^{\pm 1},...,T_k^{\pm 1}].
$$
The variety $\Spec \mb{C}[M]$ is the toric variety associated to the cone $C^*$. (For an introduction to these ideas, see Fulton \cite{fulton93}, chapter one.)\\

\begin{ex}
\label{extor}
If $C^* = \Cone (e_1,e_2)$ in $V^* = \mb{R}^3$ with lattice $L = \mb{Z}[e_1,e_2,e_3]$, the dual cone is $C = \Cone(e_1,e_2,e_3,-e_3)$. The algebra of characters of $M = \mb{Z}_{\geq 0}[e_1,e_2,\pm e_3]$ is
$$
\mb{C}[M] = \mb{C}(T_1,T_2,T_3,T_3^{-1}).
$$
The associated toric variety is
$$
X = \Spec\,\mb{C}[M] = \mb{C}\times\mb{C}\times\mb{C}^*.
$$
The torus $T = V / M$ acts on $X$ in the obvious way. There are no fixed points. Each subcone $F \subset C$ is associated with a torus embedded in $X$. And, in particular, the points $\{0\}$ is associated with the torus $\mb{C}^*\times\mb{C}^*\times\mb{C}^* \subset X$.\\
\end{ex}

\begin{ex}
\label{extor2}
If instead we consider $C^* = \Cone (e_1,e_2,e_3)$ in $V^* = \mb{R}^3$, the associated toric variety is
$$
X = \Spec\,\mb{C}[M] = \mb{C}\times\mb{C}\times\mb{C}.
$$
This has a single fixed point, namely $x_o = (0,0,0) \in X$. In general, if the generators of $C^*$ span $V^*$, then $X$ is a non-singular variety of the form $\mb{C}^{\times n}$ and there is a single fixed point, the origin.\\
\end{ex}

The reason we have recalled the construction of toric varieties is that, in this context, the formula we obtained for $I_C(W)$, theorem \ref{intcal1}, is an example of the Duistermaat-Heckman formula. We will give a terse explanation of this. The key observation is made in \cite{brion88}. For more on Duistermaat-Heckman, the standard reference is \cite{dh82}. Let $C$ be a cone with $\dim V$ generators in $V$ and let $X$ be the associated toric variety. We can regard the torus $T$ which acts on $X$ as an open box $\Box \subset V$. So the Lie algebra of $T$ is $V$ and the moment map is a map
$$
\mu : X \rightarrow \Lie(T)^* = V^*.
$$
Then, for $Z \in V$, Duistermaat-Heckman evaluates the integral
\begin{equation}
\label{dh}
\int\limits_X e^{Z\cdot \mu(x)}\d x = e^{Z\cdot\mu(x_o)}\frac{1}{\det_{x_o} Z},
\end{equation}
where $\det_{x_o} Z$ is the determinant of the action of $Z \in V$ on $X$ at the fixed point $x_o$. By using the moment map, the integral on the left-hand-side can be identified with
\begin{equation}
\label{identint}
\int\limits_X e^{Z\cdot \mu(x)}\d x = \int\limits_{C^*} e^{Z\cdot W} \d W|_{C^*}.
\end{equation}
On the other hand, the moment map sends $x_o$ to the vertex of the cone, $W_o$. We then identify the right-hand-side as
$$
e^{Z\cdot\mu(x_o)}\frac{1}{\det_{x_o} Z} = e^{Z\cdot W_o} \frac{1}{\prod (-W\cdot Z_i)}.
$$
In general, a toric variety can be associated to any `fan' of cones. (A fan is a union of cones whose faces meet each other to form sub-cones, and so on.) This is a standard construction and involves glueing together the toric varieties associated to each cone. See \cite{fulton93}, chapter 1. In particular, we can consider the fan of cones $F_P$ associated to a polytope $P$. The fan $F_P$ is the union of all dual cones $\Dir_P(Z)^*$ for vertices $Z\in\Ver_P$. Let $X$ be the associated toric variety. Then the Duistermaat-Heckman formula reads
\begin{equation}
\label{dh}
\int\limits_X e^{W\cdot \mu(x)}\d x = \sum_{\text{fixed points}}e^{W\cdot\mu(x)}\frac{1}{\det_x W}.
\end{equation}
Remarkably, this is just the formula, equation \eqref{dhindisguise}, that we obtained earlier for $I_P(W,\alpha')$ with $\alpha'$ set to $1$. Brion makes this observation in \cite{brion88}, though he may not have been the first. The correspondence between the two formulas is roughly as follows. For each of the dual cones $\Dir_P(Z)^*$,  there is a single fixed point $x_Z \in X$ of the torus action on $X$ (just as, for a single cone, there was a single fixed point). The moment map $\mu$ maps $x_Z$ to the vertex $Z \in V$. This identifies the right-hand-side of equation \eqref{dh} with equation \eqref{dhindisguise}. We will not elaborate these ideas in any detail here,---but notice that equation \eqref{identint}, for instance, gives another interpretation to the results in section \ref{statement} described in the main text.

\subsection{Discrete valuation}
\label{appdiscrete}
Given a rational cone $C$ in $V$ with a lattice $L$, we will define a function $S_C(W)$ on $V^*$. By analogy with the definition of $I_C(W)$ in section \ref{oldint}, consider the sum
$$
S_C(W,\alpha') = \sum_{Z\in C \cap L} e^{-\alpha' W\cdot Z}.
$$
This is well defined when $W$ is in the dual cone $C^*$. Otherwise, when $W$ is not in the dual cone, the summation is not well defined and we set $S_C(W)$ to zero. Notice, as for $I_C(W)$, the translation property
$$
S_{Z_o+C}(W,\alpha') = e^{-\alpha' W\cdot Z_o} S_C(W,\alpha').
$$
We can evaluate $S_C$ explicitly in the following simple example.
\\
\begin{ex}
\label{exquad2}
Let $C = \Cone((1,0),(0,1))$ be the first quadrant of the plane, as in example \ref{exstrip2}. Take $L$ to be the standard lattice $L = \mb{Z}^2$. Then writing $W = (w_1,w_2)$ for dual coordinates,
$$
S_C(W) = \sum_{a=0}^{\infty}\sum_{b=0}^{\infty} e^{-a w_1 - b w_2} = \frac{1}{(1-e^{-w_1})(1-e^{-w_2})},
$$
provided that $e^{w_1},e^{w_2} < 1$.
\end{ex}

Notice that, in example \ref{exquad2}, $e^{w_1}$ and $e^{w_2}$ are less than $1$ for $w_1,w_2>0$. That is, the sum $S_C(W)$ is defined precisely for $W$ in the dual cone $C^*$ which is the first quadrant of the plane. At the boundary, $w_1=0$ or $w_2=0$, the sum diverges. The calculation in example \ref{exquad2} is easy to generalise. The following is the lattice analog of Theorem \ref{intcal1}.\\

\begin{thm}
Let $C = \Cone(Z_1,...,Z_k)$ for $k\leq \dim V$. Then
$$
S_C(W,\alpha') = \left( \sum_{\Box \cap L} e^{-\alpha'W\cdot Z} \right) \prod_{i=1}^k \frac{1}{1 - e^{-\alpha' W\cdot Z_i}},
$$
where $\Box$ is the unit open-closed box generated by the $Z_i$.
\end{thm}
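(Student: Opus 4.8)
The plan is to prove this by showing that the lattice points of $C$ decompose \emph{uniquely} as a ``box part'' plus a nonnegative integer combination of the generators, and then to sum a product of geometric series. Throughout I work under the standing hypothesis of Theorem~\ref{intcal1} that the $Z_i$ are linearly independent lattice vectors (this is what is needed for $\Box = \Box(Z_1,\dots,Z_k)$ to be a genuine half-open fundamental parallelepiped; with $k \le \dim V$ and $C$ full-dimensional inside $\Aff(C) = \Span_{\mb R}(Z_1,\dots,Z_k)$ this is automatic). As analytic preliminaries I would note that $S_C(W,\alpha')$ is only considered for $W$ in the interior of $C^*$, where $W\cdot Z_i > 0$ for every $i$ since each $Z_i$ is a nonzero element of $C$; hence for $\alpha' > 0$ one has $e^{-\alpha' W\cdot Z_i} \in [0,1)$, so every geometric series below converges, and, all summands being positive reals, the order of summation is immaterial (Tonelli).

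The heart of the argument is the bijection
$$
C\cap L \;\xrightarrow{\ \sim\ }\; (\Box\cap L)\times \mb Z_{\ge 0}^{\,k},\qquad Z \mapsto \bigl(Z_0,(n_1,\dots,n_k)\bigr).
$$
Given $Z \in C$, linear independence of the $Z_i$ produces unique coefficients $c_i \ge 0$ with $Z = \sum_i c_i Z_i$; set $n_i = \lfloor c_i\rfloor \in \mb Z_{\ge 0}$ and $Z_0 = \sum_i (c_i - n_i)Z_i$, so that $Z = Z_0 + \sum_i n_i Z_i$ with $0 \le c_i - n_i < 1$, i.e.\ $Z_0 \in \Box$. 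If moreover $Z \in L$, then $\sum_i n_i Z_i \in \mb Z[Z_1,\dots,Z_k] \subseteq L$ forces $Z_0 = Z - \sum_i n_i Z_i \in L$, hence $Z_0 \in \Box\cap L$. Conversely, for any $Z_0 \in \Box\cap L$ and any $n_i \in \mb Z_{\ge 0}$ the point $Z_0 + \sum_i n_i Z_i$ has all coefficients against the $Z_i$ nonnegative, so it lies in $C$, and it lies in $L$; and its coefficients recover $n_i$ as integer parts and $Z_0$ as the fractional part, so the map is injective. (Equivalently: $\Box$ is a fundamental domain for the sublattice $\mb Z[Z_1,\dots,Z_k]$ acting on $\Aff(C)$, and $\#(\Box\cap L)$ is the index of that sublattice in $L\cap\Aff(C)$.)

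Assembling the computation, the bijection together with splitting the exponential gives
$$
S_C(W,\alpha') = \sum_{Z_0\in\Box\cap L}\ \sum_{n_1,\dots,n_k\ge 0} e^{-\alpha' W\cdot\left(Z_0 + \sum_i n_i Z_i\right)} = \left(\sum_{Z_0\in\Box\cap L} e^{-\alpha' W\cdot Z_0}\right)\prod_{i=1}^k\left(\sum_{n_i=0}^{\infty} e^{-\alpha' n_i\, W\cdot Z_i}\right),
$$
and each inner sum is the geometric series $\sum_{n\ge 0} t^n = 1/(1-t)$ with $t = e^{-\alpha' W\cdot Z_i}$, yielding the claimed factor $\prod_i (1 - e^{-\alpha' W\cdot Z_i})^{-1}$. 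When the $Z_i$ are a basis of $L\cap\Aff(C)$ one has $\Box\cap L = \{0\}$ and this recovers the standard-cone formula and equation~\eqref{key}; the general case is exactly that formula multiplied by the finite coset sum $\sum_{Z_0} e^{-\alpha' W\cdot Z_0}$.

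The only real obstacle is bookkeeping around the decomposition $Z = Z_0 + \sum_i n_i Z_i$: one must check that it is well defined \emph{and} unique, which is precisely where linear independence of the generators is essential (without it the box degenerates and the stated identity fails). Once that bijection is in place, everything else is a routine positivity-plus-geometric-series manipulation. An essentially equivalent alternative would be to push forward the standard lattice cone $(\mb R^+)^k \subset \mb R^k$ under the linear map $e_i \mapsto Z_i$ and keep track of the finite cokernel $(L\cap\Aff(C))/\mb Z[Z_1,\dots,Z_k]$; I would use that formulation only if one wants the index $\#(\Box\cap L)$ to appear manifestly.
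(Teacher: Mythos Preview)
Your proof is correct and is precisely the standard argument; the paper does not give a detailed proof here but only remarks that the computation in example~\ref{exquad2} ``is easy to generalise'' and that the result is the lattice analogue of theorem~\ref{intcal1}. Your fundamental-parallelepiped decomposition (box part plus nonnegative integer combination of the $Z_i$) is exactly what that generalisation amounts to, and your closing remark about pushing forward the standard cone under $e_i\mapsto Z_i$ and tracking the cokernel is the same idea phrased in the way the paper hints at for the continuous case.
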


Just as $I_P$ can be used to compute polytope volumes, $S_P$ can be used to count lattice points on the interior of a polytope. Consider the Laurent expansion of $S_P(W,\alpha')$ around $\alpha' = 0$. Then the zero-order term $S_P^{(0)}(W)$ in the Laurent series is ostensibly the number of interior lattice points,
$$
S_P^{(0)}(W) = \# (P\cap L).
$$
This is strictly true when $P$ is a polytope. When $P$ is not bounded, the right-hand-side is no longer defined, but the left hand side may still be defined.\\

\begin{ex}
Let's count the number of lattice points in the triangle $T$, introduced in example \ref{extri}. The decomposition into cones gives a sum of terms of the form
$$
S_{e_1+ \Dir_T(e_1)}(W,\alpha') = \frac{e^{-\alpha' w_1}}{(1-e^{-\alpha'(w_2-w_1)})(1-e^{-\alpha'(w_3-w_1)})},
$$
where $w_i = W\cdot e_i$. Since $e_1+e_2+e_3 = 0$, we likewise have $w_1+w_2+w_3 = 0$. Summing these gives
$$
S_T^{(0)}(W) = 1.
$$
Indeed, $T$ contains one point of the lattice $L = \mb{Z}[e_1,e_2,e_3]$ in its interior.\\
\end{ex}

\begin{ex}
\label{exstrip5}
Let's return to the infinite strip $P$ as defined in example \ref{exstrip}. We have
\begin{align*}
S_P(W,\alpha') & = S_{\Dir_P(Z_1)} + e^{-\alpha' W\cdot Z_2} S_{\Dir_P(Z_2)} \\ & = \frac{1}{(1-e^{-\alpha' W_1})(1-e^{-\alpha' W_2})} +  \frac{e^{-\alpha' W_2}}{(1-e^{\alpha' W_1})(1-e^{-\alpha' W_2})}.
\end{align*}
Using the Laurent series
$$
\frac{1}{1-e^{\tau}} = - \frac{1}{\tau} + \frac{1}{2} + ...,
$$
we find that the zero-order term in the Laurent series is
$$
S_P^{(0)}(w) = 0.
$$
\end{ex}

\subsection{Generalised Euler-Maclaurin formulas}
\label{em}
How are the valuations $S_P$ and $I_P$ related to each other? In one dimension, this question is answered by the Euler-Maclaurin formula. Recall the result that, up to a remainder term,
$$
\sum_{n=a}^{b} f(n) = \int\limits_a^b \d x f(x) - \sum_{n=1}\frac{b_n}{n!}f^{(n-1)}(a) + \sum_{n=1}(-1)^n \frac{b_n}{n!} f^{(n-1)}(b).
$$
In our case, we replace $f(x)$ with the exponential $\exp(-\alpha'wz)$. We can present the sums explicitly by making use of the generating function
$$
\sum_{m=0} \frac{b_n}{n!}t^n = \frac{t}{1-e^{-t}}.
$$
Then
\begin{equation}
\label{pleasant}
\sum_{n=a}^{b} e^{-\alpha'wz} - \int\limits_a^b \d x e^{-\alpha'wz} = - \left(\frac{1}{e^{\alpha'w}-1} - \frac{1}{\alpha' w}\right)e^{\alpha' w a} - \left(\frac{1}{e^{-\alpha'w}-1} - \frac{1}{-\alpha'w}\right)e^{\alpha'w b}.
\end{equation}
This curious looking formula has a generalisation to polytopes in higher dimensions. The generalisation is given by Brion, Vergne, and Berline, and their result shows that the valuation $S_C$, for $C$ a cone, can be expressed in terms of integrals $I_F$, where the $F$ are subcones of $C$. We present their result as the following theorem. For brevity, we temporarily omit $\alpha'$ from our formulas, which is equivalent to setting $\alpha' = 1$.\\
\begin{thm}
There exists some measure $\mu$ on rational polyhedra such that
\begin{equation}
\label{bv}
S_P (W) = \sum_Q \mu_{P/Q}(W) I(P\cap Q)(W),
\end{equation}
where the sum is over all rational subspaces $Q$ which are tangent to one of the faces (of any dimension) of $P$.\\
\end{thm}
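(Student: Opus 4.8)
The plan is to reduce the statement to a local computation on cones and then invoke the one-dimensional Euler--Maclaurin formula, equation~\eqref{pleasant}, as the base case. The first step is to use that both $S_P$ and $I_P$ are valuations together with the conic decomposition of Theorem~\ref{conicdecomposition}: the indicator function of $P$ equals $\sum_{Z\in\Ver(P)} 1_{\Tan_P(Z)}$ modulo indicators of polyhedra containing a fully extended line, on which both $S$ and $I$ vanish by our conventions. Hence it suffices to prove~\eqref{bv} with $P$ replaced by a single tangent cone $\Tan_P(Z) = Z + \Dir_P(Z)$, and then, using the translation properties $S_{Z+C}(W) = e^{-\alpha' W\cdot Z} S_C(W)$ and $I_{Z+C}(W) = e^{-\alpha' W\cdot Z} I_C(W)$ (legitimate since $Z\in L$ and the faces of $P$ are $L$-rational), to prove it for a rational pointed cone $C$ based at the origin, with $Q$ now ranging over the linear spans of the faces of $C$.

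The second step is to construct the measure $\mu$. Following Brion--Vergne and Berline--Vergne, I would fix an auxiliary scalar product on $V$ and use it to assign a canonical orthogonal complement to each rational subspace; this allows $\mu_{C/Q}$ to be defined recursively so that it is \emph{local}, i.e. depends only on the transverse cone of $C$ along the face with linear span $Q$. The property I would impose is exactly that the right-hand side of~\eqref{bv} reproduces $S_C(W)$; expanding this along a flag of faces converts it into a recursion expressing $\mu_{C/Q}$ in terms of the $\mu$'s of transverse cones of strictly smaller dimension, with the zero-dimensional transverse cone $\{0\}$ contributing $\mu = 1$ and the one-dimensional transverse cone $\mb{R}_{\geq 0}$ contributing precisely the Bernoulli-type correction terms visible in equation~\eqref{pleasant}.

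The third step is to verify that the candidate identity holds, by induction on $\dim C$. I would triangulate $C$ into simplicial cones as in the discussion around Theorem~\ref{intcal2}; since $S$, $I$, and (by construction) $\mu$ are all valuations, the identity for $C$ follows from the identity for each simplicial piece together with inclusion--exclusion along their shared faces, the locality of $\mu$ being what guarantees that the boundary contributions cancel. For a simplicial cone a further unimodular subdivision reduces to the case where $C$ is generated by part of a lattice basis, so that $S_C$ and $I_C$ factor as products over the one-dimensional rays; the identity then decouples into a product of copies of the one-dimensional formula~\eqref{pleasant}, which is the base case. (One can alternatively obtain a non-local version of~\eqref{bv} quickly via a Todd-type differential operator relating $S$ to $I$ on dilated polytopes, but extracting the local form stated here requires the scalar-product refinement above.)

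The main obstacle will be the second step: arranging $\mu$ to be \emph{local} and \emph{consistent under subdivision} at the same time. A naive face-by-face correction coming from an arbitrary triangulation depends on the triangulation and does not glue, and the content of the Berline--Vergne construction is precisely that the scalar-product-dependent recursion repairs this defect. Checking that the resulting $\mu$ is a valuation, is independent of the auxiliary triangulations, and yields~\eqref{bv} identically is the technical heart of the argument; the rest is bookkeeping on top of Theorems~\ref{conicdecomposition} and~\ref{intcal2} and the classical one-dimensional formula. One should also keep track of the fact that $I_C(W)$ is a rational function of $W$ and $S_C(W)$ a rational function of the exponentials $e^{-\alpha' W\cdot Z_i}$, each defined on the cone of convergence $C^*$ and extended by analytic continuation, so that all the manipulations above are identities of such functions rather than of convergent sums or integrals.
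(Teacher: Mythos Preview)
Your proposal is correct in spirit and in fact considerably more detailed than what the paper does. The paper does \emph{not} prove this theorem: it simply attributes the result to Brion--Vergne and Berline--Vergne (references \cite{bv97a,bv97b,bv05} in the paper's bibliography) and then, immediately after the theorem statement, illustrates the recursive construction of $\mu$ by writing out the first two steps, namely $\mu_{\{0\}}(W)\equiv 1$, then solving \eqref{bv} for a one-dimensional cone to recover the Bernoulli-type correction $\mu_{\Cone(Z_1)}(W)=\frac{1}{1-e^{-W\cdot Z_1}}+\frac{1}{W\cdot Z_1}$, and then stating the formula for a two-dimensional cone. There is no verification that the recursion closes, no discussion of locality or valuation properties of $\mu$, and no inductive argument.

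Your outline matches the genuine Berline--Vergne proof strategy and correctly identifies the nontrivial point (your ``second step''): arranging $\mu$ to be simultaneously local and compatible with subdivision via the auxiliary scalar product. That is indeed the content of the cited papers, and it is precisely what the present paper omits. So you have not missed anything the paper does; you have sketched the proof the paper declines to give.
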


Notice that a quotient vector space $P/Q$ inherits a lattice $\hat L$ from the lattice $L\subset V$. This is given by the projection of $L$.\footnote{The lattice $\hat L$ is not to be identified with $W^{\perp}\cap L$, which is usually a strict sublattice of $\hat L$.} It is in this sense that $P/Q$ is a rational polyhedron. Berline-Vergne define $\mu_C(W)$ on cones $C$ inductively, beginning with
$$
\mu_{\{0\}}(W) \equiv 1.
$$
Putting $C = \Cone (Z_1)$ we impose equation \eqref{bv} to find
$$
S_C(W) = I_C(W) + \mu_C(W).
$$
We thus infer that
$$
\mu_C(W) = \frac{1}{1-e^{-W\cdot Z_1}} + \frac{1}{W\cdot Z_1}.
$$
Indeed, this is precisely what we already discovered from the Euler-Maclaurin formula|see equation \eqref{pleasant}. Putting $C = \Cone(Z_1,Z_2)$ in \eqref{bv} we find
$$
\mu_C(W) = S_C(W) + \mu_{\Cone(\hat Z_1)}(W) \frac{1}{W\cdot Z_1} + \mu_{\Cone(\hat Z_2)}(W) \frac{1}{W\cdot Z_2} - \frac{1}{W\cdot Z_1W\cdot Z_2},
$$
where
$$
\hat Z_1 = Z_1 - \frac{Z_1\cdot Z_2}{Z_2\cdot Z_2} Z_2\qquad\text{and}\qquad \hat Z_2 = Z_2 - \frac{Z_1\cdot Z_2}{Z_1\cdot Z_1} Z_1.
$$
Clearly, this construction could be continued and be used to define $\mu_C(W)$ for all cones $C$.

\paragraph{Relevance for amplitudes.}
In the context of our results in sections \ref{statement} and \ref{statement'}, these generalised Euler-Maclaurin formulas imply that the diagonal entries $m_{\alpha'}(\alpha|\alpha)$ of the inverse KLT kernel have an exact expansion in terms of the amplitudes $m(\alpha|\beta)$ at finite values of $\alpha'$. The coefficients in this expansion, given by $\mu$, are complicated expressions, but they can be determined algorithmically and satisfy nice properties ($\mu$ is a valuation on polytopes). It is not clear whether this expansion is of any interest physically.



\bibliography{bbl}

\providecommand{\href}[2]{#2}\begingroup\raggedright\begin{thebibliography}{10}

\bibitem{ahbhy17}
N.~Arkani-Hamed, Y.~Bai, S.~He, and G.~Yan, ``Scattering {Forms} and the
  {Positive} {Geometry} of {Kinematics}, {Color} and the {Worldsheet},'' {\em
  arXiv:1711.09102 [hep-th]} (Nov., 2017) .
  \url{http://arxiv.org/abs/1711.09102}. arXiv: 1711.09102.

\bibitem{chy14}
F.~Cachazo, S.~He, and E.~Y. Yuan, ``Scattering of {Massless} {Particles}:
  {Scalars}, {Gluons} and {Gravitons},''
  \href{http://dx.doi.org/10.1007/JHEP07(2014)033}{{\em Journal of High Energy
  Physics} {\bfseries 2014} no.~7, (July, 2014) }.
  \url{http://arxiv.org/abs/1309.0885}. arXiv: 1309.0885.

\bibitem{dg14}
L.~Dolan and P.~Goddard, ``Proof of the {Formula} of {Cachazo}, {He} and {Yuan}
  for {Yang}-{Mills} {Tree} {Amplitudes} in {Arbitrary} {Dimension},''
  \href{http://dx.doi.org/10.1007/JHEP05(2014)010}{{\em Journal of High Energy
  Physics} {\bfseries 2014} no.~5, (May, 2014) }.
  \url{http://arxiv.org/abs/1311.5200}. arXiv: 1311.5200.

\bibitem{mizera1711}
S.~Mizera, ``Scattering {Amplitudes} from {Intersection} {Theory},'' {\em
  arXiv:1711.00469 [hep-th, physics:math-ph]} (Nov., 2017) .
  \url{http://arxiv.org/abs/1711.00469}. arXiv: 1711.00469.

\bibitem{matsumoto98}
K.~Matsumoto, ``Intersection numbers for logarithmic \$k\$-forms,'' {\em Osaka
  Journal of Mathematics} {\bfseries 35} no.~4, (1998) 873--893.
  \url{https://projecteuclid.org/euclid.ojm/1200788347}.

\bibitem{csz15}
C.~Ceballos, F.~Santos, and G.~M. Ziegler, ``Many non-equivalent realizations
  of the associahedron,''
  \href{http://dx.doi.org/10.1007/s00493-014-2959-9}{{\em Combinatorica}
  {\bfseries 35} no.~5, (Oct., 2015) 513--551}.
  \url{https://link.springer.com/article/10.1007/s00493-014-2959-9}.

\bibitem{ahbl17}
N.~Arkani-Hamed, Y.~Bai, and T.~Lam, ``Positive {Geometries} and {Canonical}
  {Forms},'' \href{http://dx.doi.org/10.1007/JHEP11(2017)039}{{\em Journal of
  High Energy Physics} {\bfseries 2017} no.~11, (Nov., 2017) }.
  \url{http://arxiv.org/abs/1703.04541}. arXiv: 1703.04541.

\bibitem{bcj08}
Z.~Bern, J.~J.~M. Carrasco, and H.~Johansson, ``New {Relations} for
  {Gauge}-{Theory} {Amplitudes},''
  \href{http://dx.doi.org/10.1103/PhysRevD.78.085011}{{\em Physical Review D}
  {\bfseries 78} no.~8, (Oct., 2008) }. \url{http://arxiv.org/abs/0805.3993}.
  arXiv: 0805.3993.

\bibitem{st14}
S.~Stieberger and T.~R. Taylor, ``Closed {String} {Amplitudes} as
  {Single}-{Valued} {Open} {String} {Amplitudes},''
  \href{http://dx.doi.org/10.1016/j.nuclphysb.2014.02.005}{{\em Nuclear Physics
  B} {\bfseries 881} (Apr., 2014) 269--287}.
  \url{http://arxiv.org/abs/1401.1218}. arXiv: 1401.1218.

\bibitem{ms17}
C.~R. Mafra and O.~Schlotterer, ``Non-abelian \${Z}\$-theory: {Berends}-{Giele}
  recursion for the \${\textbackslash}alpha'\$-expansion of disk integrals,''
  \href{http://dx.doi.org/10.1007/JHEP01(2017)031}{{\em Journal of High Energy
  Physics} {\bfseries 2017} no.~1, (Jan., 2017) }.
  \url{http://arxiv.org/abs/1609.07078}. arXiv: 1609.07078.

\bibitem{klt}
H.~Kawai, D.~C. Lewellen, and S.~H.~H. Tye, ``A relation between tree
  amplitudes of closed and open strings,''
  \href{http://dx.doi.org/10.1016/0550-3213(86)90362-7}{{\em Nuclear Physics B}
  {\bfseries 269} no.~1, (May, 1986) 1--23}.
  \url{http://www.sciencedirect.com/science/article/pii/0550321386903627}.

\bibitem{mizera1706}
S.~Mizera, ``Inverse of the string theory {KLT} kernel,''
  \href{http://dx.doi.org/10.1007/JHEP06(2017)084}{{\em Journal of High Energy
  Physics} {\bfseries 2017} no.~6, (June, 2017) 84}.
  \url{https://link.springer.com/article/10.1007/JHEP06(2017)084}.

\bibitem{devadoss}
S.~L. Devadoss, ``Tessellations of {Moduli} {Spaces} and the {Mosaic}
  {Operad},'' {\em arXiv:math/9807010} (July, 1998) .
  \url{http://arxiv.org/abs/math/9807010}. arXiv: math/9807010.

\bibitem{mizera1708}
S.~Mizera, ``Combinatorics and {Topology} of {Kawai}-{Lewellen}-{Tye}
  {Relations},'' \href{http://dx.doi.org/10.1007/JHEP08(2017)097}{{\em Journal
  of High Energy Physics} {\bfseries 2017} no.~8, (Aug., 2017) }.
  \url{http://arxiv.org/abs/1706.08527}. arXiv: 1706.08527.

\bibitem{ky94}
M.~Kita and M.~Yoshida, ``Intersection {Theory} for {Twisted} {Cycles} {II} -
  {Degenerate} {Arrangements},''
  \href{http://dx.doi.org/10.1002/mana.19941680111}{{\em Mathematische
  Nachrichten} {\bfseries 168} no.~1, (Jan., 1994) 171--190}.
  \url{http://onlinelibrary.wiley.com/doi/10.1002/mana.19941680111/abstract}.

\bibitem{kapranov93}
M.~M. Kapranov, ``The permutoassociahedron, {Mac} {Lane}'s coherence theorem
  and asymptotic zones for the {KZ} equation,''.

\bibitem{zr94}
V.~Reiner and G.~M. Ziegler, ``Coxeter-associahedra,''
  \href{http://dx.doi.org/10.1112/S0025579300007452}{{\em Mathematika}
  {\bfseries 41} no.~2, (Dec., 1994) 364--393}.
  \url{https://www.cambridge.org/core/journals/mathematika/article/coxeterassociahedra/B16F7B95B252E50E38BD30C1A96690F7}.

\bibitem{postnikov05}
A.~Postnikov, ``Permutohedra, associahedra, and beyond,'' {\em
  arXiv:math/0507163} (July, 2005) . \url{http://arxiv.org/abs/math/0507163}.
  arXiv: math/0507163.

\bibitem{lmh}
A.~Postnikov, V.~Reiner, and L.~Williams, ``Faces of {Generalized}
  {Permutohedra},'' {\em arXiv:math/0609184} (Sept., 2006) .
  \url{http://arxiv.org/abs/math/0609184}. arXiv: math/0609184.

\bibitem{early}
N.~Early, ``Canonical {Bases} for {Permutohedral} {Plates},'' {\em
  arXiv:1712.08520 [hep-th]} (Dec., 2017) .
  \url{http://arxiv.org/abs/1712.08520}. arXiv: 1712.08520.

\bibitem{atiyah}
M.~F. Atiyah, ``Convexity and {Commuting} {Hamiltonians},''
  \href{http://dx.doi.org/10.1112/blms/14.1.1}{{\em Bulletin of the London
  Mathematical Society} {\bfseries 14} no.~1, (Jan., 1982) 1--15}.
  \url{http://onlinelibrary.wiley.com/doi/10.1112/blms/14.1.1/abstract}.

\bibitem{witten}
E.~Witten, ``A {New} {Look} {At} {The} {Path} {Integral} {Of} {Quantum}
  {Mechanics},'' {\em arXiv:1009.6032 [hep-th]} (Sept., 2010) .
  \url{http://arxiv.org/abs/1009.6032}. arXiv: 1009.6032.

\bibitem{fln}
E.~Frenkel, A.~Losev, and N.~Nekrasov, ``Instantons beyond topological theory
  {I},''. \url{https://arxiv.org/abs/hep-th/0610149}.

\bibitem{oda88}
T.~Oda, {\em Convex {Bodies} and {Algebraic} {Geometry}: {An} {Introduction} to
  the {Theory} of {Toric} {Varieties}}.
\newblock Springer-Verlag, Place of publication not identified, softcover
  reprint of the original 1st ed. 1988 edition~ed., Jan., 1988.

\bibitem{bp99}
A.~Barvinok and J.~E. Pommersheim, ``An {Algorithmic} {Theory} of {Lattice}
  {Points} in {Polyhedra},''.
  \url{http://citeseerx.ist.psu.edu/viewdoc/summary?doi=10.1.1.170.9496}.

\bibitem{brion88}
M.~Brion, ``Points entiers dans les polyèdres convexes,'' {\em Annales
  scientifiques de l'École Normale Supérieure} {\bfseries 21} no.~4, (1988)
  653--663. \url{https://eudml.org/doc/82241}.

\bibitem{bv05}
N.~Berline and M.~Vergne, ``Local {Euler}-{Maclaurin} formula for polytopes,''
  {\em arXiv:math/0507256} (July, 2005) .
  \url{http://arxiv.org/abs/math/0507256}. arXiv: math/0507256.

\bibitem{bv97a}
M.~Brion and M.~Vergne, ``Residue formulae, vector partition functions and
  lattice points in rational polytopes,''
  \href{http://dx.doi.org/10.1090/S0894-0347-97-00242-7}{{\em Journal of the
  American Mathematical Society} {\bfseries 10} no.~4, (1997) 797--833}.
  \url{http://www.ams.org/jams/1997-10-04/S0894-0347-97-00242-7/}.

\bibitem{bv97b}
M.~Brion and M.~Vergne, ``Lattice {Points} in {Simple} {Polytopes},'' {\em
  Journal of the American Mathematical Society} {\bfseries 10} no.~2, (1997)
  371--392. \url{http://www.jstor.org/stable/2152855}.

\bibitem{barvinok08}
A.~Barvinok, {\em Integer {Points} in {Polyhedra}}.
\newblock Amer Mathematical Society, Zürich, Aug., 2008.

\bibitem{fulton93}
W.~Fulton, {\em Introduction to {Toric} {Varieties}.}
\newblock Princeton University Press, Princeton, N.J, July, 1993.

\bibitem{dh82}
J.~J. Duistermaat and G.~J. Heckman, ``On the {Variation} in the cohomology of
  the symplectic form of the reduced phase space,''
  \href{http://dx.doi.org/10.1007/BF01399506}{{\em Invent.Math.} {\bfseries 69}
  (1982) 259--268}.

\end{thebibliography}\endgroup
\bibliographystyle{utphys}

\end{document}